\documentclass[12pt]{article}

\usepackage[margin=1in]{geometry}
\usepackage{amsthm}
\usepackage{amssymb}
\usepackage{amsmath}
\usepackage{graphicx}
\usepackage{mathdots}
\usepackage{mathtools}
\usepackage{url}
\usepackage{color}
\usepackage{framed}
\usepackage{tikz}
\usepackage{stmaryrd}
\usepackage[ruled,vlined]{algorithm2e}
\usepackage{array}

\newcommand{\vsubseteq}{\rotatebox[origin=c]{90}{$\subseteq$}}

\newcommand{\llangle}{\langle\hspace{-2.5pt}\langle}
\newcommand{\rrangle}{\rangle\hspace{-2.5pt}\rangle}

\tikzset{every node/.style={shape=circle,fill=black,inner sep=0pt, minimum size=0.5em,anchor=mid}}

\newtheorem{theorem}{Theorem}%[section]
\newtheorem{corollary}[theorem]{Corollary}
\newtheorem{lemma}[theorem]{Lemma}

\theoremstyle{definition}
\newtheorem{problem}[theorem]{Problem}
\newtheorem{example}[theorem]{Example}
\newtheorem{definition}[theorem]{Definition}

\title{Group-invariant max filtering}

\author{
Jameson~Cahill\footnote{Department of Mathematics and Statistics, University of North Carolina Wilmington, Wilmington, NC} 
\quad
Joseph~W.~Iverson\footnote{Department of Mathematics, Iowa State University, Ames, IA} 
\quad
Dustin~G.~Mixon\footnote{Department of Mathematics, The Ohio State University, Columbus, OH} \footnote{Translational Data Analytics Institute, The Ohio State University, Columbus, OH}
\quad
Daniel~Packer\footnotemark[3]
}

\date{}

\begin{document}
\maketitle

\begin{abstract}
Given a real inner product space $V$ and a group $G$ of linear isometries, we construct a family of $G$-invariant real-valued functions on $V$ that we call \textit{max filters}.
In the case where $V=\mathbb{R}^d$ and $G$ is finite, a suitable max filter bank separates orbits, and is even bilipschitz in the quotient metric.
In the case where $V=L^2(\mathbb{R}^d)$ and $G$ is the group of translation operators, a max filter exhibits stability to diffeomorphic distortion like that of the scattering transform introduced by Mallat.
We establish that max filters are well suited for various classification tasks, both in theory and in practice.
\end{abstract}

\section{Introduction}

Modern machine learning has been extraordinarily successful in domains where large volumes of labeled data are available~\cite{KrizhevskySH:12,Silver:16}.
Indeed, highly expressive models can generalize once they fit an appropriately large training set.
Unfortunately, many important domains are plagued by a scarcity of data or by expensive labels (or both).
One way to bridge this gap is by augmenting the given dataset with the help of a large family of innocuous distortions.
In many cases, the distortions correspond to the action of a group, meaning the ground truth exhibits known symmetries.
Augmenting the training set by applying the group action encourages the model to learn these symmetries.
While this approach has been successful~\cite{SimardSP:03,CiresanMGS:10,KrizhevskySH:12,ChenDL:20}, it is extremely inefficient to train a large, symmetry-agnostic model to find a highly symmetric function.
One wonders:
\begin{center}
\textit{Why not use a model that already accounts for known symmetries?}
\end{center}

This motivates \textit{invariant machine learning}~(e.g., \cite{Wood:96,ZhangL:04,GuEtal:18,VillarHSYB:21,BalanHS:22}), where the model is invariant to underlying symmetries in the data.
To illustrate, suppose an object is represented by a point $x$ in a set $V$, but there is a group $G$ acting on $V$ such that the same object is also represented by $gx\in V$ for every $g\in G$.
This ambiguity emerges, for example, when using a matrix to represent a point cloud or a graph, since the representation depends on the labeling of the points or vertices.
If we apply a $G$-invariant feature map $\Phi\colon V\to F$, then the learning task can be performed in the feature domain $F$ without having to worry about symmetries in the problem.
Furthermore, if $\Phi$ separates the $G$-orbits in $V$, then no information is lost by passing to the feature domain.
In practice, $V$ and $F$ tend to be vector spaces out of convenience, and $G$ is frequently a linear group.

While our interest in invariants stems from modern machine learning, maps like $\Phi$ have been studied since Cayley established \textit{invariant theory} in the nineteenth century~\cite{Cayley:45}.
Here, we take $V=\mathbb{C}^d$ and $G\leq\operatorname{GL}(V)$, and the maps of interest consist of the $G$-invariant polynomials $\mathbb{C}[V]^G$.
In 1890, Hilbert~\cite{Hilbert:90} proved that $\mathbb{C}[V]^G$ is finitely generated as a $\mathbb{C}$-algebra in the special case where $G$ is the image of a representation of $\operatorname{SL}(\mathbb{C}^k)$, meaning one may take the feature domain $F$ to be finite dimensional (one dimension for each generator).
Since $G$ is not a compact subset of $\mathbb{C}^{d\times d}$ in such cases, there may exist distinct $G$-orbits whose closures intersect, meaning no continuous $G$-invariant function can separate them; this subtlety plays an important role in Mumford's more general \textit{geometric invariant theory}~\cite{MumfordFK:94}.
In general, the generating set of $\mathbb{C}[V]^G$ is often extraordinarily large~\cite{Defresne:08}, making it impractical for machine learning applications.
To alleviate this issue, there has been some work to construct \textit{separating sets} of polynomials~\cite{DerksenK:15,Domokos:17,Domokos:arxiv}, i.e., sets that separate as well as $\mathbb{C}[V]^G$ does without necessarily generating all of $\mathbb{C}[V]^G$.
For every reductive group $G$, there exists a separating set of $2d+1$ invariant polynomials~\cite{Defresne:08,CahillCC:20}, but the complexity of evaluating these polynomials is still quite large.
Furthermore, these polynomials tend to have high degree, and so they are numerically unstable in practice.
In practice, one also desires a quantitative notion of separating so that distant orbits are not sent to nearby points in the feature space, and this behavior is not always afforded by a separating set of polynomials~\cite{CahillCC:20}.
Despite these shortcomings, polynomial invariants are popular in the data science literature due in part to their rich algebraic theory, e.g.,~\cite{BandeiraBKPWW:17,PerryWBRS:19,CahillCC:20,BendoryELS:22,BalanHS:22}.

In this paper, we focus on the case where $V$ is a real inner product space and $G$ is a group of linear isometries of $V$.
We introduce a family of non-polynomial invariants that we call \textit{max filters}.
In Section~\ref{sec.preliminaries}, we define max filters, we identify some basic properties, and we highlight a few familiar examples.
In Section~\ref{sec.separating}, we use ideas from~\cite{DymG:22} to establish that $2d$ generic max filters separate all $G$-orbits when $G$ is finite (see Corollary~\ref{cor.2d templates suffice for finite groups}), and then we describe various settings in which max filtering is computationally efficient.
In Section~\ref{sec.lipschitz}, we show that when $G$ is finite, a sufficiently large random max filter bank is bilipschitz with high probability; see Theorem~\ref{thm.main result}.
This is the first known construction of invariant maps for a general class of groups that enjoy a lower Lipschitz bound, meaning they separate orbits in a quantitative sense.
In the same section, we later show that when $V=L^2(\mathbb{R}^d)$ and $G$ is the group of translations, certain max filters exhibit stability to diffeomorphic distortion akin to what Mallat established for his scattering transform in~\cite{Mallat:12}; see Theorem~\ref{thm.mallat bound}.
In Section~\ref{sec.classification}, we explain how to select max filters for classification in a couple of different settings, we determine the subgradient of max filters to enable training, and we characterize how random max filters behave for the symmetric group.
In Section~\ref{sec.numerics}, we use max filtering to process real-world datasets.
Specifically, we visualize the shape space of voting districts, we use electrocardiogram data to classify whether patients had a heart attack, and we classify a multitude of textures.
Surprisingly, we find that even in cases where the data do not appear to exhibit symmetries in a group $G$, max filtering with respect to $G$ can still reveal salient features.
We conclude in Section~\ref{sec.discussion} with a discussion of opportunities for follow-on work.

\section{Preliminaries}
\label{sec.preliminaries}

Given a real inner product space $V$ and a group $G$ of linear isometries of $V$, consider the quotient space $V/G$ consisting of the $G$-orbits $[x]:=G\cdot x$ for $x\in V$.
This quotient space is equipped with the metric
\[
d([x],[y])
:=\inf_{p\in[x],q\in[y]} \|p-q\|.
\]
(Indeed, $d$ satisfies the triangle inequality since $G$ is a group of isometries of $V$.)
This paper is concerned with the following function.

\begin{definition}
The \textbf{max filtering map} $\llangle \cdot,\cdot\rrangle\colon V/G\times V/G\to \mathbb{R}$ is defined by
\[
\llangle [x],[y]\rrangle
:=\sup_{p\in[x],q\in[y]}\langle p,q\rangle.
\]
Sometimes, ``max filtering map'' refers to the related function $\llangle [\cdot],[\cdot]\rrangle\colon V\times V\to \mathbb{R}$ instead. The intended domain should be clear from context.
\end{definition}

We note that since $G$ consists of linear isometries, it is closed under adjoints, and so the max filtering map can be alternatively expressed as
\[
\llangle [x],[y]\rrangle
=\sup_{g\in G}\langle gx,y\rangle
=\sup_{g\in G}\langle x,gy\rangle.
\]
Furthermore, if $G$ is topologically closed (e.g., finite), then the supremum can be replaced with a maximum:
\[
\llangle [x],[y]\rrangle
=\max_{g\in G}\langle gx,y\rangle
=\max_{g\in G}\langle x,gy\rangle.
\]
The max filtering map satisfies several important properties, summarized below.

\begin{lemma}
\label{lem.max filter product properties}
Suppose $x,y,z\in V$.
Then each of the following holds:
\begin{itemize}
\item[(a)]
$\llangle [x],[x]\rrangle=\|x\|^2$.
\item[(b)]
$\llangle [x],[y]\rrangle=\llangle [y],[x]\rrangle$.
\item[(c)]
$\llangle [x],[ry]\rrangle=r\llangle [x],[y]\rrangle$ for every $r\geq0$.
\item[(d)]
$\llangle [x],[\cdot]\rrangle\colon V\to\mathbb{R}$ is convex.
\item[(e)]
$\llangle [x],[y+z]\rrangle\leq\llangle [x],[y]\rrangle+\llangle [x],[z]\rrangle$.
\item[(f)]
$d([x],[y])^2=\|x\|^2-2\llangle [x],[y]\rrangle+\|y\|^2$.
\item[(g)]
$\llangle [x],\cdot\rrangle\colon V/G\to\mathbb{R}$ is $\|x\|$-Lipschitz.
\end{itemize}
\end{lemma}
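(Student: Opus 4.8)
The plan is to phrase everything in terms of the identity $\llangle[x],[y]\rrangle=\sup_{g\in G}\langle gx,y\rangle=\sup_{g\in G}\langle x,gy\rangle$ noted above, using repeatedly that each $g\in G$ is norm-preserving; parts (a)--(e) then reduce to one-line manipulations. For (a), Cauchy--Schwarz gives $\langle gx,x\rangle\le\|gx\|\,\|x\|=\|x\|^2$ for every $g\in G$, and equality holds at $g=e$, so the supremum equals $\|x\|^2$. Part (b) is the symmetry of the inner product, $\langle gx,y\rangle=\langle y,gx\rangle$ (or simply the manifestly symmetric form $\sup_{p\in[x],q\in[y]}\langle p,q\rangle$). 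For (c), linearity gives $\langle x,g(ry)\rangle=r\langle x,gy\rangle$ for each $g$, and for $r\ge0$ the factor $r$ pulls out of the supremum. Part (d) holds because $y\mapsto\sup_{g\in G}\langle gx,y\rangle$ is a pointwise supremum of affine (indeed linear) functions and is therefore convex. For (e), from $\langle gx,y+z\rangle=\langle gx,y\rangle+\langle gx,z\rangle\le\llangle[x],[y]\rrangle+\llangle[x],[z]\rrangle$ for each $g$, taking the supremum over $g\in G$ on the left gives the claim; alternatively, (e) is a formal consequence of (c) and (d), since a positively homogeneous convex function is subadditive.

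For (f), fix $p\in[x]$ and $q\in[y]$ and expand $\|p-q\|^2=\|p\|^2-2\langle p,q\rangle+\|q\|^2$. Since $G$ acts by isometries, $\|p\|=\|x\|$ and $\|q\|=\|y\|$ are constant along the two orbits, so minimizing $\|p-q\|^2$ over $p\in[x]$, $q\in[y]$ is the same as maximizing $\langle p,q\rangle$; taking the infimum and using $d([x],[y])^2=\inf\|p-q\|^2$ together with the definition of $\llangle\cdot,\cdot\rrangle$ gives the stated identity.

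Part (g) is the one that needs a little care, since $d$ is an infimum over pairs of representatives. The key point is that $\llangle[x],\cdot\rrangle$ is a well-defined function on $V/G$, so one may choose representatives freely. Given orbits $[y],[z]$ and $\varepsilon>0$, choose $y\in[y]$, $z\in[z]$ with $\|y-z\|\le d([y],[z])+\varepsilon$ and $g\in G$ with $\langle gx,y\rangle\ge\llangle[x],[y]\rrangle-\varepsilon$. Since $\llangle[x],[z]\rrangle\ge\langle gx,z\rangle$, Cauchy--Schwarz and $\|gx\|=\|x\|$ yield
\[
\llangle[x],[y]\rrangle-\llangle[x],[z]\rrangle\le\langle gx,y-z\rangle+\varepsilon\le\|x\|\,\|y-z\|+\varepsilon\le\|x\|\,d([y],[z])+(\|x\|+1)\varepsilon.
\]
Letting $\varepsilon\to0$ and then interchanging the roles of $[y]$ and $[z]$ gives $\bigl|\llangle[x],[y]\rrangle-\llangle[x],[z]\rrangle\bigr|\le\|x\|\,d([y],[z])$, i.e., $\|x\|$-Lipschitz continuity. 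I expect the only genuine obstacle to be this bookkeeping with representatives (and the non-attainment of the supremum when $G$ is not closed, handled by the $\varepsilon$); the remaining parts are routine once the ``supremum of linear functionals'' viewpoint is adopted.
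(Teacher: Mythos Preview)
Your proof is correct and follows essentially the same approach as the paper's: parts (a)--(f) are handled via the ``supremum of linear functionals'' viewpoint exactly as the authors do (they derive (e) from (c) and (d) as in your alternative remark, whereas your primary argument for (e) is the direct $\sup$-of-a-sum bound, but these are equivalent one-liners), and your proof of (g) is the same $\varepsilon$-argument with representatives, differing only cosmetically in that you select near-optimal representatives for $d([y],[z])$ up front while the paper takes arbitrary $p\in[y]$, $q\in[z]$ and infimizes at the end.
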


\begin{proof}
First, (a) and (b) are immediate, as is the $r=0$ case of (c).
For $r>0$, observe that $q\in[ry]$ precisely when $q':=r^{-1}q\in[y]$, since each member of $G$ is a linear isometry.
Thus,
\[
\llangle [x],[ry]\rrangle
=\sup_{p\in[x],q\in[ry]}\langle p,q\rangle
=\sup_{p\in[x],q'\in[y]}\langle p,rq'\rangle
=r\cdot\sup_{p\in[x],q'\in[y]}\langle p,q'\rangle
=r\llangle [x],[y]\rrangle.
\]
Next, (d) follows from the identity $\llangle [x],[y]\rrangle=\sup_{p\in[x]}\langle p,y\rangle$, which expresses $\llangle [x],[\cdot]\rrangle$ as a pointwise supremum of convex functions.
For (e), we apply (d) and (c):
\[
\llangle [x],[y+z]\rrangle
=\llangle [x],[\tfrac{1}{2}\cdot2y+\tfrac{1}{2}\cdot2z]\rrangle
\leq\tfrac{1}{2}\llangle [x],[2y]\rrangle+\tfrac{1}{2}\llangle [x],[2z]\rrangle
=\llangle [x],[y]\rrangle+\llangle [x],[z]\rrangle.
\]
Next, (f) is immediate.
For (g), select any $y,z\in V$.
We may assume $\llangle [x],[y]\rrangle\geq \llangle [x],[z]\rrangle$ without loss of generality.
Select any $p\in[y]$, $q\in[z]$, and $\epsilon>0$, and take $g\in G$ such that $\langle x,gp\rangle>\llangle [x],[y]\rrangle-\epsilon$.
Then
\begin{align*}
|\llangle [x],[y]\rrangle-\llangle [x],[z]\rrangle|
&=\llangle [x],[y]\rrangle-\llangle [x],[z]\rrangle\\
&< \langle x,gp\rangle+\epsilon-\langle x,gq\rangle
=\langle x,g(p-q)\rangle+\epsilon
\leq\|x\|\|p-q\|+\epsilon.
\end{align*}
Since $p$, $q$, and $\epsilon$ were arbitrary, the result follows.
\end{proof}

\begin{definition}
Given a \textbf{template} $z\in V$, we refer to $\llangle [z],\cdot\rrangle\colon V/G\to\mathbb{R}$ as the corresponding \textbf{max filter}.
Given a (possibly infinite) sequence $\{z_i\}_{i\in I}$ of templates in $V$, the corresponding \textbf{max filter bank} is $\Phi\colon V/G\to \mathbb{R}^I$ defined by $\Phi([x]):=\{\llangle [z_i],[x]\rrangle\}_{i\in I}$.
\end{definition}

In what follows, we identify a few familiar examples of max filters.

\begin{example}[Norms]
There are several norms that can be thought of as a max filter with some template.
For example, consider $V=\mathbb{R}^n$.
Then taking $G=\operatorname{O}(n)$ and any unit-norm template $z$ gives
\[
\llangle [z],[x]\rrangle
=\max_{g\in\operatorname{O}(n)}\langle gz,x\rangle
=\|x\|.
\]
Similarly, the infinity norm is obtained by taking $G$ to be the group of signed permutation matrices and $z$ to be a standard basis element, while the $1$-norm comes from taking $G$ to be the group of diagonal orthogonal matrices and $z$ to be the all-ones vector.
We can also recover various matrix norms when $V=\mathbb{R}^{m\times n}$.
For example, taking $G\cong\operatorname{O}(m)\times\operatorname{O}(n)$ to be the group of linear operators of the form $X\mapsto Q_1XQ_2^{-1}$ for $Q_1\in\operatorname{O}(m)$ and $Q_2\in\operatorname{O}(n)$, then max filtering with any rank-$1$ matrix of unit Frobenius norm gives the spectral norm.
\end{example}

\begin{example}[Power spectrum]
Consider the case where $V=L^2(\mathbb{R}/\mathbb{Z})$ and $G$ is the group of circular translation operators $T_a$ defined by $T_ag(t):=g(t-a)$ for $a\in\mathbb{R}$.
(Here and throughout, functions in $L^2$ will be real valued by default.)
Given a template $z_k$ of the form $z_k(t):=\cos(2\pi kt)$ for some $k\in\mathbb{N}$, it holds that
\begin{align*}
\llangle [z_k],[f]\rrangle
=\max_{a\in [0,1)}\langle T_az_k,f\rangle
&=\max_{a\in [0,1)}\int_0^1\cos(2\pi k(t-a))f(t)dt\\
&=\max_{a\in [0,1)}\operatorname{Re}\bigg(e^{2\pi ika}\int_0^1 f(t)e^{-2\pi ikt}dt\bigg)
=|\hat{f}(k)|.
\end{align*}
A similar choice of templates recovers the power spectrum over finite abelian groups.
\end{example}

\begin{example}[Unitary groups]
\label{ex.complex}
While we generally assume $V$ is a real inner produce space, our theory also applies in the complex setting.
For example, consider the case where $V=\mathbb{C}^n$ and $G\leq\operatorname{U}(n)$.
Then $V$ is a $2n$-dimensional real inner product space with
\[
\langle x,y\rangle
:=\operatorname{Re}(x^*y),
\]
where $x^*$ denotes the conjugate transpose of $x$.
As such, $\operatorname{U}(n)\leq\operatorname{O}(V)$ since $g\in\operatorname{U}(n)$ implies
\[
\langle gx,gy\rangle
=\operatorname{Re}((gx)^*(gy))
=\operatorname{Re}(x^*g^*gy)
=\operatorname{Re}(x^*y)
=\langle x,y\rangle.
\]
Thus, $G\leq\operatorname{O}(V)$.
\end{example}

\begin{example}[Phase retrieval]
\label{ex.phase retrieval}
Suppose $V=\mathbb{C}^r$ and $G=\{c\cdot\operatorname{id}:|c|=1\}\leq\operatorname{U}(r)$.
Then
\[
\llangle [z],[x]\rrangle
=\max_{|c|=1}\operatorname{Re}((cz)^*x)
=|z^*x|.
\]
The max filter bank corresponding to $\{z_i\}_{i=1}^n$ in $V$ is given by $\Phi([x])=\{|z_i^*x|\}_{i=1}^n$.
The inverse problem of recovering $[x]$ from $\Phi([x])$ is known as \textit{complex phase retrieval}~\cite{BalanCE:06,BandeiraCMN:14,ConcaEHV:15,Vinzant:15}, and over the last decade, several algorithms were developed to solve this inverse problem~\cite{CandesSV:13,DemanetH:14,CandesESV:15,WaldspurgerdAM:15,CandesLS:15,ChenC:17}.
In the related setting where $V=\mathbb{R}^d$ and $G=\{\pm\operatorname{id}\}\leq\operatorname{O}(d)$, the analogous inverse problem is known as \textit{real phase retrieval}~\cite{BalanCE:06}.
\end{example}

\begin{example}[Matched filtering]
In classical radar, the primary task is to locate a target.
Here, a transmitter emits a pulse $p\in L^2(\mathbb{R})$, which then bounces off the target and is received at the transmitter's location with a known direction of arrival.
The return signal $q$ is a noisy version of $T_ap$ for some $a>0$, where $T_a$ denotes the translation-by-$a$ operator defined by $T_af(t):=f(t-a)$.
Considering the transmitter-to-target distance is $a/2$ times the speed of light, the objective is to estimate $a$, which can be accomplished with \textit{matched filtering}: simply find $a$ for which $\langle T_ap,q\rangle$ is largest.
This is essentially a max filter with $V=L^2(\mathbb{R})$ and $G$ being the group of translation operators, though for this estimation problem, the object of interest is the maximizer $a$, not the maximum value $\llangle [p],[q]\rrangle$.
Meanwhile, the maximum value is used for the detection problem of distinguishing noise from noisy versions of translates of $p$.
(This accounts for half of the etymology of \textit{max filtering}.)
\end{example}

\begin{example}[Max pooling]
In a convolutional neural network, it is common for a convolutional layer to be followed by a max pooling layer.
Here, the convolutional layer convolves the input image with several localized templates, and then the max pooling layer downsamples each of the resulting convolutions by partitioning the scene into patches and recording the maximum value in each patch.
In the extreme case where the max pooling layer takes the entire scene to be a single patch to maximize over, these layers implement a max filter bank in which $V$ is the image space and $G$ is the group of translation operators.
(This accounts for the other half of the etymology of \textit{max filtering}.)
\end{example}

\section{The complexity of separating orbits}
\label{sec.separating}

For practical reasons, we are interested in orbit-separating invariants $\Phi\colon V\to\mathbb{R}^n$ of \textit{low complexity}, which we take to mean two different things simultaneously:
\begin{itemize}
\item[(i)]
$n$ is small (i.e., the map has low \textit{sample complexity}), and
\item[(ii)]
one may evaluate $\Phi$ efficiently (i.e., the map has low \textit{computational complexity}). 
\end{itemize}
While these notions of complexity are related, they impact the learning task in different ways.
In what follows, we study both notions of complexity in the context of max filtering.

\subsection{Generic templates separate orbits}

In this subsection, we focus on the case in which $V=\mathbb{R}^d$ and $G\leq\operatorname{O}(d)$ is \textit{semialgebraic}, which we will define shortly.
Every polynomial function $p\colon\mathbb{R}^n\to\mathbb{R}$ determines a basic semialgebraic set
\[
\{x\in\mathbb{R}^n:p(x)\geq0\}.
\]
By closing under finite unions, finite intersections, and complementation, the basic semialgebraic sets of $\mathbb{R}^n$ generate an algebra of sets known as the \textbf{semialgebraic sets} in $\mathbb{R}^n$.
A \textbf{semialgebraic subgroup} $G\leq\operatorname{GL}(d)$ is a subgroup of $\operatorname{GL}(d)$ that is also a semialgebraic set in $\mathbb{R}^{d\times d}$, e.g., $\operatorname{O}(d)$.
A \textbf{semialgebraic function} is a function $f\colon\mathbb{R}^s\to\mathbb{R}^t$ for which the graph $\{(x,f(x)):x\in\mathbb{R}^s\}$ is a semialgebraic set in $\mathbb{R}^{s+t}$.

\begin{lemma}
\label{lem.max filtering is semialgebraic}
For every semialgebraic subgroup $G\leq\operatorname{O}(d)$, the corresponding max filtering map $\llangle [\cdot],[\cdot]\rrangle\colon\mathbb{R}^d\times\mathbb{R}^d\to\mathbb{R}$ is semialgebraic.
\end{lemma}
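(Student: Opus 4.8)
The plan is to realize the graph of $\llangle[\cdot],[\cdot]\rrangle$ as a semialgebraic subset of $\mathbb{R}^d\times\mathbb{R}^d\times\mathbb{R}$ and then appeal to the definition of a semialgebraic function. Before doing so, I would replace the supremum defining the max filtering map with a maximum over a fixed compact semialgebraic set. Since $G\leq\operatorname{O}(d)$ and $\operatorname{O}(d)$ is closed and bounded in $\mathbb{R}^{d\times d}$, the closure $\overline{G}$ is compact; it is also semialgebraic, because the closure of a semialgebraic set is semialgebraic, being cut out by the first-order condition that for every $\varepsilon>0$ there exists $h$ in the set with $\|g-h\|<\varepsilon$. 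As $g\mapsto\langle gx,y\rangle$ is continuous and $\overline{G}$ is compact,
\[
\llangle[x],[y]\rrangle=\sup_{g\in G}\langle gx,y\rangle=\max_{g\in\overline{G}}\langle gx,y\rangle
\]
for all $x,y\in\mathbb{R}^d$, and in particular the map is real valued (the bound $|\langle gx,y\rangle|\leq\|x\|\|y\|$ also makes this clear). Note that $\overline{G}$ need not be used as a group here; all that matters is that it is a compact semialgebraic set.

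Next I would describe the graph
\[
\Gamma:=\{(x,y,t)\in\mathbb{R}^d\times\mathbb{R}^d\times\mathbb{R}:t=\llangle[x],[y]\rrangle\}
\]
as the intersection of two sets: the ``upper bound'' set $U$ of triples $(x,y,t)$ with $\langle gx,y\rangle\leq t$ for every $g\in\overline{G}$, and the ``attainment'' set $A$ of triples $(x,y,t)$ with $\langle gx,y\rangle=t$ for some $g\in\overline{G}$. Indeed, $t=\max_{g\in\overline{G}}\langle gx,y\rangle$ precisely when $t$ is an upper bound for $\{\langle gx,y\rangle:g\in\overline{G}\}$ that also belongs to this set, which is exactly the condition $(x,y,t)\in U\cap A$.

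Finally, the claim reduces to checking that $U$ and $A$ are semialgebraic, which is routine bookkeeping with the Tarski--Seidenberg theorem. The key observation is that $\langle gx,y\rangle=\sum_{i,j}g_{ij}x_jy_i$ is a polynomial in the coordinates of $(g,x,y)\in\mathbb{R}^{d\times d}\times\mathbb{R}^d\times\mathbb{R}^d$, so the sets
\[
\{(g,x,y,t):g\in\overline{G}\ \wedge\ \langle gx,y\rangle > t\}
\quad\text{and}\quad
\{(g,x,y,t):g\in\overline{G}\ \wedge\ \langle gx,y\rangle = t\}
\]
are semialgebraic, each being a finite Boolean combination of $\overline{G}\times\mathbb{R}^{2d+1}$ with basic semialgebraic sets. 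By Tarski--Seidenberg, their images under the projection forgetting the $g$ coordinates are semialgebraic; the first image is the complement of $U$, so $U$ is semialgebraic, and the second image is exactly $A$. Hence $\Gamma=U\cap A$ is semialgebraic, and by definition $\llangle[\cdot],[\cdot]\rrangle$ is a semialgebraic function. I expect no serious obstacle: the only points requiring care are the passage from the supremum over $G$ to a maximum over the compact set $\overline{G}$ and the use of the fact that closures of semialgebraic sets are semialgebraic; everything else is a mechanical application of quantifier elimination. (Alternatively, one could skip the closure step entirely and describe $\Gamma$ directly by the first-order formula $(\forall g\in G\,\langle gx,y\rangle\leq t)\wedge(\forall\varepsilon>0\,\exists g\in G\,\langle gx,y\rangle>t-\varepsilon)$, again invoking Tarski--Seidenberg.)
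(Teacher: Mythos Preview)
Your proposal is correct and follows essentially the same strategy as the paper: describe the graph of $\llangle[\cdot],[\cdot]\rrangle$ by a first-order formula over the reals and invoke Tarski--Seidenberg. The only cosmetic difference is that your primary argument first passes to the compact semialgebraic set $\overline{G}$ so that the supremum is attained, allowing the graph to be written as $U\cap A$ without an $\varepsilon$; the paper instead works directly with $G$ and uses the $\varepsilon$-characterization of the supremum---precisely the alternative you sketch in your final parenthetical.
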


\begin{proof}
The graph of the max filtering map can be expressed in first-order logic:
\[
\{(x,y,t)\in\mathbb{R}^d\times\mathbb{R}^d\times\mathbb{R}:
(\forall g\in G,t\geq\langle x,gy\rangle)
\wedge (\forall \epsilon\in\mathbb{R}, \exists g\in G,\epsilon>0\Rightarrow t-\epsilon<\langle x,gy\rangle)\}.
\]
(To be precise, one should replace our quantifiers over $G$ with the polynomial conditions that define the semialgebraic set $G$ to obtain a condition in first-order logic.)
It follows from Proposition~2.2.4 in~\cite{BochnakCR:13} that the graph is semialgebraic.
\end{proof}

Every semialgebraic set $A$ can be decomposed as a disjoint union $A=\bigcup_i A_i$ of finitely many semialgebraic sets $A_i$, each of which is homeomorphic to an open hypercube $(0,1)^{d_i}$ (where $(0,1)^0$ is a point).
The \textbf{dimension} of $A$ can be defined in terms of this decomposition as $\operatorname{dim}(A):=\max_i d_i$.
(It does not depend on the decomposition.)

\begin{definition}
Given a semialgebraic subgroup $G\leq\operatorname{O}(d)$, we say the corresponding max filtering map $\llangle [\cdot],[\cdot]\rrangle\colon\mathbb{R}^d\times\mathbb{R}^d\to\mathbb{R}$ is $k$-\textbf{strongly separating} if for every $x,y\in\mathbb{R}^d$ with $[x]\neq[y]$, it holds that
\[
\operatorname{dim}\big\{z\in\mathbb{R}^d:\llangle [z],[x]\rrangle=\llangle [z],[y]\rrangle\big\}
\leq d-k.
\]
\end{definition}

As an example, consider the case where $G=\operatorname{O}(d)$.
Then $\llangle [z],[x]\rrangle=\llangle [z],[y]\rrangle$ holds precisely when $\|z\|\|x\|=\|z\|\|y\|$, i.e., $z=0$ or $[x]=[y]$.
Thus, the max filtering map is $d$-strongly separating in this case.

\begin{theorem}
\label{thm.dym-gortler}
Consider any semialgebraic subgroup $G\leq\operatorname{O}(d)$ with $k$-strongly separating max filtering map $\llangle [\cdot],[\cdot]\rrangle\colon\mathbb{R}^d\times\mathbb{R}^d\to\mathbb{R}$ for some $k\in\mathbb{N}$.
For generic $z_1,\ldots,z_n\in\mathbb{R}^d$, the max filter bank $x\mapsto\{\llangle[z_i],[x]\rrangle\}_{i=1}^n$ separates $G$-orbits in $\mathbb{R}^d$ provided $n\geq 2d/k$.
\end{theorem}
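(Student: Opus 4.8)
The plan is to isolate the set of ``bad'' template tuples --- those whose max filter bank fails to separate $G$-orbits --- to show that this set is semialgebraic of dimension strictly less than $nd$, and then to conclude: a semialgebraic subset of $\mathbb{R}^{nd}$ of dimension below $nd$ is Lebesgue-null with dense complement, so ``generic $z_1,\ldots,z_n$'' means precisely any tuple outside this bad set.

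Concretely, I would introduce the incidence set
\[
Z:=\{(z_1,\ldots,z_n,x,y)\in(\mathbb{R}^d)^n\times\mathbb{R}^d\times\mathbb{R}^d:[x]\neq[y],\ \llangle[z_i],[x]\rrangle=\llangle[z_i],[y]\rrangle\ \text{for all }i\}.
\]
By Lemma~\ref{lem.max filtering is semialgebraic} each equation $\llangle[z_i],[x]\rrangle=\llangle[z_i],[y]\rrangle$ cuts out a semialgebraic set, and the condition $[x]\neq[y]$ is semialgebraic because $G$ is (it is the negation of the first-order condition $\exists g\in G,\ y=gx$), so $Z$ is semialgebraic by Tarski--Seidenberg (Proposition~2.2.4 in~\cite{BochnakCR:13}). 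Writing $\pi\colon Z\to(\mathbb{R}^d)^n$ for the projection onto the template coordinates, $\pi(Z)$ is exactly the set of tuples $(z_1,\ldots,z_n)$ admitting distinct orbits $[x]\neq[y]$ with $\llangle[z_i],[x]\rrangle=\llangle[z_i],[y]\rrangle$ for every $i$; equivalently (using symmetry of the max filtering map), it is the set of template tuples whose max filter bank does not separate $G$-orbits. So it suffices to prove $\dim\pi(Z)<nd$.

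I would bound $\dim Z$ by fibering over the $(x,y)$-coordinates: for fixed $x,y$ with $[x]\neq[y]$, the set of $(z_1,\ldots,z_n)$ with $(z_1,\ldots,z_n,x,y)\in Z$ is the $n$-fold Cartesian power $S_{x,y}^{n}$, where $S_{x,y}:=\{z:\llangle[z],[x]\rrangle=\llangle[z],[y]\rrangle\}$; the $k$-strong separation hypothesis gives $\dim S_{x,y}\le d-k$, hence $\dim S_{x,y}^{n}\le n(d-k)$. Since $\{(x,y):[x]\neq[y]\}\subseteq\mathbb{R}^{2d}$, the standard fiber-dimension bound for semialgebraic maps yields $\dim Z\le n(d-k)+2d$. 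To pass from $Z$ to $\pi(Z)$ I would extract one more unit of codimension from the homogeneity of max filtering: if $(z_1,\ldots,z_n,x,y)\in Z$, then for every $r>0$ Lemma~\ref{lem.max filter product properties}(c) gives $\llangle[z_i],[rx]\rrangle=r\llangle[z_i],[x]\rrangle=r\llangle[z_i],[y]\rrangle=\llangle[z_i],[ry]\rrangle$, while $[rx]\neq[ry]$ because $G$ acts linearly, so $(z_1,\ldots,z_n,rx,ry)\in Z$. As $[x]\neq[y]$ forces $(x,y)\neq(0,0)$, the ray $\{(rx,ry):r>0\}$ is a one-dimensional subset of every nonempty fiber of $\pi$, whence $\dim\pi(Z)\le\dim Z-1\le n(d-k)+2d-1$. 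Finally $n\ge 2d/k$ gives $nk\ge 2d>2d-1$, i.e.\ $n(d-k)+2d-1<nd$, so $\pi(Z)$ has dimension strictly below $nd$, as required.

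I expect the work here to be bookkeeping rather than difficulty: the only real care needed is to confirm that every set in sight is genuinely semialgebraic so that the dimension calculus applies, and the one genuine idea is that the crude estimate $\dim\pi(Z)\le\dim Z$ falls short by exactly one in the borderline case $n=2d/k$ --- the scale-invariance of $\llangle\cdot,\cdot\rrangle$ (equivalently, the fact that $Z$ is a cone in its last two blocks of coordinates) is what recovers the missing unit, and it is precisely this refinement that makes ``$2d$ generic templates suffice'' hold on the nose in Corollary~\ref{cor.2d templates suffice for finite groups}. If one is content with the weaker bound $n>2d/k$, this last step can be omitted.
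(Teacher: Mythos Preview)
Your proposal is correct and follows essentially the same lift-and-project strategy as the paper: define an incidence set over $(\text{templates},x,y)$, bound its dimension via the $k$-strong separation hypothesis on fibers, and project to the template coordinates. The only cosmetic difference is where the homogeneity trick is applied --- the paper normalizes the witness pair to the unit sphere in $(\mathbb{R}^d)^2$ \emph{before} bounding $\dim\mathcal{L}$ (so the base has dimension $2d-1$), whereas you keep the full cone and recover the missing unit \emph{after} by observing that every nonempty $\pi$-fiber contains a ray; both routes yield $\dim\pi(Z)\le n(d-k)+2d-1<nd$.
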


In the $d$-strongly separating case where $G=\operatorname{O}(d)$, Theorem~\ref{thm.dym-gortler} implies that $n=2$ generic templates suffice to separate orbits.
(Of course, any single nonzero template suffices in this case.)
Theorem~\ref{thm.dym-gortler} is an improvement to Theorem~1.9 in~\cite{DymG:22}, which gives the condition $n\geq 2d+1$.
We obtain the improvement $n\geq 2d/k$ by leveraging a more detailed notion of \textit{strongly separating}, as well as the positive homogeneity of max filtering.
The proof makes use of a \textit{lift-and-project} technique that first appeared in~\cite{BalanCE:06} and was subsequently applied in~\cite{ConcaEHV:15,WangX:19,CahillCC:20,RongWX:21}.

\begin{proof}[Proof of Theorem~\ref{thm.dym-gortler}]
Fix $n\geq 2d$, and let $\mathcal{Z}\subseteq(\mathbb{R}^d)^n$ denote the set of $\{z_i\}_{i=1}^n$ for which the max filter bank $x\mapsto\{\llangle[z_i],[x]\rrangle\}_{i=1}^n$ fails to separate $G$-orbits in $\mathbb{R}^d$.
We will show that $\mathcal{Z}$ is semialgebraic with dimension $\leq dn-1$, from which the result follows.
To do so, observe that $\{z_i\}_{i=1}^n\in\mathcal{Z}$ precisely when there exists a \textit{witness}, namely, $(x,y)\in\mathbb{R}^d\times\mathbb{R}^d$ with $[x]\neq[y]$ such that $\llangle[z_i],[x]\rrangle=\llangle[z_i],[y]\rrangle$ for every $i\in\{1,\ldots,n\}$.
In fact, we may assume that the witness $(x,y)$ satisfies $\|x\|^2+\|y\|^2=1$ without loss of generality since the set of witnesses for $\{z_i\}_{i=1}^n$ avoids $(0,0)$ and is closed under positive scalar multiplication by Lemma~\ref{lem.max filter product properties}(c).
This suggests the following lift of $\mathcal{Z}$:
\begin{align*}
\mathcal{L}
:=\Big\{~(\{z_i\}_{i=1}^n,(x,y))\in(\mathbb{R}^d)^n\times(\mathbb{R}^d)^2~:
&~ [x]\neq[y], ~ \|x\|^2+\|y\|^2=1,\\
&~ \llangle[z_i],[x]\rrangle=\llangle[z_i],[y]\rrangle ~~\forall i\in\{1,\ldots,n\}~\Big\}.
\end{align*}
Since $G$ is semialgebraic, we have that $[x]\neq[y]$ is a semialgebraic condition.
Furthermore, $\llangle[z_i],[x]\rrangle=\llangle[z_i],[y]\rrangle$ is a semialgebraic condition for each $i$ by Lemma~\ref{lem.max filtering is semialgebraic}.
It follows that $\mathcal{L}$ is semialgebraic.
Next, we define the projection maps $\pi_1\colon(\{z_i\}_{i=1}^n,(x,y))\mapsto \{z_i\}_{i=1}^n$ and $\pi_2\colon(\{z_i\}_{i=1}^n,(x,y))\mapsto (x,y)$.
Then $\mathcal{Z}=\pi_1(\mathcal{L})$ is semialgebraic by Tarski--Seidenberg (Proposition~2.2.1 in~\cite{BochnakCR:13}).
To bound the dimension of $\mathcal{Z}$, we first observe that
\[
\pi_2^{-1}(x,y)
=\big\{z\in\mathbb{R}^d:\llangle [z],[x]\rrangle=\llangle [z],[y]\rrangle\big\}^n \times\{(x,y)\},
\]
and so $\operatorname{dim}(\pi_2^{-1}(x,y)) \leq n(d-k)$, since the max filtering map is $k$-strongly separating by assumption.
We use the fact that $\pi_2(\mathcal{L})$ is contained in the unit sphere in $(\mathbb{R}^d)^2$ together with Lemma~1.10 in~\cite{DymG:22} to obtain
\[
\operatorname{dim}(\mathcal{Z})
\leq\operatorname{dim}(\mathcal{L})
\leq\operatorname{dim}(\pi_2(\mathcal{L}))+\max_{x,y\in\mathbb{R}^d}\operatorname{dim}(\pi_2^{-1}(x,y))
\leq (2d-1)+n(d-k)
\leq dn-1,
\]
where the last step is equivalent to the assumption $n\geq 2d/k$.
\end{proof}

\begin{corollary}
\label{cor.2d templates suffice for finite groups}
Consider any finite subgroup $G\leq\operatorname{O}(d)$.
For generic $z_1,\ldots,z_n\in\mathbb{R}^d$, the max filter bank $x\mapsto\{\llangle[z_i],[x]\rrangle\}_{i=1}^n$ separates $G$-orbits in $\mathbb{R}^d$ provided $n\geq 2d$.
\end{corollary}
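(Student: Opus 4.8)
The plan is to apply Theorem~\ref{thm.dym-gortler} with $k=1$, for which I must check two things: that a finite $G\le\operatorname{O}(d)$ is a semialgebraic subgroup, and that the corresponding max filtering map is $1$-strongly separating. The first is immediate, since a finite subset of $\mathbb{R}^{d\times d}$ is a finite union of points and hence semialgebraic. For the second, fix $x,y\in\mathbb{R}^d$ with $[x]\ne[y]$ and set $E_{x,y}:=\{z\in\mathbb{R}^d:\llangle[z],[x]\rrangle=\llangle[z],[y]\rrangle\}$; this is semialgebraic (by Lemma~\ref{lem.max filtering is semialgebraic} together with Tarski--Seidenberg, or directly, since $\llangle[\cdot],[x]\rrangle$ is a max of finitely many linear functionals), and a semialgebraic subset of $\mathbb{R}^d$ has dimension $d$ exactly when it has nonempty interior, so it suffices to show that $E_{x,y}$ has empty interior.

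Suppose instead that $E_{x,y}$ contains a nonempty open set $U$. Since $G$ is finite, $z\mapsto\llangle[z],[x]\rrangle=\max_{g\in G}\langle gx,z\rangle$ is a pointwise maximum of finitely many linear functionals, so it is linear on each chamber of the central hyperplane arrangement $\mathcal{A}_x$ consisting of the hyperplanes $\{z:\langle gx-g'x,z\rangle=0\}$ with $gx\ne g'x$, and on such a chamber it equals $\langle v,z\rangle$ for a single $v\in[x]$; the analogous statement holds for $y$ with an arrangement $\mathcal{A}_y$. The union of all these (finitely many) hyperplanes has empty interior, so I may choose $z_0\in U$ lying simultaneously in a chamber of $\mathcal{A}_x$ and a chamber of $\mathcal{A}_y$. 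Intersecting these two open chambers with a small ball inside $U$ yields a nonempty open set $N\subseteq U\subseteq E_{x,y}$ on which $\langle v_x,z\rangle=\llangle[z],[x]\rrangle=\llangle[z],[y]\rrangle=\langle v_y,z\rangle$ for fixed $v_x\in[x]$ and $v_y\in[y]$. Two linear functionals agreeing on a nonempty open set are equal, so $v_x=v_y$; but then $[x]$ and $[y]$ share the point $v_x$, hence $[x]=[y]$, contradicting the choice of $x,y$. This proves $1$-strong separation, and Theorem~\ref{thm.dym-gortler} with $k=1$ then gives the claim for every $n\ge 2d$.

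The crux is the reduction in the second paragraph: a max filter, being piecewise linear with ``creases'' confined to a finite hyperplane arrangement determined by the orbit, cannot agree with a second max filter on an open set unless the two relevant linear pieces coincide, at which point their common slope is a vector lying in both orbits. Beyond this I expect only routine bookkeeping with the chamber decomposition; the only points needing a little care are that the set of points with a non-unique maximizer is a finite union of hyperplanes (hence has empty interior) and that, for a semialgebraic set, having dimension $d$ is equivalent to having nonempty interior.
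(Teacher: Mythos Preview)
Your proof is correct. Both you and the paper reduce to Theorem~\ref{thm.dym-gortler} by verifying that finite $G$ is semialgebraic and that the max filtering map is $1$-strongly separating; the difference is only in how the latter is checked. The paper gives a one-line containment: if $\llangle[z],[x]\rrangle=\llangle[z],[y]\rrangle$, then the maxima are attained at some $p\in[x]$ and $q\in[y]$ with $\langle z,p\rangle=\langle z,q\rangle$, so $z\in\{p-q\}^\perp$; hence $E_{x,y}$ sits inside a finite union of hyperplanes (each proper since $[x]\cap[y]=\varnothing$) and has dimension at most $d-1$. Your route---arguing by contradiction via the chamber decomposition of the two piecewise-linear functions and concluding that the unique linear pieces must share a slope lying in both orbits---is the ``local'' version of the same idea and is a bit longer, but it has the virtue of making the piecewise-linear structure explicit. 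Either way the key observation is identical: equality of two finite maxima of linear functionals forces $z$ onto a hyperplane determined by a pair of orbit elements.
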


Corollary~\ref{cor.2d templates suffice for finite groups} follows immediately from Theorem~\ref{thm.dym-gortler} and the following lemma:

\begin{lemma}
For every finite subgroup $G\leq\operatorname{O}(d)$, the corresponding max filtering map $\llangle [\cdot],[\cdot]\rrangle\colon\mathbb{R}^d\times\mathbb{R}^d\to\mathbb{R}$ is $1$-strongly separating.
\end{lemma}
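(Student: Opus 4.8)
The plan is to exploit the fact that, when $G$ is finite, each max filter is piecewise linear with only finitely many linear pieces. I would begin by listing the \emph{distinct} points of the (finite) orbits as $Gx=\{w_1,\dots,w_m\}$ and $Gy=\{w_1',\dots,w_\ell'\}$. Since $G$ is a group of isometries, $\llangle[z],[x]\rrangle=\max_{g\in G}\langle z,gx\rangle=\max_{1\le i\le m}\langle z,w_i\rangle$, and likewise $\llangle[z],[y]\rrangle=\max_{1\le j\le\ell}\langle z,w_j'\rangle$; that is, each max filter is the support function of the corresponding finite orbit. This is the only place finiteness of $G$ is used: it guarantees the orbits are finite, so these support functions are maxima of finitely many linear functionals.

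For the main argument I would cover $\mathbb{R}^d$ by the closed regions
\[
C_{ij}:=\Big\{z\in\mathbb{R}^d:\langle z,w_i\rangle=\max_{i'}\langle z,w_{i'}\rangle\ \text{and}\ \langle z,w_j'\rangle=\max_{j'}\langle z,w_{j'}'\rangle\Big\},\qquad 1\le i\le m,\ 1\le j\le\ell.
\]
Every $z\in\mathbb{R}^d$ lies in some $C_{ij}$ (choose $w_i$ and $w_j'$ to be maximizers of the two maxima), so $\mathbb{R}^d=\bigcup_{i,j}C_{ij}$, and on $C_{ij}$ we have $\llangle[z],[x]\rrangle-\llangle[z],[y]\rrangle=\langle z,w_i-w_j'\rangle$. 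Here I would invoke the hypothesis $[x]\ne[y]$: it says precisely that $Gx$ and $Gy$ are disjoint, so $w_i\ne w_j'$ for all $i,j$, whence $\{z:\langle z,w_i-w_j'\rangle=0\}$ is a genuine hyperplane in $\mathbb{R}^d$. Intersecting the equality set with each $C_{ij}$ and taking the union gives
\[
\big\{z\in\mathbb{R}^d:\llangle[z],[x]\rrangle=\llangle[z],[y]\rrangle\big\}\ \subseteq\ \bigcup_{i,j}\big\{z\in\mathbb{R}^d:\langle z,w_i-w_j'\rangle=0\big\},
\]
a finite union of hyperplanes. The left-hand set is semialgebraic by Lemma~\ref{lem.max filtering is semialgebraic}, and being contained in a finite union of hyperplanes it has dimension at most $d-1$; hence the max filtering map is $1$-strongly separating.

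This argument is short and essentially bookkeeping, so I do not anticipate a real obstacle. The two points that require mild care are the reduction to support functions of \emph{finite} orbits --- which is exactly where $|G|<\infty$ enters --- and the observation that $[x]\ne[y]$ forces $w_i\ne w_j'$, without which any region $C_{ij}$ with $w_i=w_j'$ would contribute a full-dimensional piece to the equality set (as indeed happens when $[x]=[y]$).
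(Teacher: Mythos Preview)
Your proposal is correct and follows essentially the same approach as the paper: both arguments show that the equality set $\{z:\llangle[z],[x]\rrangle=\llangle[z],[y]\rrangle\}$ is contained in a finite union of hyperplanes $\{z:\langle z,p-q\rangle=0\}$ with $p\in Gx$, $q\in Gy$, using $[x]\neq[y]$ to ensure $p\neq q$ so each such set really is a hyperplane. Your introduction of the regions $C_{ij}$ is a bit more explicit than the paper's one-line containment, but the core idea and structure are identical.
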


\begin{proof}
Consider any $x,y\in\mathbb{R}^d$ with $[x]\neq[y]$.
Then $\llangle [z],[x]\rrangle=\llangle [z],[y]\rrangle$ only if there exists $g\in G$ such that $\langle z,x\rangle=\langle z,gy\rangle$, i.e., $z\in\operatorname{span}\{x-gy\}^\perp$.
Thus,
\[
\big\{z\in\mathbb{R}^d:\llangle [z],[x]\rrangle=\llangle [z],[y]\rrangle\big\}
\subseteq\bigcup_{g\in G}\operatorname{span}\{x-gy\}^\perp.
\]
Since the max filtering map is semialgebraic by Lemma~\ref{lem.max filtering is semialgebraic}, it follows that the left-hand set is also semialgebraic.
Since $G$ is finite, the right-hand set is semialgebraic with dimension $d-1$, and the result follows.
\end{proof}

We would like to know if a version of Corollary~\ref{cor.2d templates suffice for finite groups} holds for all semialgebraic groups, but we do not have a proof of strongly separating for infinite groups in general.
This motivates the following problem:

\begin{problem}\
\label{prob.semialgebraic}
\begin{itemize}
\item[(a)]
For which semialgebraic groups is the max filtering map $k$-strongly separating?
\item[(b)]
How many templates are needed to separate orbits for a given group?
\end{itemize}
\end{problem}

We identify a couple of interesting instances of Problem~\ref{prob.semialgebraic}.
First, we consider the case of complex phase retrieval (as in Example~\ref{ex.phase retrieval}), where $V=\mathbb{C}^r$ and $G$ is the center of $\operatorname{U}(r)$.
It is known that $n=4r-4=2\operatorname{dim}(V)-4$ generic templates separate orbits for every $r$, and this is the optimal threshold for infinitely many $r$~\cite{ConcaEHV:15}, but there also exist $11$ templates in $\mathbb{C}^4$ that separate orbits, for example~\cite{Vinzant:15}.

As another example, consider the case where $V=\mathbb{R}^d$ and $G\cong S_d$ is the group of $d\times d$ permutation matrices.
Then Corollary~\ref{cor.2d templates suffice for finite groups} gives that $2d$ generic templates separate orbits.
However, it is straightforward to see that the templates $z_j:=\sum_{i=1}^j e_i$ for $j\in\{1,\ldots,d\}$ also separate orbits, where $e_i$ denotes the $i$th standard basis element.
Indeed, take $\operatorname{sort}(x)$ to have weakly decreasing entries.
Then the first entry equals $\llangle [z_1],[x]\rrangle$, while for each $j>1$, the $j$th entry equals $\llangle [z_j],[x]\rrangle-\llangle [z_{j-1}],[x]\rrangle$.
As such, this max filter bank determines $\operatorname{sort}(x)$, which is a separating invariant of $V/G$.
Considering Theorem~\ref{thm.dym-gortler}, one might suspect that the max filtering map is $2$-strongly separating in this case, but this is not so.
Indeed, the cone $C$ of sorted vectors in $\mathbb{R}^d$ has  dimension~$d$, and so there exists a subspace $H$ of co-dimension~$1$ that intersects the interior of $C$.
Select any $x$ in the interior of $C$ and any unit vector $v\in H^\perp$, and then take $y=x+\epsilon v$ for $\epsilon>0$ sufficiently small so that $y\in C$.
Then 
\[
\big\{z\in\mathbb{R}^d:\llangle [z],[x]\rrangle=\llangle [z],[y]\rrangle\big\}
\supseteq H\cap C,
\]
which is a semialgebraic set of dimension $d-1$, and so the claim follows.

\subsection{Low-complexity max filtering}

In this subsection, we focus on the case in which $V\cong\mathbb{R}^d$.
Naively, one may compute the max filtering map $(x,y)\mapsto\llangle [x],[y]\rrangle$ over a finite group $G\leq\operatorname{O}(d)$ of order $m$ by computing $\langle x,gy\rangle$ for every $g\in G$ and then returning the maximum.
This approach costs $O(md)$ operations.
Of course, this is not possible when $G$ is infinite, and it is prohibitive when $G$ is finite but large.
Interestingly, many of the groups that we encounter in practice admit a faster implementation.
In particular, for many quotient spaces $V/G$, the quotient metric $d\colon V/G\times V/G\to\mathbb{R}$ is easy to compute, and by Lemma~\ref{lem.max filter product properties}(f), the max filtering map is equally easy to compute in these cases:
\[
\llangle [x],[y]\rrangle
=\frac{1}{2}\Big(d([x],[y])^2-\|x\|^2-\|y\|^2\Big).
\]
In this subsection, we highlight a few examples of such quotient spaces before considering the harder setting of graphs.

\subsubsection{Point clouds}

Consider $V=\mathbb{R}^{k\times n}$ with Frobenius inner product.
We can represent a point cloud of $n$ points in $\mathbb{R}^k$ as a member of $V$ by arbitrarily labeling the points with column indices.
In this setting, we identify members of $V$ that reside in a common $G$-orbit with $G\cong S_n$ permuting the columns.
The resulting quotient metric is known as the \textit{$2$-Wasserstein distance}:
\[
d([X],[Y])
=\min_{P\in\Pi(n)}\|X-YP\|_F,
\]
where $\Pi(n)$ denotes the set of $n\times n$ permutation matrices and $\|\cdot\|_F$ denotes the Frobenius norm.
The corresponding max filtering map is then given by
\[
\llangle [X],[Y]\rrangle
=\max_{P\in\Pi(n)}\langle X,YP\rangle
=\max_{P\in\Pi(n)}\operatorname{tr}(X^\top YP)
=\max_{S\in\operatorname{conv}\Pi(n)}\operatorname{tr}(X^\top YS),
\]
where $\operatorname{conv}\Pi(n)$ denotes the convex hull of $\Pi(n)$, namely, the doubly stochastic matrices.
By this formulation, the max filtering map can be computed in polynomial time by linear programming.
In the special case where $k=1$, the max filtering map has an even faster implementation:
\[
\llangle [X],[Y]\rrangle
=\langle \operatorname{sort}(X), \operatorname{sort}(Y) \rangle,
\]
which can be computed in linearithmic time.

\subsubsection{Circular translations}

Consider the case where $V\cong\mathbb{R}^n$ is the space of vectors with entries indexed by the cyclic group $C_n:=\mathbb{Z}/n\mathbb{Z}$, and $G\cong C_n$ is the group of circular translations $T_a$ defined by $T_af(x):=f(x-a)$.
Then the max filtering map is given by
\[
\llangle [f],[g]\rrangle
=\max_{a\in C_n}\langle f,T_ag\rangle
=\max_{a\in C_n}\sum_{x\in C_n}f(x)g(x-a)
=\max_{a\in C_n}(f\star Rg)(a),
\]
where $\star$ denotes the circular convolution and $R$ denotes the reversal operator.
Thus, the max filtering map can be computed in linearithmic time with the help of the fast Fourier transform.

\subsubsection{Shape analysis}

In geometric morphometics~\cite{MitteroeckerG:09}, it is common for data to take the form of a sequence of $n$ landmarks in $\mathbb{R}^k$ (where $k$ is typically $2$ or $3$) with a global rotation ambiguity.
This corresponds to taking $V=\mathbb{R}^{k\times n}$ and $G\cong\operatorname{O}(k)$ acting on the left, and so the max filtering map is given by
\[
\llangle [X],[Y]\rrangle
=\max_{R\in\operatorname{O}(k)}\langle X,RY\rangle
=\max_{R\in\operatorname{O}(k)}\operatorname{tr}(RYX^\top)
=\|YX^\top\|_*,
\]
where $\|\cdot\|_*$ denotes the nuclear norm.
As such, the max filtering map can be computed in polynomial time with the aid of the singular value decomposition.

\subsubsection{Separation hierarchy for weighted graphs}

Here, we focus on the case in which $V$ is the vector space of real symmetric $n\times n$ matrices with zero diagonal and $G\cong S_n$ is the group of linear isometries of the form $A\mapsto PAP^{-1}$, where $P$ is a permutation matrix.
We think of $V/G$ as the space of weighted graphs on $n$ vertices (up to isomorphism).
One popular approach for separating graphs uses \textit{message-passing graph neural networks}, but the separation power of such networks is limited by the so-called \textit{Weisfeler--Lehman test}~\cite{WeisfeilerA:68,XuHLJ:19,MorrisRFHLRG:19}.
For example, message-passing graph neural networks fail to distinguish $C_3\cup C_3$ from $C_6$.
See~\cite{Sato:20,HuangV:21} for surveys of this rapidly growing literature.

As an alternative, we consider a max filtering approach.
Given two adjacency matrices $A_1$ and $A_2$, Lemma~\ref{lem.max filter product properties}(f) implies that the corresponding graphs are isomorphic if and only if
\[
\|A_1\|_F^2=\llangle [A_1],[A_2]\rrangle=\|A_2\|_F^2.
\]
As such, max filtering is graph isomorphism--hard in this setting.
Interestingly, there exist $A\in V$ for which the map $X\mapsto\llangle [A],[X]\rrangle$ can be computed in linearithmic time, and furthermore, these easy-to-compute max filters help with separating orbits.
To see this, we follow~\cite{AlonYZ:95}, which uses \textit{color coding} to facilitate computation by dynamic programming.

\begin{definition}
A tuple $\{f_i\}_{i=1}^N$ with $f_i\colon[n]\to[k]$ for each $i\in [N]$ is an \textbf{$(n,k)$-color coding} if $n\geq k$ and for every $S\subseteq[n]$ of cardinality $k$, there exists $i\in [N]$ such that $f_i(S)=[k]$.
\end{definition}

\begin{lemma}
\label{lem.color coding size}
Given $n,k\in\mathbb{N}$ with $n\geq k$, there exists an $(n,k)$-color coding of size $\lceil ke^k\log n\rceil$.
\end{lemma}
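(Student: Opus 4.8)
The plan is to prove the existence of a small $(n,k)$-color coding via the probabilistic method. First I would fix $n \geq k$ and consider drawing functions $f_1, \ldots, f_N \colon [n] \to [k]$ independently and uniformly at random (each $f_i$ assigns to each element of $[n]$ an independent uniform element of $[k]$). For a fixed $S \subseteq [n]$ with $|S| = k$, and a fixed index $i$, the probability that $f_i$ restricted to $S$ is a bijection onto $[k]$ is $k!/k^k$. By the standard bound $k! \geq (k/e)^k$, this probability is at least $(k/e)^k / k^k = e^{-k}$. Hence the probability that $f_i(S) \neq [k]$ is at most $1 - e^{-k}$.

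Next I would take a union bound over the $\binom{n}{k} \leq n^k$ subsets $S$ and use independence across the $N$ functions: the probability that there exists some $S$ of size $k$ with $f_i(S) \neq [k]$ for all $i \in [N]$ is at most
\[
\binom{n}{k}\bigl(1 - e^{-k}\bigr)^N \leq n^k \bigl(1 - e^{-k}\bigr)^N \leq n^k \exp\bigl(-N e^{-k}\bigr),
\]
using $1 - t \leq e^{-t}$. This quantity is strictly less than $1$ whenever $N e^{-k} > k \log n$, i.e. whenever $N > k e^k \log n$. Taking $N = \lceil k e^k \log n \rceil$ suffices (with a harmless edge-case check when $n = 1$, forcing $k = 1$, where a single constant function works). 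Therefore, with positive probability the random tuple $\{f_i\}_{i=1}^N$ is an $(n,k)$-color coding, so such a tuple exists.

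The only mildly delicate point — not really an obstacle — is making sure the inequalities are clean: the factorial lower bound $k! \geq (k/e)^k$ (immediate from $e^k = \sum_j k^j/j! \geq k^k/k!$), the binomial bound $\binom{n}{k} \leq n^k$, and the exponential bound $1 - t \leq e^{-t}$. One should also confirm the probability estimate is genuinely $<1$ rather than $\leq 1$ so that a good tuple is guaranteed; the strict inequality $N \geq \lceil k e^k \log n \rceil > k e^k \log n$ when $\log n > 0$ handles this, and the $n = k = 1$ case is trivial. Everything else is a routine assembly of these pieces.
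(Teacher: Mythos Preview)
Your argument is correct and essentially identical to the paper's: both use the probabilistic method with uniform random colorings, the same union bound, and the same three estimates $\binom{n}{k}\le n^k$, $k!\ge(k/e)^k$, and an exponential bound on $(1-e^{-k})^N$. The only cosmetic difference is that the paper writes the last estimate as $(1-1/t)^t<1/e$, which delivers the strict inequality directly and sidesteps your edge-case discussion; you could likewise note $1-t<e^{-t}$ for $t>0$ to get strictness without needing $N>ke^k\log n$.
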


\begin{proof}
We show that $N$ random colorings form a color coding with positive probability.
For each $i\in [N]$ and $S\in\binom{[n]}{k}$, we have $\mathbb{P}\{f_i(S)=[k]\}=k!/k^k$, and so the union bound gives
\[
\mathbb{P}\Big\{\{f_i\}_{i=1}^N\text{ is not $(n,k)$-color coding}\Big\}
\leq\sum_{S\in\binom{[n]}{k}} \prod_{i=1}^N\mathbb{P}\{f_i(S)\neq [k]\}
=\binom{n}{k}\bigg(1-\frac{k!}{k^k}\bigg)^{N}.
\]
It suffices to select $N$ so that the right-hand side is strictly smaller than $1$.
The result follows by applying the bounds $\binom{n}{k}\leq n^k$, $k!\geq (k/e)^k$, and $(1-1/t)^t<1/e$ for $t>1$:
\[
\tbinom{n}{k}(1-\tfrac{k!}{k^k})^{N}
\leq n^k(1-e^{-k})^{e^k e^{-k}N}
< e^{k\log n-e^{-k}N},
\]
which is at most $1$ when $N\geq ke^k\log n$.
\end{proof}

\begin{algorithm}[t]
\SetAlgoLined
\KwData{Weighted tree with vertex set $[k]$ (in post-order traversal order) and with adjacency matrix $A\in\mathbb{R}^{k\times k}$, and $(n,k)$-color coding $\{f_i\}_{i=1}^N$}
\KwIn{Weighted graph with adjacency matrix $B\in\mathbb{R}^{n\times n}$}
\KwResult{Max filter $\llangle [\tilde{A}],[B]\rrangle$, where $\tilde{A}:=[\begin{smallmatrix} A&0\\0&0\end{smallmatrix}]\in\mathbb{R}^{n\times n}$}
\For{$i\in[N]$ and $\pi\in S_k$}{
Initialize $\ell(v)\leftarrow0$ for all $v\in [n]$\\
\For{$u\in[k-1]$}{
Take the unique vertex $u' \in \{u+1,\ldots,k\}$ adjacent to the leaf $u$ in $H-[u-1]$\\
\For{$v'\in (\pi\circ f_i)^{-1}(u')$}{
$\ell(v')\leftarrow\ell(v')+\max_{v\in (\pi\circ f_i)^{-1}(u)}(\ell(v)+A_{u,u'}B_{v,v'})$\\
}
}
Put $s(i,\pi):=\max_{v\in (\pi\circ f_i)^{-1}(k)}\ell(v)$\\
}
Output $\max_{i\in[N],\pi\in S_k}s(i,\pi)$\\
\caption{Max filtering with a weighted tree template by color coding
 \label{alg.color_coded_max_filter}}
\end{algorithm}

Algorithm~\ref{alg.color_coded_max_filter} computes the max filter with a small weighted tree using a color coding and dynamic programming.
Lemma~\ref{lem.color coding size} implies that Algorithm~\ref{alg.color_coded_max_filter} has runtime $e^{O(k\log k)}n^2\log n$, which is linearithmic in the size of the data when $k$ is fixed.
Notice that max filtering with the path on $k=4$ vertices already separates the graphs $C_3\cup C_3$ and $C_6$.
Furthermore, using techniques from~\cite{AlonYZ:95}, one can modify Algorithm~\ref{alg.color_coded_max_filter} to max filter with \textit{any} template graph $H$ on $k$ vertices, though the runtime becomes $e^{O(k\log k)}n^{t+1}\log n$, where $t$ is the \textit{treewidth} of $H$.
Letting $\mathcal{H}(k,t)$ denote the set of weighted graphs on at most $k$ vertices with treewidth at most $t$, we have the following hierarchy:
\begin{equation}
\label{eq.hierarchy}
\begin{array}{ccccccccc}
\mathcal{H}(n,1)&\subseteq&\cdots&\subseteq&\mathcal{H}(n,n-1)\\
\vsubseteq\\
\vdots&&\iddots\\
\vsubseteq\\
\mathcal{H}(2,1)
\end{array}
\end{equation}
Corollary~\ref{cor.2d templates suffice for finite groups} gives that $n(n-1)$ generic templates from $\mathcal{H}(n,n-1)$ separate all isomorphism classes of weighted graphs on $n$ vertices.
It would be interesting to study the separation power of templates of logarithmic order and bounded treewidth.

\section{Stability of max filtering}
\label{sec.lipschitz}

\subsection{Bilipschitz max filter banks}

Upper and lower Lipschitz bounds are used to quantify the stability of a mapping between metric spaces, but it is generally difficult to estimate such bounds; see~\cite{BandeiraCMN:14,BalanW:15,CahillCD:16,IwenMP:19,BalanD:21} for examples from phase retrieval and~\cite{ZouBS:19,CahillCC:20,CahillCC:arxiv,BalanHS:22} for other examples.
In this subsection, we prove the following:

\begin{theorem}
\label{thm.main result}
Fix a finite group $G\leq\operatorname{O}(d)$ of order $m$ and select
\[
n\geq 12m^2d\log(\tfrac{2}{\delta}+1),
\qquad
\delta
:=(\tfrac{\pi}{128m^4}\cdot\tfrac{1}{2d+3\log(4m^2)})^{1/2}.
\]
Draw independent random vectors $z_1,\ldots,z_n\sim\mathsf{Unif}(S^{d-1})$.
With probability $\geq1-e^{-n/(12m^2)}$, it holds that the max filter bank $\Phi\colon\mathbb{R}^d/G\to\mathbb{R}^n$ with templates $\{z_i\}_{i=1}^n$ has lower Lipschitz bound $\delta$ and upper Lipschitz bound $n^{1/2}$.
\end{theorem}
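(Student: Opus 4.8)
The plan is to handle the two bounds separately; the upper bound is immediate and the lower bound carries all the weight. Since each $\|z_i\|=1$, Lemma~\ref{lem.max filter product properties}(g) says the coordinate map $[\cdot]\mapsto\llangle [z_i],[\cdot]\rrangle$ is $1$-Lipschitz on $\mathbb{R}^d/G$, so
\[
\|\Phi([x])-\Phi([y])\|^2=\sum_{i=1}^n\big(\llangle [z_i],[x]\rrangle-\llangle [z_i],[y]\rrangle\big)^2\leq n\cdot d([x],[y])^2,
\]
which gives the upper Lipschitz bound $n^{1/2}$ deterministically. For the lower bound, I would first reduce to a uniform estimate over a finite net. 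By positive homogeneity (Lemma~\ref{lem.max filter product properties}(c)), scaling $(x,y)$ scales both $\Phi([x])-\Phi([y])$ and $d([x],[y])$ by the same factor, so it suffices to prove $\|\Phi([x])-\Phi([y])\|\geq\delta\, d([x],[y])$ on the compact set $\{([x],[y]):\|x\|^2+\|y\|^2=1\}$; and by replacing $y$ with a nearest orbit representative I may assume the pair is \emph{aligned}, i.e. $\|x-y\|=d([x],[y])=:\|w\|$ with $w:=x-y$. Convexity of each $\llangle [z_i],[\cdot]\rrangle$ (Lemma~\ref{lem.max filter product properties}(d)) then yields the subgradient sandwich
\[
\langle g_i^y z_i,w\rangle\ \leq\ \llangle [z_i],[x]\rrangle-\llangle [z_i],[y]\rrangle\ \leq\ \langle g_i^x z_i,w\rangle,
\]
where, for the fixed pair $(x,y)$ and almost every $z_i$, the group elements $g_i^x,g_i^y\in G$ are the unique maximizers of $g\mapsto\langle gz_i,x\rangle$ and $g\mapsto\langle gz_i,y\rangle$. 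In particular, $z_i$ ``detects $(x,y)$ at level $c$'' — meaning $|\llangle [z_i],[x]\rrangle-\llangle [z_i],[y]\rrangle|\geq c\|w\|$ — whenever $\langle g_i^y z_i,w\rangle\geq c\|w\|$ or $\langle g_i^x z_i,w\rangle\leq-c\|w\|$; call the union of these two events $\mathcal E_i$.

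The core estimate I would establish is a uniform lower bound $\mathbb P(\mathcal E_i)\geq p$ with $p$ of order $1/m^2$, valid for every aligned pair on the unit sphere (for a level $c$ calibrated to the other parameters). The reason the factor $m^{-2}$ enters: for a uniform $z$ on $S^{d-1}$, the $m$ cells $\{z:g\text{ is the active element at }x\}$, $g\in G$, are orthogonal images of one another and cover the sphere, so the most likely active element at $x$ occurs with probability $\geq 1/m$; the same holds at $y$, and since alignment forces $\langle x,y\rangle=\llangle [x],[y]\rrangle$, a favorable pairing of active elements at $x$ and at $y$ survives on a set of probability $\gtrsim 1/m^2$. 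Conditioning on such an event, $\langle gz,w\rangle$ (for the relevant $g$) becomes a translate of a marginally uniform spherical coordinate, and a spherical cap / small-ball estimate — this is the source of the $\pi$ and of the scale $1/(2d+3\log(4m^2))$ appearing in $\delta$, the logarithm absorbing a union over the $O(m^2)$ candidate directions — shows $|\langle gz,w\rangle|\geq c\|w\|$ with constant conditional probability. Because the $z_i$ are independent, a Chernoff bound then gives that, except with probability $e^{-\Theta(np)}=e^{-\Theta(n/m^2)}$, at least $np/2$ of the templates detect $(x,y)$, whence $\|\Phi([x])-\Phi([y])\|^2\geq\tfrac{np}{2}c^2\,d([x],[y])^2\geq\delta^2 d([x],[y])^2$.

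To remove the ``fixed pair'' hypothesis I would union-bound over an $\eta$-net of the compact set $\{([x],[y]):\|x\|^2+\|y\|^2=1\}\subseteq(\mathbb{R}^d/G)^2$ with $\eta$ polynomially small in $\delta$. The net has cardinality $e^{O(d\log(1/\delta))}$, and the threshold $n\geq 12m^2d\log(\tfrac2\delta+1)$ is calibrated precisely so that $np$ dominates $d\log(1/\delta)$; hence the union bound still leaves failure probability $\leq e^{-n/(12m^2)}$. The $1$-Lipschitz bound for $\Phi$ and the $1$-Lipschitz property of the quotient metric then transfer the estimate (with a harmless constant absorbed into the stated $\delta$) from net points to all pairs with $d([x],[y])$ at least the net scale. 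Pairs with $d([x],[y])$ below the net scale are handled by local linearity: near a base point the active elements $g_i^x$ are locally constant, so $\Phi$ is locally the linear map whose matrix has rows $g_i^x z_i$, and the same conditioning-plus-small-ball argument controls the smallest singular value of that matrix, giving the lower bound $\delta$ in the local regime as well.

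The step I expect to be the main obstacle is the core estimate: making rigorous the ``conditioning on active elements at $x$ and at $y$'' — these two events need not be independent, and one must show their favorable intersection has probability $\gtrsim 1/m^2$ \emph{uniformly} over all aligned configurations, then marry this to a sufficiently sharp spherical small-ball bound to extract the explicit value of $\delta$. A secondary nuisance is the bookkeeping that reconciles the net scale $\eta$, the detection level $c$, and the Chernoff exponent so that everything collapses to the clean thresholds in the statement; and one must be a little careful that the ``locally linear'' description of $\Phi$ near a base point is valid on a ball whose radius exceeds the net scale.
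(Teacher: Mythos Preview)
Your upper bound argument matches the paper's. For the lower bound, however, your route diverges substantially from the paper's, and the gaps you flag are real and not merely bookkeeping.

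The paper does \emph{not} net over pairs $([x],[y])$ and does not need any local-linearity argument. Instead it isolates a single deterministic property of the template set --- $(\lceil n/m^2\rceil,\delta)$-\emph{projective uniformity}, meaning $s_k\{|\langle z_i,u\rangle|\}_{i=1}^n\geq\delta$ for every unit vector $u$ with $k=\lceil n/m^2\rceil$ --- and shows (Lemma~\ref{lem.bilipschitz no randomness}) that this property alone forces the lower Lipschitz bound $\delta$ for \emph{every} pair simultaneously. The mechanism is pigeonhole: write
\[
\llangle[z_i],[x]\rrangle-\llangle[z_i],[y]\rrangle=\langle z_i,\,g_ix-h_iy\rangle
\]
exactly (not via a subgradient sandwich), observe that the map $i\mapsto(g_i,h_i)\in G^2$ takes at most $m^2$ values, so some fixed pair $(g,h)$ is shared by at least $\lceil n/m^2\rceil$ indices, and for those indices the difference is the inner product of $z_i$ against the \emph{fixed} direction $gx-hy$, whose norm is at least $d([x],[y])$. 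Projective uniformity then delivers a term of size at least $\delta\cdot d([x],[y])$. All randomness is confined to establishing projective uniformity (Lemma~\ref{lem.random vectors have projective uniformity}), which needs only a $\delta$-net on $S^{d-1}$ together with a Chernoff bound; the constants $\tfrac{\pi}{128}$ and $2d+3\log(4m^2)$ come from a half-normal and chi-squared tail estimate there, not from conditioning on active elements.

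By contrast, your plan couples the randomness to each pair $(x,y)$ and then union-bounds over a net of pairs. Two of your acknowledged obstacles are genuine: (i) the ``core estimate'' requires controlling the joint law of the active elements $(g_i^x,g_i^y)$, and your heuristic $\gtrsim 1/m^2$ intersection bound is not obviously uniform over all aligned configurations; (ii) the local-linearity step is fragile, since the radius on which the active elements stay constant can be arbitrarily small (net points may land near Voronoi boundaries), and making it uniform essentially amounts to the projective-uniformity control the paper uses directly. The pigeonhole decoupling of the pair $(x,y)$ from the randomness is the idea you are missing; once you have it, both the net-over-pairs and the small-distance regime difficulties disappear.
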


This result distinguishes max filtering from separating polynomial invariants, which do not necessarily enjoy upper or lower Lipschitz bounds~\cite{CahillCC:20}.
In Theorem~\ref{thm.main result}, we may take the embedding dimension to be $n=\Theta^*(m^2d)$ with bilipschitz bounds $\Theta^*(\frac{1}{m^2d^{1/2}})$ and $\Theta^*(md^{1/2})$, where $\Theta^*(\cdot)$ suppresses logarithmic factors.
For comparison, we consider a couple of cases that have already been studied in the literature.
First, the case where $G=\{\pm\operatorname{id}\}$ reduces to the setting of \textit{real phase retrieval} (as in Example~\ref{ex.phase retrieval}), where it is known that there exist $n=\Theta(d)$ templates that deliver lower- and upper-Lipschitz bounds $\frac{1}{4}$ and $4$, say; see equation~(17) in~\cite{BandeiraCMN:14}.
Notably, these bounds do not get worse as $d$ gets large.
It would be interesting if a version of Theorem~\ref{thm.main result} held for infinite groups, but we do not expect it to hold for infinite-dimensional inner product spaces.
Case in point, for $V=\ell^2$ with $G=\{\pm\operatorname{id}\}$, it was shown in~\cite{CahillCD:16} that for every choice of templates, the map is \textit{not} bilipschitz.

Another interesting phenomenon from finite-dimensional phase retrieval is that separating implies bilipschitz; see Lemma~16 and Theorem~18 in~\cite{BandeiraCMN:14} and Proposition~1.4 in~\cite{CahillCD:16}.
This suggests the following:

\begin{problem}
\label{prob.sep implies bilip}
Is every separating max filter bank $\Phi\colon \mathbb{R}^d/G\to\mathbb{R}^n$ bilipschitz?
\end{problem}

If the answer to Problem~\ref{prob.sep implies bilip} is ``yes,'' then Corollary~\ref{cor.2d templates suffice for finite groups} implies that $2d$ generic templates produce a bilipschitz max filter bank $\Phi\colon\mathbb{R}^d/G\to\mathbb{R}^{2d}$ whenever $G\leq\operatorname{O}(d)$ is finite.

Theorem~\ref{thm.main result} follows immediately from Lemmas~\ref{lem.bilipschitz no randomness} and~\ref{lem.random vectors have projective uniformity} below.
Our proof uses the following notion that was introduced in~\cite{AlexeevBFM:14}.
We say $\{z_i\}_{i=1}^n\in(\mathbb{R}^d)^n$ exhibits $(k,\delta)$-\textbf{projective uniformity} if
\[
s_k\{|\langle z_i,x\rangle|\}_{i=1}^n
\geq\delta\|x\|
\]
for every $x\in\mathbb{R}^d$, where $s_k\colon\mathbb{R}^n\to\mathbb{R}$ returns the $k$th smallest entry of the input.
In what follows, we denote $\|\{z_i\}_{i=1}^n\|_F:=(\sum_{i=1}^n\|z_i\|^2)^{1/2}$.

\begin{lemma}
\label{lem.bilipschitz no randomness}
Fix a finite subgroup $G\leq\operatorname{O}(d)$ and suppose $\{z_i\}_{i=1}^n\in(\mathbb{R}^d)^n$ exhibits $(\lceil\frac{n}{|G|^2}\rceil,\delta)$-projective uniformity.
Then the max filter bank $\Phi\colon\mathbb{R}^d/G\to\mathbb{R}^n$ with templates $\{z_i\}_{i=1}^n$ has lower Lipschitz bound $\delta$ and upper Lipschitz bound $\|\{z_i\}_{i=1}^n\|_F$.
\end{lemma}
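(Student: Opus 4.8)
The plan is to estimate $\|\Phi([x])-\Phi([y])\|^2 = \sum_{i=1}^n \big(\llangle[z_i],[x]\rrangle - \llangle[z_i],[y]\rrangle\big)^2$ from above and below in terms of $d([x],[y])^2$. The upper bound is the easy direction: by Lemma~\ref{lem.max filter product properties}(g), each coordinate map $\llangle[z_i],\cdot\rrangle$ is $\|z_i\|$-Lipschitz on $V/G$, so $|\llangle[z_i],[x]\rrangle-\llangle[z_i],[y]\rrangle|\le\|z_i\|\,d([x],[y])$, and summing squares gives $\|\Phi([x])-\Phi([y])\|\le\|\{z_i\}_{i=1}^n\|_F\cdot d([x],[y])$, which is the claimed upper Lipschitz bound.

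For the lower bound, fix $x,y\in\mathbb{R}^d$ and choose $g_0\in G$ achieving $d([x],[y]) = \|x-g_0y\|$; replacing $y$ by $g_0y$ (which changes neither $[y]$, nor $\Phi([y])$, nor $d([x],[y])$) we may assume $d([x],[y])=\|x-y\|$. For each template $z_i$, pick $g_i\in G$ with $\llangle[z_i],[x]\rrangle=\langle z_i,g_ix\rangle$. Then
\[
\llangle[z_i],[x]\rrangle-\llangle[z_i],[y]\rrangle
\ge \langle z_i,g_ix\rangle-\langle z_i,g_iy\rangle
=\langle z_i, g_i(x-y)\rangle,
\]
and symmetrically, picking $h_i\in G$ with $\llangle[z_i],[y]\rrangle=\langle z_i,h_iy\rangle$,
\[
\llangle[z_i],[y]\rrangle-\llangle[z_i],[x]\rrangle
\ge \langle z_i, h_i(y-x)\rangle.
\]
Hence $|\llangle[z_i],[x]\rrangle-\llangle[z_i],[y]\rrangle|\ge\max\big(\langle z_i,g_i(x-y)\rangle,\,\langle z_i,h_i(x-y)\rangle\big)$. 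The key observation is that, since $G$ acts by isometries, for any fixed $w:=x-y$ the quantity $\langle z_i,gw\rangle=\langle g^{-1}z_i,w\rangle$ ranges over a set of $\le|G|$ real values as $g$ varies, at least one of which is $\ge|\langle z_i,w\rangle|/1$ in absolute value —more precisely, $\max_g\langle z_i,gw\rangle\ge |\langle z_i,w\rangle|$ since $g=\mathrm{id}$ and $g=-\mathrm{id}$ need not both lie in $G$, but $\max_g\langle z_i,gw\rangle\ge\langle z_i,w\rangle$ always and $\min_g\langle z_i,gw\rangle\le\langle z_i,w\rangle$, so one of the two maxima above (for $g_i$ or for $h_i$, noting the sign flip on $w$) is $\ge|\langle z_i,w\rangle|/|G|$-scale after accounting for averaging over the orbit. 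The cleanest route is: the average of $\langle g^{-1}z_i,w\rangle^2$ over $g\in G$ is at least $\frac{1}{|G|}\langle z_i,w\rangle^2$ trivially, but we need a pointwise-per-$i$ bound, so instead count how many indices $i$ can have $|\langle z_i,w\rangle|$ small: this is where $(\lceil n/|G|^2\rceil,\delta)$-projective uniformity of $\{z_i\}$ enters.

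Concretely, projective uniformity says the $\lceil n/|G|^2\rceil$-th smallest of $\{|\langle z_i,w\rangle|\}_{i=1}^n$ is $\ge\delta\|w\|$, i.e.\ at most $\lceil n/|G|^2\rceil-1<n/|G|^2$ of the indices have $|\langle z_i,w\rangle|<\delta\|w\|$. For the remaining $>n(1-1/|G|^2)$ indices we have $|\langle z_i,w\rangle|\ge\delta\|w\|$; for each such $i$ we produce one coordinate of $\Phi([x])-\Phi([y])$ of size $\ge\delta\|w\|/\text{(something)}$ by the orbit argument above. The subtlety is that a single pair $(g_i,h_i)$ controls one fixed template, and the sign/orbit argument gives $|\llangle[z_i],[x]\rrangle-\llangle[z_i],[y]\rrangle|\ge|\langle z_i,w\rangle|$ only up to the $|G|$ ambiguity; iterating over the orbit of $z_i$ (there are $\le|G|$ distinct vectors $g^{-1}z_i$, and among them the inner product with $w$ takes a max that is $\ge|\langle z_i,w\rangle|$ in magnitude with a definite sign) shows that among the at most $|G|$ templates in the $G$-orbit of any given direction, at least a $1/|G|$ fraction witness the gap. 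Combining, at least $\sim n/|G|^2$ coordinates of $\Phi([x])-\Phi([y])$ have magnitude $\ge\delta\|w\|$, whence $\|\Phi([x])-\Phi([y])\|\ge\sqrt{n/|G|^2}\cdot\delta\|w\| = \ldots$, and matching the exponent $\lceil n/|G|^2\rceil$ in the projective-uniformity hypothesis makes the count come out to exactly give lower Lipschitz bound $\delta$ (the factor $1/|G|^2$ in the threshold is precisely calibrated so that $\sqrt{\lceil n/|G|^2\rceil}\cdot\delta$ — not $\sqrt{n}\cdot\delta$ — and the orbit-fraction loss are absorbed). The main obstacle is exactly this bookkeeping: correctly showing that projective uniformity at threshold $\lceil n/|G|^2\rceil$ translates, through the two independent sources of $|G|$-loss (choosing the right group element per template, and grouping templates into $G$-orbits), into a clean factor-$\delta$ lower bound rather than a $\delta/|G|$ bound. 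I would handle it by fixing $w=x-y$, partitioning $[n]$ according to whether $s(z_i,w):=\max_{g\in G}\langle z_i,gw\rangle - \max_{g\in G}\langle z_i,g(-w)\rangle$ is large, and checking that the "good" index set has size $\ge\lceil n/|G|^2\rceil$ using that $\max_g\langle z_i,gw\rangle+\max_g\langle z_i,g(-w)\rangle\ge 2\max_g|\langle z_i,gw\rangle|/|G|^{?}$ — the precise constant to be pinned down so the arithmetic closes.
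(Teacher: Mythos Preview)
Your upper Lipschitz bound is correct and matches the paper's argument exactly.

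Your lower Lipschitz bound argument, however, does not close, and you acknowledge as much (``the precise constant to be pinned down so the arithmetic closes''). The difficulty you are wrestling with is real: from the inequalities $\llangle[z_i],[x]\rrangle - \llangle[z_i],[y]\rrangle \ge \langle z_i, g_i(x-y)\rangle$ and its symmetric counterpart you \emph{cannot} conclude $|\llangle[z_i],[x]\rrangle - \llangle[z_i],[y]\rrangle| \ge c\,|\langle z_i, x-y\rangle|$ for any useful constant $c$, because $g_i$ maximizes $\langle z_i, gx\rangle$, not $\langle z_i, g(x-y)\rangle$, and these can behave very differently. Your orbit-averaging and sign heuristics do not produce a rigorous per-index bound, and the counting you sketch (``at least $\sim n/|G|^2$ coordinates of size $\ge\delta\|w\|$'') would, if it worked, give a lower Lipschitz constant of order $\sqrt{n}\,\delta/|G|$ rather than the stated $\delta$.

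The paper avoids this entirely by \emph{not} committing to the fixed direction $w = x-y$. It uses the exact identity
\[
\llangle[z_i],[x]\rrangle - \llangle[z_i],[y]\rrangle = \langle z_i,\, g_i x - h_i y\rangle
\]
(both sides attained, not just bounded), normalizes each summand by $\|g_i x - h_i y\| \ge d([x],[y])$, and then applies pigeonhole to the map $i \mapsto (g_i, h_i) \in G \times G$: some pair $(g,h)$ has preimage of size at least $k := \lceil n/|G|^2\rceil$, and on that preimage all the normalized directions coincide with the single unit vector $u := (gx-hy)/\|gx-hy\|$. Projective uniformity at level $k$ then guarantees that the maximum of $|\langle z_i, u\rangle|$ over these $\ge k$ indices is at least $\delta$, and that single term already yields $\|\Phi([x])-\Phi([y])\|^2 \ge \delta^2\, d([x],[y])^2$. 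This pigeonhole on $(g_i,h_i)$-pairs is precisely what generates the factor $|G|^2$ in the hypothesis, and it is the idea missing from your proposal.
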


\begin{proof}
The upper Lipschitz bound follows from Lemma~\ref{lem.max filter product properties}(g):
\[
\|\Phi([x])-\Phi([y])\|^2
=\sum_{i=1}^n|\llangle [z_i],[x]\rrangle-\llangle [z_i],[y]\rrangle|^2
\leq\|\{z_i\}_{i=1}^n\|_F^2\cdot d([x],[y])^2.
\]
For the lower Lipschitz bound, fix $x,y\in\mathbb{R}^d$ with $[x]\neq[y]$, and then for each $i\in\{1,\ldots,n\}$, select $g_i,h_i\in G$ such that $\llangle [z_i],[x]\rrangle=\langle z_i,g_ix\rangle$ and $\llangle [z_i],[y]\rrangle=\langle z_i,h_iy\rangle$.
Then
\begin{equation}
\label{eq.to lower bound}
\|\Phi([x])-\Phi([y])\|^2
=\sum_{i=1}^n\langle z_i,g_ix-h_iy\rangle^2
\geq\bigg(\sum_{i=1}^n\langle z_i,\tfrac{g_ix-h_iy}{\|g_ix-h_iy\|}\rangle^2\bigg)\cdot d([x],[y])^2,
\end{equation}
where the inequality follows from the bound $\|g_ix-h_iy\|\geq d([x],[y])$.
Next, consider the map $p\colon i\mapsto (g_i,h_i)$, and select $(g,h)\in G^2$ with the largest preimage.
By pigeonhole, we have $|p^{-1}(g,h)|\geq\lceil\frac{n}{|G|^2}\rceil=:k$, and so
\[
\sum_{i=1}^n\langle z_i,\tfrac{g_i x-h_i y}{\|g_i x-h_i y\|}\rangle^2
\geq \max_{i\in p^{-1}(g,h)}\langle z_i,\tfrac{g x-h y}{\|g x-h y\|}\rangle^2
\geq s_k\{\langle z_i,\tfrac{g x-h y}{\|g x-h y\|}\rangle^2\}_{i=1}^n
\geq \delta^2.
\]
Combining with \eqref{eq.to lower bound} gives the result.
\end{proof}

The following lemma gives that random templates exhibit projective uniformity.

\begin{lemma}[cf.\ Lemma~6.9 in~\cite{AlexeevBFM:14}]
\label{lem.random vectors have projective uniformity}
Select $p\in(0,1)$ and take
\begin{equation}
\label{eq.choice for delta}
\delta
:=(\tfrac{\pi}{128}\cdot\tfrac{p^2}{2d+3\log(4/p)})^{1/2},
\qquad
n\geq\tfrac{12d}{p}\log(\tfrac{2}{\delta}+1).
\end{equation}
Draw independent random vectors $z_1,\ldots,z_n\sim\mathsf{Unif}(S^{d-1})$.
Then $\{z_i\}_{i=1}^n$ exhibits $(\lceil pn\rceil,\delta)$-projective uniformity with probability $\geq1-e^{-pn/12}$.
\end{lemma}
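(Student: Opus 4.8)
The plan is to reduce the claim to a covering-number argument on the sphere $S^{d-1}$, following the template of Lemma~6.9 in~\cite{AlexeevBFM:14}. The key observation is that $(\lceil pn\rceil,\delta)$-projective uniformity is a statement that must hold uniformly over all unit vectors $x\in S^{d-1}$, so the natural strategy is (i) prove a pointwise lower bound on $s_{\lceil pn\rceil}\{|\langle z_i,x\rangle|\}_{i=1}^n$ that holds with overwhelming probability for a fixed $x$, (ii) pass to a net, and (iii) control the discretization error using the Lipschitz dependence of $s_k\{|\langle z_i,\cdot\rangle|\}$ on $x$. For the pointwise estimate, fix $x\in S^{d-1}$ and note that each $|\langle z_i,x\rangle|$ is an i.i.d.\ random variable; the event that $s_{\lceil pn\rceil}\{|\langle z_i,x\rangle|\}_{i=1}^n < 2\delta$ (say) means that at least $n-\lceil pn\rceil+1$ of the $n$ samples fall below $2\delta$. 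Writing $q:=\mathbb{P}\{|\langle z_1,x\rangle|<2\delta\}$, this is a binomial tail event, and provided $2\delta$ is chosen small enough that $q$ is comfortably below $1-p$ (here one uses that $|\langle z_1,x\rangle|$ has a bounded density near $0$, so $q\lesssim\delta\sqrt{d}$; the exact constants in the definition of $\delta$ in~\eqref{eq.choice for delta} are tuned to make $q\le p/2$), a Chernoff bound for the binomial gives failure probability at most $e^{-cpn}$ for an absolute constant $c$.

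Next I would build a $\delta'$-net $\mathcal{N}$ of $S^{d-1}$ for an appropriate $\delta'$ (comparable to $\delta$); standard volumetric bounds give $|\mathcal{N}|\le(2/\delta'+1)^d$, which is exactly why the requirement $n\ge\frac{12d}{p}\log(\frac2\delta+1)$ appears — it ensures $e^{-cpn}\cdot|\mathcal{N}|<1$, in fact $\le e^{-pn/12}$ after absorbing constants. Taking a union bound over $\mathcal{N}$, with probability $\ge 1-e^{-pn/12}$ we have $s_{\lceil pn\rceil}\{|\langle z_i,x\rangle|\}_{i=1}^n\ge 2\delta$ simultaneously for all $x\in\mathcal{N}$. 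To transfer this to arbitrary $x\in S^{d-1}$, pick the nearest net point $x'$ and observe that $\big|\,|\langle z_i,x\rangle|-|\langle z_i,x'\rangle|\,\big|\le\|z_i\|\,\|x-x'\|\le\delta'$; since shifting every coordinate of a vector by at most $\delta'$ shifts its $k$th-smallest entry by at most $\delta'$, we get $s_{\lceil pn\rceil}\{|\langle z_i,x\rangle|\}_{i=1}^n\ge 2\delta-\delta'\ge\delta$ once $\delta'\le\delta$. Homogenizing from $S^{d-1}$ to all of $\mathbb{R}^d$ by scaling gives the claimed bound $s_{\lceil pn\rceil}\{|\langle z_i,x\rangle|\}_{i=1}^n\ge\delta\|x\|$ for every $x$.

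The main obstacle is the careful bookkeeping of constants in the definition of $\delta$: one must verify that with $\delta$ as in~\eqref{eq.choice for delta}, the single-point failure probability is genuinely at most (something like) $e^{-pn/6}$ and that the net cardinality $(2/\delta'+1)^d$ is small enough for the union bound to close with slack $e^{-pn/12}$. This forces a quantitative estimate on the density of $|\langle z,x\rangle|$ for $z\sim\mathsf{Unif}(S^{d-1})$ near the origin — equivalently, on $\mathbb{P}\{|\langle z,x\rangle|<t\}$ for small $t$ — which scales like $t\sqrt d$ up to an absolute constant (this is where the $2d+3\log(4/p)$ and the factor $\pi/128$ come from). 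Apart from that, the argument is the standard net-plus-concentration scheme; the adaptation from~\cite{AlexeevBFM:14} is mostly a matter of replacing their specific parameter regime with the one recorded in~\eqref{eq.choice for delta} and tracking how $p$ enters both the binomial tail exponent and the net size.
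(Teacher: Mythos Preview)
Your approach is exactly the paper's: $\delta$-net on $S^{d-1}$, union bound, rewrite the failure at a fixed direction as a binomial tail with success probability $q=\mathbb{P}\{|\langle z_1,e_1\rangle|<2\delta\}$, show $q\le p/2$ via the Gaussian representation $z_1=g/\|g\|$ (this is where the $2d+3\log(4/p)$ and $\pi/128$ come from), apply Chernoff, and transfer from the net to all of $S^{d-1}$ by the $1$-Lipschitz dependence of $|\langle z_i,\cdot\rangle|$.

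There is one slip worth correcting. You write that $s_{\lceil pn\rceil}\{|\langle z_i,x\rangle|\}<2\delta$ means ``at least $n-\lceil pn\rceil+1$ of the $n$ samples fall below $2\delta$,'' and accordingly that one needs $q$ ``comfortably below $1-p$.'' But $s_k$ is the $k$th \emph{smallest} entry, so $s_{\lceil pn\rceil}<2\delta$ means at least $\lceil pn\rceil$ samples fall below $2\delta$; the Chernoff bound then requires $q<p$, not $q<1-p$. You immediately recover by saying the constants are tuned to make $q\le p/2$, which is the correct target and is exactly what the paper verifies---so the argument closes once this bookkeeping is fixed.
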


\begin{proof}
Put $k:=\lceil pn\rceil$, let $\mathcal{E}$ denote the failure event that $\{z_i\}_{i=1}^n$ does not have $(k,\delta)$-projective uniformity, and let $N_\delta$ denote a $\delta$-net of $S^{d-1}$ of minimum size.
Note that if $v$ is within $\delta$ of $x$, then for every $z_i$, it holds that
\[
|\langle z_i,v\rangle|
\leq |\langle z_i,x\rangle|+|\langle z_i,v-x\rangle|
\leq |\langle z_i,x\rangle|+\|v-x\|
\leq |\langle z_i,x\rangle|+\delta.
\]
Thus, we may pass to the $\delta$-net to get
\begin{align*}
\mathbb{P}(\mathcal{E})
&=\mathbb{P}\Big\{\text{ $\exists x\in S^{d-1}$ s.t.\ $s_k\{|\langle z_i,x\rangle|\}_{i=1}^n<\delta$ }\Big\}\\
&\leq\mathbb{P}\Big\{\text{ $\exists v\in N_\delta$ s.t.\ $s_k\{|\langle z_i,v\rangle|\}_{i=1}^n<2\delta$ }\Big\}\\
&\leq |N_\delta|\cdot\mathbb{P}\Big\{s_k\{|\langle z_i,e_1\rangle|\}_{i=1}^n<2\delta\Big\}
=|N_\delta|\cdot\mathbb{P}\Big\{\sum_{i=1}^n\mathbf{1}_{\{|\langle z_i,e_1\rangle|<2\delta\}}\geq k\Big\},
\end{align*}
where the second inequality applies the union bound and the rotation invariance of the distribution $\mathsf{Unif}(S^{d-1})$.
A standard volume comparison argument gives $|N_\delta|\leq(\frac{2}{\delta}+1)^d$. 
The final probability concerns a sum of independent Bernoulli variables with some success probability $q=q(d,\delta)$, which can be estimated using the multiplicative Chernoff bound:
\[
\mathbb{P}\Big\{\sum_{i=1}^n\mathbf{1}_{\{|\langle z_i,e_1\rangle|<2\delta\}}\geq k\Big\}
\leq \mathbb{P}\Big\{\sum_{i=1}^n\mathbf{1}_{\{|\langle z_i,e_1\rangle|<2\delta\}}\geq pn\Big\}
\leq \exp(-\tfrac{(p-q)^2}{p+q}\cdot n),
\]
provided $p>q$.
Next, we verify that $q(d,\delta)\leq\frac{p}{2}$.
Denoting $g\sim\mathsf{N}(0,I_d)$, we have
\begin{align*}
q
:=\mathbb{P}\{|\langle \tfrac{g}{\|g\|},e_1\rangle|<2\delta\}
&\leq \inf_{t>0}\Big(\mathbb{P}\{|\langle g,e_1\rangle|<2\delta t\}+\mathbb{P}\{\|g\|^2>t^2\}\Big)\\
&\leq \inf_{t\geq\sqrt{2d}}\Big(\sqrt{\tfrac{2}{\pi}}\cdot2\delta t+e^{-(t^2-2d)/3}\Big),
\end{align*}
where the final inequality uses the facts that $|\langle g,e_1\rangle|$ has half-normal distribution and $\|g\|^2$ has chi-squared distribution with $d$ degrees of freedom.
We select $t:=(2d+3\log(\frac{4}{p}))^{1/2}$ so that the second term equals $\frac{p}{4}$, and then our choice \eqref{eq.choice for delta} for $\delta$ ensures that the first term equals $\frac{p}{4}$.
Overall, we have
\begin{align*}
\mathbb{P}(\mathcal{E})
&\leq |N_\delta|\cdot\mathbb{P}\Big\{\sum_{i=1}^n\mathbf{1}_{\{|\langle z_i,e_1\rangle|<2\delta\}}\geq k\Big\}\\
&\leq (\tfrac{2}{\delta}+1)^d\cdot \exp(-\tfrac{(p-q)^2}{p+q}\cdot n)
\leq \exp(d\log(\tfrac{2}{\delta}+1)-\tfrac{pn}{6})
\leq e^{-pn/12},
\end{align*}
where the last step applied our assumption that $n\geq\frac{12d}{p}\log(\frac{2}{\delta}+1)$.
\end{proof}

\subsection{Mallat-type stability to diffeomorphic distortion}

In this subsection, we focus on the case in which $V=L^2(\mathbb{R}^d)$ and $G$ is the group of translation operators $T_a$ defined by $T_af(x):=f(x-a)$ for $a\in \mathbb{R}^d$.
Given a template $h\in L^2(\mathbb{R}^d)$, the corresponding max filter is
\[
\llangle [h],[f]\rrangle
=\sup_{a\in\mathbb{R}^d}\langle h,T_af \rangle
=\sup_{a\in\mathbb{R}^d}\int_{\mathbb{R}^d}h(x)f(x-a)dx
=\sup_{a\in\mathbb{R}^d}(Rh\star f)(a),
\]
where $R$ denotes the reversal operator defined by $Rh(x):=h(-x)$ and $\star$ denotes convolution.
(Of course, the supremum of a member of $L^2(\mathbb{R}^d)$ is not well defined, but $Rh\star f$ is continuous since $Rh, f\in L^2(\mathbb{R}^d)$.)

Our motivation for this setting stems from image analysis, in which case $d=2$.
For a familiar example, consider the task of classifying handwritten digits.
Intuitively, each class is translation invariant, and so it makes sense to treat images as members of $V/G$.
In addition, images that are slight elastic distortions of each other should be sent to nearby points in the feature domain.
The fact that image classification is invariant to such distortions has been used to augment the MNIST training set and boost classification performance~\cite{SimardSP:03}.
Instead of using data augmentation to learn distortion-invariant features, it is desirable to restrict to feature maps that already exhibit distortion invariance.
(Indeed, such feature maps would require fewer parameters to train.)
This compelled Mallat to introduce his \textit{scattering transform}~\cite{Mallat:12}, which has since played an important role in the theory of invariant machine learning~\cite{BrunaM:11,BrunaM:13,Waldspurger:17,GaoWH:19,PerlmutterGWH:19}.
Mallat used the following formalism to analyze the stability of the scattering transform to distortion.

Given a diffeomorphism $g\in C^1(\mathbb{R}^d)$, we consider the corresponding distortion operator $L_g$ defined by $L_gf(x):=f(g^{-1}(x))$.
It will be convenient to interact with the vector field $\tau:=\operatorname{id}-g^{-1}\in C^1(\mathbb{R}^d)$, since $L_gf(x)=f(x-\tau(x))$.
For example, if $\tau(x)=a$ for every $x\in\mathbb{R}^d$, then $L_g$ is translation by $a$.
In what follows, $J\tau(x)\in\mathbb{R}^{d\times d}$ denotes the Jacobian matrix of $\tau$ at $x$.

\begin{theorem}
\label{thm.mallat bound}
Take any continuously differentiable $h\in L^2(\mathbb{R}^d)$ for which
\begin{equation}
\label{eq.bounded decay}
(1+\|x\|_2)^{(d+1)/2} \cdot h(x)
\qquad
\text{and}
\qquad
(1+\|x\|_2)^{(d+3)/2}\cdot\nabla h(x)
\end{equation}
are bounded.
There exists $C(h)>0$ such that for every $f\in L^2(\mathbb{R}^d)$ and every diffeomorphism $g\in C^1(\mathbb{R}^d)$ for which $\tau:=\operatorname{id}-g^{-1}$ satisfies $\sup_{x\in\mathbb{R}^d}\|J\tau(x)\|_{2\to2}\leq\frac{1}{2}$, it holds that
\[
|\llangle[h],[f]\rrangle-\llangle[h],[L_gf]\rrangle|
\leq C(h)\cdot\|f\|_{L^2(\mathbb{R}^d)}\cdot\sup_{x\in\mathbb{R}^d}\|J\tau(x)\|_{2\to2}.
\]
\end{theorem}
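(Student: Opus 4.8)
The plan is to reduce the two-sided bound to a single \emph{transfer estimate} and then prove that estimate by a change of variables followed by Cauchy--Schwarz. Write $\eta:=\sup_{x\in\mathbb{R}^d}\|J\tau(x)\|_{2\to2}\le\tfrac12$, and recall that $g$ is a diffeomorphism with inverse $g^{-1}=\operatorname{id}-\tau$. The claim is that for every $a\in\mathbb{R}^d$,
\[
\big|\,\langle T_ah,f\rangle-\langle T_{g(a)}h,L_gf\rangle\,\big|\le C(h)\cdot\|f\|_{L^2(\mathbb{R}^d)}\cdot\eta .
\]
Granting this, the theorem follows quickly. For $\epsilon>0$, pick $a$ with $\langle T_ah,f\rangle\ge\llangle[h],[f]\rrangle-\epsilon$; then $\llangle[h],[L_gf]\rrangle\ge\langle T_{g(a)}h,L_gf\rangle\ge\llangle[h],[f]\rrangle-\epsilon-C(h)\|f\|_{L^2}\eta$, and $\epsilon\downarrow0$ gives one inequality. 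The reverse is symmetric: pick $b$ with $\langle T_bh,L_gf\rangle\ge\llangle[h],[L_gf]\rrangle-\epsilon$ and apply the claim with $a:=g^{-1}(b)$.

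To prove the claim, substitute $u=g^{-1}(x)$ in $\langle T_bh,L_gf\rangle=\int_{\mathbb{R}^d}h(x-b)\,f(g^{-1}(x))\,dx$ to obtain $\langle T_bh,L_gf\rangle=\int_{\mathbb{R}^d}h(g(u)-b)\,|\det Jg(u)|\,f(u)\,du$. Subtracting $\langle T_ah,f\rangle=\int_{\mathbb{R}^d}h(u-a)\,f(u)\,du$, substituting $v=u-a$, and applying Cauchy--Schwarz reduces the claim (with $b=g(a)$) to the inequality $\|\Psi\|_{L^2(\mathbb{R}^d)}\le C(h)\,\eta$, where
\[
\Psi(v):=h\big(g(v+a)-g(a)\big)\cdot\big|\det Jg(v+a)\big|-h(v).
\]
Taking $b=g(a)$ is exactly what absorbs the (possibly large) translational part of $L_g$ into the shifted template, leaving only the differential of $\tau$. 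Indeed $Jg(y)=(I-J\tau(g(y)))^{-1}$, so uniformly in $y$ we have $\|Jg(y)-I\|_{2\to2}\le\tfrac{\eta}{1-\eta}$ and $\big|\,|\det Jg(y)|-1\,\big|\le(1+\tfrac{\eta}{1-\eta})^{d}-1$; hence $g(v+a)-g(a)=\int_0^1Jg(a+sv)\,v\,ds=v+\rho(v)$ with $\|\rho(v)\|\le\tfrac{\eta}{1-\eta}\|v\|$, and each interpolating map $\psi_t\colon v\mapsto v+t\rho(v)$, $t\in[0,1]$, is a bi-Lipschitz diffeomorphism fixing the origin whose bi-Lipschitz constant and Jacobian determinant are controlled by $d$ alone. (This is where $\eta\le\tfrac12$ is used; moreover, for $\eta\ge\tfrac14$ one may instead use the crude bound $|\llangle[h],[f]\rrangle-\llangle[h],[L_gf]\rrangle|\le\|h\|_{L^2}(\|f\|_{L^2}+\|L_gf\|_{L^2})\le(1+2^{d/2})\|h\|_{L^2}\|f\|_{L^2}$ and fold it into $C(h)$, so we may assume $\eta<\tfrac14$.)

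Writing $\phi(v):=v+\rho(v)$ and $\theta(v):=|\det Jg(v+a)|-1$, decompose $\Psi=(h\circ\phi-h)+\theta\cdot(h\circ\phi)$. The second term satisfies $\|\theta\cdot(h\circ\phi)\|_{L^2}\le\|\theta\|_\infty\|h\circ\phi\|_{L^2}\lesssim_d\eta\,\|h\|_{L^2}$, using $\|\theta\|_\infty\lesssim_d\eta$ and the change of variables $\|h\circ\phi\|_{L^2}\lesssim_d\|h\|_{L^2}$. For the first term, $h(\phi(v))-h(v)=\int_0^1\nabla h(\psi_t(v))\cdot\rho(v)\,dt$, so Cauchy--Schwarz in $t$ and Tonelli give
\[
\|h\circ\phi-h\|_{L^2}^2\le\int_0^1\!\!\int_{\mathbb{R}^d}\|\rho(v)\|^2\,\|\nabla h(\psi_t(v))\|^2\,dv\,dt,
\]
and since $\|\rho(v)\|\le\tfrac{\eta}{1-\eta}\|v\|$ this is at most $(\tfrac{\eta}{1-\eta})^{2}\int_0^1\!\int_{\mathbb{R}^d}\|v\|^2\,\|\nabla h(\psi_t(v))\|^2\,dv\,dt$. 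For fixed $t$, substitute $w=\psi_t(v)$: because $\psi_t^{-1}$ is Lipschitz with a dimensional constant and fixes the origin, $\|v\|\lesssim_d\|w\|$, and the Jacobian of the substitution is $\lesssim_d 1$, so the inner integral is $\lesssim_d\int_{\mathbb{R}^d}\|w\|^2\,\|\nabla h(w)\|^2\,dw$. This integral is finite because $(1+\|x\|_2)^{(d+3)/2}\nabla h(x)$ is bounded, making the integrand $O(\|w\|^2(1+\|w\|)^{-(d+3)})$, which is integrable over $\mathbb{R}^d$; boundedness of $(1+\|x\|_2)^{(d+1)/2}h(x)$ likewise guarantees $h\in L^2(\mathbb{R}^d)$. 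Combining the two contributions gives $\|\Psi\|_{L^2}\le C(h)\,\eta$ with $C(h)$ depending only on $d$, $\|h\|_{L^2}$, and the decay bound for $\nabla h$, which proves the claim and hence the theorem.

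The main obstacle is the reparametrization above together with the estimate on $\|h\circ\phi-h\|_{L^2}$: one must choose the replacement translate $b=g(a)$ so that the residual displacement $\rho=\phi-\operatorname{id}$ vanishes at the reference point and grows at most linearly, with slope controlled by $\eta$, and then the stray factor $\|v\|$ produced by that linear growth has to be beaten by the rapid decay of $\nabla h$ --- which is precisely the role of the hypotheses \eqref{eq.bounded decay}. By comparison, the determinant bookkeeping and the bi-Lipschitz bounds for $g$ and the interpolants $\psi_t$ are routine, provided they are carried out with enough uniformity in $a$ and $t$ to survive the integrations.
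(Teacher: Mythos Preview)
Your argument is correct. At the top level it coincides with the paper's: both reduce the theorem to the uniform estimate
\[
\big|\langle T_a h,f\rangle-\langle T_{g(a)}h,L_gf\rangle\big|\le C(h)\,\|f\|_{L^2}\,\eta,
\]
which in the paper's language is the commutator bound $\|L_gZ_{Rh}-Z_{Rh}L_g\|_{L^2\to L^\infty}\le C(h)\eta$ (Lemma~\ref{lem.commutator bound}); your ``pick $a$ near the sup'' step is exactly the paper's change of variables $a=g^{-1}(a')$ followed by $\|\cdot\|_{L^\infty}$ domination. The technical execution, however, is different. The paper writes the commutator as an integral operator with kernel $k(x,u)=h(x-u)-\det(I-J\tau(u))\,h(x-u-\tau(x)+\tau(u))$, splits $k=k_1+k_2+k_3$ by inserting the linearization $J\tau(u)(x-u)$ of $\tau(x)-\tau(u)$, derives \emph{pointwise} bounds on $|k(x,u)|$ decaying like $\|x-u\|^{-(d+1)/2}$, and then invokes Young's inequality. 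You instead change variables first, land on $\Psi(v)=h(\phi(v))\,|\det Jg(v+a)|-h(v)$ with $\phi(v)=g(v+a)-g(a)$, use the two-term split $(h\circ\phi-h)+\theta\cdot(h\circ\phi)$, and bound $\|\Psi\|_{L^2}$ directly by the substitution $w=\psi_t(v)$. Your route avoids the auxiliary linearization (so two terms rather than three) and works entirely with $L^2$ norms rather than pointwise kernel decay; in particular it shows that the weaker hypotheses $h\in L^2$ and $\int\|w\|^2\|\nabla h(w)\|^2\,dw<\infty$ already suffice. The paper's pointwise kernel bound is more in the spirit of Mallat's original Lemma~E.1 and yields slightly more (an explicit decay profile for $k$), but your argument is shorter and more elementary.
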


This matches Mallat's bound~\cite{Mallat:12} on the stability of the scattering transform to diffeomorphic distortion, except our bound has no Hessian term.
The proof of Theorem~\ref{thm.mallat bound} follows almost immediately from the following modification of Lemma~E.1 in~\cite{Mallat:12}, which bounds the commutator between the filter and the distortion by the magnitude of the distortion:

\begin{lemma}
\label{lem.commutator bound}
Take $h\in L^2(\mathbb{R}^d)$ as in Theorem~\ref{thm.mallat bound}, and consider the linear operator $Z_h$ defined by $Z_hf:=h\star f$.
There exists $C(h)>0$ such that for every diffeomorphism $g\in C^1(\mathbb{R}^d)$ for which $\tau:=\operatorname{id}-g^{-1}$ satisfies $\sup_{x\in\mathbb{R}^d}\|J\tau(x)\|_{2\to2}\leq\frac{1}{2}$, it holds that
\[
\|L_gZ_h-Z_hL_g\|_{L^2(\mathbb{R}^d)\to L^\infty(\mathbb{R}^d)}
\leq C(h)\cdot\sup_{x\in\mathbb{R}^d}\|J\tau(x)\|_{2\to2}.
\]
\end{lemma}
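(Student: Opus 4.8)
The plan is to estimate the operator $L_gZ_h - Z_hL_g$ pointwise by writing its action on $f$ as an integral against a kernel and bounding the kernel carefully. Fix $f \in L^2(\mathbb{R}^d)$ and a point $x \in \mathbb{R}^d$. Unwinding the definitions, $(Z_hL_gf)(x) = \int h(x-y)f(y-\tau(y))\,dy$, while $(L_gZ_hf)(x) = (Z_hf)(x-\tau(x)) = \int h(x-\tau(x)-y)f(y)\,dy$. In the first integral I would substitute $u = g^{-1}(y) = y-\tau(y)$; since $\sup_x\|J\tau(x)\|_{2\to2}\le\frac12$, the map $g^{-1}$ is a $C^1$-diffeomorphism with Jacobian determinant bounded away from $0$ and $\infty$ (it lies in $[2^{-d},(3/2)^d]$, say), so the change of variables is legitimate and $dy = |\det Jg(u)|\,du$ where $g = (g^{-1})^{-1}$. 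After this substitution both terms are integrals of $f(u)$ against kernels, and the commutator becomes $\int K(x,u)f(u)\,du$ with
\[
K(x,u) = h\big(g(u) - \tau(x) - u\big)\cdot|\det Jg(u)| - h(x - \tau(x) - u).
\]
Wait — I should be more careful to line up the first argument; after relabeling, $g(u)$ plays the role of the old $y$, and the target point is $x$, so I would set things up so that $K(x,u) = h(x-u)|\det Jg(u)| - h(x-\tau(x)-u)$ evaluated with the appropriate shift, and then split $K$ into (i) a ``Jacobian defect'' term $h(x-u)\big(|\det Jg(u)|-1\big)$ and (ii) a ``displacement'' term $h(x-u) - h(x-\tau(x)-u)$, where in the displacement I still must track that the shift entering $h$ differs by $g(u)-u-\tau(x)$ versus $x-\tau(x)-u$; I expect after honest bookkeeping the net displacement between the two arguments of $h$ is $\tau(x)-\tau(u)$ up to terms controlled by $\|J\tau\|_\infty$, which is exactly the quantity we want on the right-hand side.

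For the displacement term, I would use the fundamental theorem of calculus along the segment joining the two arguments: the difference of the two $h$-evaluations is bounded by $|\tau(x)-\tau(u)|\cdot\sup_{\xi}|\nabla h(\xi)|$ over $\xi$ on that segment, and $|\tau(x)-\tau(u)|\le \|J\tau\|_\infty\,|x-u|$ by the mean value inequality. For the Jacobian-defect term, I would bound $\big||\det Jg(u)|-1\big|$: since $Jg(u) = (I - J\tau(g(u)))^{-1}$ with $\|J\tau\|_\infty\le\frac12$, a Neumann series / standard determinant estimate gives $\big||\det Jg|-1\big| \le c_d\,\|J\tau\|_\infty$ for a dimensional constant $c_d$. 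Thus in both cases we extract the factor $\sup_x\|J\tau(x)\|_{2\to2}$, and we are left needing $\int |h(x-u)|\,du$ and $\int |x-u|\sup_\xi|\nabla h(\xi)|\,du$ to be finite uniformly in $x$ — this is precisely what the decay hypothesis \eqref{eq.bounded decay} buys. Concretely, $(1+\|x\|)^{(d+1)/2}h(x)$ bounded gives $h\in L^1$ (the exponent $(d+1)/2 > d/2$ is not quite enough for $L^1$ by itself, but combined with the Cauchy–Schwarz split $\int|h| \le \|(1+\|\cdot\|)^{(d+1)/2}h\|_\infty \int (1+\|u\|)^{-(d+1)/2}du$ — hmm, that integral diverges). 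Let me instead use Cauchy–Schwarz in $u$: $\int |h(x-u)|\,du$ is handled by pairing the weighted bound on $h$ with the fact that after the displacement argument the relevant integral is really $\int |\nabla h(x-u)|\,(1+\|x-u\|)\,du$-type, where the weight $(1+\|x\|)^{(d+3)/2}\nabla h$ being bounded gives $\int (1+\|u\|)\cdot(1+\|u\|)^{-(d+3)/2}\,du = \int (1+\|u\|)^{-(d+1)/2}\,du$, still borderline. So the right move is a Cauchy–Schwarz splitting: bound $\int w(u)\,du$ where $w$ decays like $(1+\|u\|)^{-(d+1)/2}$ by writing $w = w^{1/2}\cdot w^{1/2}$ and... no — cleaner is to note $\big(\int w(u)\,du\big) \le \big(\int (1+\|u\|)^{d+1} w(u)^2\,du\big)^{1/2}\big(\int(1+\|u\|)^{-(d+1)}du\big)^{1/2}$, and now the second factor converges (exponent $d+1 > d$) and the first factor is exactly $\|(1+\|\cdot\|)^{(d+1)/2}h\|_{L^2}$-type, which is finite because $(1+\|\cdot\|)^{(d+1)/2}h$ is both bounded and (being a bounded function with integrable-square weight relative to $h\in L^2$)... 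I would phrase the hypothesis usage as: $(1+\|\cdot\|)^{(d+1)/2}h \in L^2$ and $(1+\|\cdot\|)^{(d+3)/2}\nabla h \in L^2$ follow from boundedness plus $h,\nabla h\in L^2$ via $\int g^2 \le \|g\|_\infty^{1/2}\cdots$ — actually simply: if $\phi$ is bounded and $\psi\in L^2$ then $\phi\psi\in L^2$, so $(1+\|\cdot\|)^{(d+1)/2}h\in L^2$ since it equals $[(1+\|\cdot\|)^{(d+1)/2}h/(1+\|\cdot\|)^{\epsilon}]\cdot$... I will just assert these weighted-$L^2$ memberships follow from the stated boundedness together with $h,\nabla h\in L^2$ by interpolating the weight, and collect the resulting finite constant as $C(h)$.

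Assembling: $|(L_gZ_h - Z_hL_g)f(x)| \le \int |K(x,u)|\,|f(u)|\,du \le \sup_x\|J\tau(x)\|_{2\to2}\cdot\int \Psi_h(x-u)\,|f(u)|\,du$, where $\Psi_h$ is a fixed nonnegative function built from $|h|$, $|\nabla h|$ and the linear weight, which I will have shown lies in $L^2(\mathbb{R}^d)$. By Cauchy–Schwarz (or Young's inequality with $L^2 \star L^2 \to L^\infty$), $\int \Psi_h(x-u)|f(u)|\,du \le \|\Psi_h\|_{L^2}\,\|f\|_{L^2}$ uniformly in $x$. Taking the supremum over $x$ gives $\|(L_gZ_h - Z_hL_g)f\|_{L^\infty} \le C(h)\,\|f\|_{L^2}\,\sup_x\|J\tau(x)\|_{2\to2}$ with $C(h) = \|\Psi_h\|_{L^2}$ (times the dimensional constant $c_d$ from the determinant estimate and the Jacobian-bounded-away-from-zero factors). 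I expect the main obstacle to be the careful change of variables in the first integral and tracking exactly which shifted arguments appear in $h$ after the substitution, so that the two terms genuinely differ by $\tau(x)-\tau(u)$ rather than by something uncontrolled — everything downstream (the FTC estimate, the determinant bound, the weighted-$L^2$ bookkeeping matching the exponents $(d+1)/2$ and $(d+3)/2$) is routine once the kernel is in the right form. Finally, Theorem~\ref{thm.mallat bound} follows since $\llangle[h],[f]\rrangle = \sup_a (Rh\star f)(a) = \|Z_{Rh}f\|_{L^\infty}$, and $|\,\|u\|_\infty - \|v\|_\infty\,| \le \|u-v\|_\infty$ with $v = L_g$-conjugated appropriately, so $|\llangle[h],[f]\rrangle - \llangle[h],[L_gf]\rrangle| \le \|Z_{Rh}L_gf - L_g Z_{Rh} f\|_{L^\infty} + (\text{boundary term from }L_g\text{ being an isometry up to Jacobian})$, and applying Lemma~\ref{lem.commutator bound} to $Rh$ (which satisfies the same decay hypotheses) closes it.
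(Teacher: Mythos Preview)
Your strategy---write the commutator as an integral operator, split the kernel into a Jacobian-defect piece and a displacement piece, extract $\sup\|J\tau\|$ from each, then use Cauchy--Schwarz (equivalently Young with $L^2\star L^2\to L^\infty$)---is the same as the paper's. But there is a genuine gap in your bookkeeping that propagates through the rest of the argument.

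When you substitute $u=g^{-1}(y)$ in $\int h(x-y)f(y-\tau(y))\,dy$, the argument of $h$ becomes $x-g(u)$, not $x-u$. So the honest kernel is
\[
K(x,u)=h\big(x-g(u)\big)\,|\det Jg(u)| - h\big(x-\tau(x)-u\big),
\]
and the displacement between the two arguments is $(x-g(u))-(x-\tau(x)-u)=\tau(x)-\tau(g(u))$, not $\tau(x)-\tau(u)$. This still obeys $|\tau(x)-\tau(g(u))|\le\|J\tau\|_\infty\,|x-g(u)|$, but now the decay of your bound is in $|x-g(u)|$, not $|x-u|$. Consequently your final step, bounding the commutator by a convolution $\int\Psi_h(x-u)|f(u)|\,du$, is not justified: the kernel is not of convolution type in $(x,u)$. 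One can still push through by substituting $v=g(u)$ at the end and absorbing the Jacobian, but that is exactly the extra factor of $\|L_g\|_{L^2\to L^2}$ that you have not accounted for.

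The paper avoids this entirely by first writing $L_gZ_h-Z_hL_g=-KL_g$ with $K:=Z_h-L_gZ_hL_g^{-1}$ and bounding $\|L_g\|_{L^2\to L^2}\le 2^d$ separately. The operator $K$ then has kernel
\[
k(x,u)=h(x-u)-\det(I-J\tau(u))\cdot h\big(x-u-\tau(x)+\tau(u)\big),
\]
whose arguments really do differ by $\tau(x)-\tau(u)$ and are both within $\tfrac12\|x-u\|$ of $x-u$. After splitting (the paper uses three pieces, though your two-piece split would also work here), one gets a pointwise bound $|k(x,u)|\le C_\infty(h)\,\|J\tau\|_\infty\,(1+\tfrac12\|x-u\|)^{-(d+1)/2}$, and then the $L^2\to L^\infty$ bound is immediate: the square of this decays like $(1+r)^{-(d+1)}$, which \emph{is} integrable over $\mathbb{R}^d$ since $d+1>d$. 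Your several attempts at the integrability step were chasing the wrong target; no weighted-$L^2$ interpolation is needed.
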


Assuming Lemma~\ref{lem.commutator bound} for the moment, we can prove Theorem~\ref{thm.mallat bound}.

\begin{proof}[Proof of Theorem~\ref{thm.mallat bound}]
The change of variables $a=g^{-1}(a')$ gives
\begin{align*}
|\llangle[h],[f]\rrangle-\llangle[h],[L_gf]\rrangle|
&=\Big|\sup_{a\in\mathbb{R}^d}(Z_{Rh}f)(a)-\sup_{b\in\mathbb{R}^d}(Z_{Rh}L_gf)(b)\Big|\\
&=\Big|\sup_{a'\in\mathbb{R}^d}(L_gZ_{Rh}f)(a')-\sup_{b\in\mathbb{R}^d}(Z_{Rh}L_gf)(b)\Big|\\
&\leq\|L_gZ_{Rh}f-Z_{Rh}L_gf\|_{L^\infty(\mathbb{R}^d)}\\
&\leq\|L_gZ_{Rh}-Z_{Rh}L_g\|_{L^2(\mathbb{R}^d)\to L^\infty(\mathbb{R}^d)}\cdot\|f\|_{L^2(\mathbb{R}^d)},
\end{align*}
and so the result follows from Lemma~\ref{lem.commutator bound}.
\end{proof}

The rest of this section proves Lemma~\ref{lem.commutator bound}.
Our proof follows some of the main ideas in the proof of Lemma~E.1 in~\cite{Mallat:12}.

\begin{proof}[Proof of Lemma~\ref{lem.commutator bound}]
Denote $K:=Z_h-L_gZ_hL_g^{-1}$.
Then $L_gZ_h-Z_hL_g=-KL_g$, and so
\[
\|L_gZ_h-Z_hL_g\|_{L^2(\mathbb{R}^d)\to L^\infty(\mathbb{R}^d)}
=\|KL_g\|_{L^2(\mathbb{R}^d)\to L^\infty(\mathbb{R}^d)}
\leq\|K\|_{L^2(\mathbb{R}^d)\to L^\infty(\mathbb{R}^d)}\|L_g\|_{L^2(\mathbb{R}^d)\to L^2(\mathbb{R}^d)}.
\]
We first bound the second factor.
For $f\in L^2(\mathbb{R}^d)$, a change of variables gives
\[
\|L_gf\|_{L^2(\mathbb{R}^d)}^2
=\int_{\mathbb{R}^d}f(g^{-1}(x))^2dx
=\int_{\mathbb{R}^d}f(u)^2|\operatorname{det}(Jg(u))|du
\leq\sup_{u\in\mathbb{R}^d}|\operatorname{det}(Jg(u))|\cdot\|f\|_{L^2(\mathbb{R}^d)}^2.
\]
For $x\in\mathbb{R}^d$, the fact that $\|J\tau(x)\|_{2\to2}\leq\frac{1}{2}$ implies
\[
\operatorname{det}(Jg^{-1}(x))
=\operatorname{det}(I_d-J\tau(x))
\geq(1-\|J\tau(x)\|_{2\to2})^d
\geq 2^{-d},
\]
and so combining with the above estimate gives
\begin{equation}
\label{eq.bound on L_g}
\|L_g\|_{L^2(\mathbb{R}^d)\to L^2(\mathbb{R}^d)}
\leq\sup_{u\in\mathbb{R}^d}|\operatorname{det}(Jg(u))|
=\sup_{x\in\mathbb{R}^d}|\operatorname{det}(Jg^{-1}(x))|^{-1}
\leq 2^d.
\end{equation}
It remains to bound $\|K\|_{L^2(\mathbb{R}^d)\to L^\infty(\mathbb{R}^d)}$.
To this end, one may verify that $K$ can be expressed as $Kf(x)=\int_{\mathbb{R}^d}k(x,u)f(u)du$, where the kernel $k$ is defined by
\[
k(x,u)
:=h(x-u)-\operatorname{det}(I_d-J\tau(u))\cdot h(x-u-\tau(x)+\tau(u)).
\]
We will bound the $L^2$ norms of every $k(x,\cdot)$ and $k(\cdot,u)$, and then appeal to Young's inequality for integral operators to bound $\|K\|_{L^2(\mathbb{R}^d)\to L^\infty(\mathbb{R}^d)}$.
We decompose $k=k_1+k_2+k_3$, where
\begin{align*}
k_1(x,u)
&:=h(x-u)-h\big((I_d-J\tau(u))(x-u)\big),\\
k_2(x,u)
&:=\big(1-\operatorname{det}(I_d-J\tau(u))\big)\cdot h\big((I_d-J\tau(u))(x-u)\big),\\
k_3(x,u)
&:=\operatorname{det}(I_d-J\tau(u))\cdot\Big(h\big((I_d-J\tau(u))(x-u)\big)-h(x-u-\tau(x)+\tau(u))\Big).
\end{align*}
First, we analyze $k_1$.
Letting $p_1\colon[0,1]\to\mathbb{R}^d$ denote the parameterized line segment of constant velocity from $(I_d-J\tau(u))(x-u)$ to $x-u$, we have
\[
k_1(x,u)
=\int_0^1\nabla h(p_1(t))\cdot J\tau(u)(x-u)~dt,
\]
and so
\begin{align}
\nonumber
|k_1(x,u)|
&\leq \sup_{t\in[0,1]}\|\nabla h(p_1(t))\|_2\cdot\|J\tau(u)(x-u)\|_2\\
\label{eq.k1 bound}
&\leq \sup_{t\in[0,1]}\|\nabla h(p_1(t))\|_2\cdot\sup_{z\in\mathbb{R}^d}\|J\tau(z)\|_{2\to2}\cdot\|x-u\|_2.
\end{align}
To bound the first factor, let $C_\infty(h)>0$ denote a simultaneous bound on the absolute value and $2$-norm of \eqref{eq.bounded decay}.
To use this, we bound $\inf_{t\in[0,1]}\|p_1(t)\|_2$ from below:
\begin{align*}
\|p_1(t)\|_2
=\|x-u+tJ\tau(u)(x-u)\|_2
&\geq\|x-u\|_2-t\|J\tau(u)(x-u)\|_2\\
&\geq(1-t\sup_{z\in\mathbb{R}^d}\|J\tau(u)\|_{2\to2})\cdot\|x-u\|_2
\geq\tfrac{1}{2}\|x-u\|_2.
\end{align*}
Then
\[
\sup_{t\in[0,1]}\|\nabla h(p_1(t))\|_2
\leq\sup_{t\in[0,1]}\frac{C_\infty(h)}{(1+\|p_1(t)\|_2)^{(d+3)/2}}
\leq\frac{C_\infty(h)}{(1+\frac{1}{2}\|x-u\|_2)^{(d+3)/2}},
\]
which allows us to further bound \eqref{eq.k1 bound}:
\begin{equation}
\label{eq.final bound on k1}
|k_1(x,u)|
\leq \frac{C_\infty(h)}{(1+\frac{1}{2}\|x-u\|_2)^{(d+3)/2}}\cdot\sup_{z\in\mathbb{R}^d}\|J\tau(z)\|_{2\to2}\cdot\|x-u\|_2.
\end{equation}
Next, we analyze $k_2$.
Since $\|J\tau(x)\|_{2\to2}\leq\frac{1}{2}$ by assumption, Bernoulli's inequality gives
\[
1-\operatorname{det}(I_d-J\tau(x))
\leq 1-(1-\|J\tau(x)\|_{2\to2})^d
\leq d\|J\tau(x)\|_{2\to2}
\leq d\sup_{z\in\mathbb{R}^d}\|J\tau(z)\|_{2\to2}.
\]
Also, the convexity bound $(1+t)^d\leq 1+(2^d-1)t$ for $t\in[0,1]$ implies
\begin{align*}
1-\operatorname{det}(I_d-J\tau(x))
&\geq1-\|I_d-J\tau(x)\|_{2\to2}^d\\
&\geq1-(1+\|J\tau(x)\|_{2\to2})^d
\geq-2^d\sup_{z\in\mathbb{R}^d}\|J\tau(z)\|_{2\to2}.
\end{align*}
Furthermore, we have
\[
\|(I_d-J\tau(x))(x-u)\|_2
\geq\|x-u\|_2-\|J\tau(u)\|_{2\to2}\|x-u\|_2
\geq\tfrac{1}{2}\|x-u\|_2,
\]
and so
\begin{align}
\nonumber
|k_2(x,u)|
&\leq \big|1-\operatorname{det}(I_d-J\tau(u))\big|\cdot \big|h\big((I_d-J\tau(u))(x-u)\big)\big|\\
\label{eq.final bound on k2}
&\leq 2^d\sup_{z\in\mathbb{R}^d}\|J\tau(z)\|_{2\to2}\cdot \frac{C_\infty(h)}{(1+\frac{1}{2}\|x-u\|_2)^{(d+1)/2}}.
\end{align}
Finally, we analyze $k_3$.
Put
\[
r
:=x-u-\tau(x)+\tau(u),
\qquad
s
:=\tau(x)-\tau(u)-J\tau(u)(x-u).
\]
Then letting $p_2\colon[0,1]\to\mathbb{R}^d$ denote the parameterized line segment of constant velocity from $r$ to $r+s$, we have
\[
k_3(x,u)
=\operatorname{det}(I_d-J\tau(u))\cdot\big(h(r+s)-h(r)\big)
=\operatorname{det}(I_d-J\tau(u))\int_0^1\nabla h(p_2(t))\cdot s~dt,
\]
and so
\begin{equation}
\label{eq.bound on k3}
|k_3(x,u)|
\leq|\operatorname{det}(I_d-J\tau(u))|\cdot\sup_{t\in[0,1]}\|\nabla h(p_2(t))\|_2\cdot \|s\|_2.
\end{equation}
For the first factor of \eqref{eq.bound on k3}, we have $|\operatorname{det}(I_d-J\tau(u))|\leq\|I_d-J\tau(u)\|_{2\to2}^d\leq (3/2)^d$.
To bound the second factor of \eqref{eq.bound on k3}, we use our bound $C_\infty(h)>0$ on \eqref{eq.bounded decay}.
To do so, we bound $\inf_{t\in[0,1]}\|p_2(t)\|_2$ from below.
First, we note that
\[
\tau(x)-\tau(u)
=\int_0^1J\tau(p_3(t))(x-u)dt,
\]
where $p_3\colon[0,1]\to\mathbb{R}^d$ is the parameterized line segment of constant velocity from $u$ to $x$.
Thus,
\begin{equation}
\label{eq.bound diff of taus}
\|\tau(x)-\tau(u)\|_2
\leq\int_0^1 \|J\tau(p_3(t))\|_{2\to2}\|x-u\|_2dt
\leq\sup_{z\in\mathbb{R}^d}\|J\tau(z)\|_{2\to2}\cdot\|x-u\|_2,
\end{equation}
and so for each $t\in[0,1]$, we have
\begin{align*}
\|p_2(t)\|_2
=\|r+ts\|_2
&=\|(x-u)-(1-t)(\tau(x)-\tau(u))-tJ\tau(u)(x-u)\|_2\\
&\geq\|x-u\|_2-(1-t)\|\tau(x)-\tau(u)\|-t\|J\tau(u)\|_{2\to2}\|x-u\|_2\\
&\geq(1-\sup_{z\in\mathbb{R}^d}\|J\tau(z)\|_{2\to2})\cdot\|x-u\|_2
\geq\tfrac{1}{2}\|x-u\|_2.
\end{align*}
Overall, we have
\[
\sup_{t\in[0,1]}\|\nabla h(p_2(t))\|_2
\leq\sup_{t\in[0,1]}\frac{C_\infty(h)}{(1+\|p_2(t)\|_2)^{(d+3)/2}}
\leq\frac{C_\infty(h)}{(1+\frac{1}{2}\|x-u\|_2)^{(d+3)/2}}.
\]
Finally, we apply \eqref{eq.bound diff of taus} to bound the third factor of \eqref{eq.bound on k3}:
\[
\|s\|_2
\leq\|\tau(x)-\tau(u)\|_2+\|J\tau(u)(x-u)\|_2
\leq2\cdot\sup_{z\in\mathbb{R}^d}\|J\tau(z)\|_{2\to2}\cdot\|x-u\|_2.
\]
We combine these estimates to obtain the following bound on \eqref{eq.bound on k3}:
\begin{equation}
\label{eq.final bound on k3}
|k_3(x,u)|
\leq (3/2)^d \cdot \frac{C_\infty(h)}{(1+\frac{1}{2}\|x-u\|_2)^{(d+3)/2}} \cdot 2\cdot\sup_{z\in\mathbb{R}^d}\|J\tau(z)\|_{2\to2}\cdot\|x-u\|_2.
\end{equation}
Finally, \eqref{eq.final bound on k1}, \eqref{eq.final bound on k2}, and \eqref{eq.final bound on k3} together imply
\begin{align*}
|k(x,u)|
&\leq|k_1(x,u)|+|k_2(x,u)|+|k_3(x,u)|\\
&\leq C_\infty(h)\cdot\sup_{z\in\mathbb{R}^d}\|J\tau(z)\|_{2\to2}\cdot\bigg(\frac{2^d}{(1+\frac{1}{2}\|x-u\|_2)^{(d+1)/2}}
+\frac{(2(\frac{3}{2})^d+1)\|x-u\|_2}{(1+\frac{1}{2}\|x-u\|_2)^{(d+3)/2}}\bigg).
\end{align*}
Importantly, this is a bounded function of $x-u$ that decays like $\|x-u\|_2^{-(d+1)/2}$.
By integrating the square, this simultaneously bounds the $L^2$ norm of every $k(x,\cdot)$ and $k(\cdot,u)$ by a quantity of the form $C_0(h)\cdot\sup_{z\in\mathbb{R}^d}\|J\tau(z)\|_{2\to2}$.
By Young's inequality for integral operators (see Theorem~0.3.1 in~\cite{Sogge:17}, for example), it follows that
\[
\|K\|_{L^2(\mathbb{R}^d)\to L^\infty(\mathbb{R}^d)}
\leq C_0(h)\cdot\sup_{z\in\mathbb{R}^d}\|J\tau(z)\|_{2\to2}.
\]
Combining with \eqref{eq.bound on L_g} then gives the result with $C(h):=2^d\cdot C_0(h)$.
\end{proof}

\section{Template selection for classification}
\label{sec.classification}

\subsection{Classifying characteristic functions}

In this subsection, we focus on the case in which $V=L^2(\mathbb{R}^d)$ and $G$ is the group of translation operators.
Suppose we have $k$ distinct $G$-orbits of indicator functions of compact subsets of $\mathbb{R}^d$.
According to the following result, there is a simple classifier based on a size-$k$ max filter bank that correctly classifies these orbits.
(This cartoon setting enjoys precursors in~\cite{CaulfieldM:69,CaulfieldH:80}.)

\begin{theorem}
\label{thm.classifying indicator functions}
Given compact sets $S_1,\ldots,S_k\subseteq\mathbb{R}^d$ of positive measure satisfying
\[
[\mathbf{1}_{S_i}]\neq [\mathbf{1}_{S_j}]
\qquad
\text{whenever}
\qquad
i\neq j,
\]
there exist templates $z_1,\ldots, z_k\in L^2(\mathbb{R}^d)$ satisfying
\[
\llangle [z_i],[\mathbf{1}_{S_j}]\rrangle<\llangle [z_i],[\mathbf{1}_{S_i}]\rrangle
\qquad
\text{whenever}
\qquad
i\neq j.
\]
\end{theorem}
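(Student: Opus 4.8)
The plan is to detect $S_i$ with (a perturbation of) its own indicator. The naive choice $z_i=\mathbf 1_{S_i}$ nearly works: one has $\llangle[\mathbf 1_{S_i}],[\mathbf 1_{S_j}]\rrangle=\sup_a|S_i\cap(S_j+a)|\le|S_i|=\llangle[\mathbf 1_{S_i}],[\mathbf 1_{S_i}]\rrangle$ (Lebesgue measure), with equality in the last step at $a=0$. This is strict unless some translate of $S_j$ covers $S_i$ up to a null set, and when that happens positivity of the template is fatal, since $z_i\ge0$ forces $\llangle[z_i],[\mathbf 1_{S_i}]\rrangle\le\llangle[z_i],[\mathbf 1_{S_j}]\rrangle$ as soon as a translate of $S_i$ sits inside $S_j$. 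So I would allow a negative lobe: fix a ball $B$ with $B\supseteq S_i+S_j-S_j$ for all $i,j$ (a finite union of compact sets, hence contained in some ball; note $S_i\subseteq B$ automatically), fix $\epsilon>0$, and set $z_i:=\mathbf 1_{S_i}-\epsilon\,\mathbf 1_{B\setminus S_i}\in L^2(\mathbb R^d)$.

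First I would check the diagonal value $\llangle[z_i],[\mathbf 1_{S_i}]\rrangle=|S_i|$. Indeed $\langle z_i,T_a\mathbf 1_{S_i}\rangle=|S_i\cap(S_i+a)|-\epsilon\,|(B\setminus S_i)\cap(S_i+a)|$; the first term is at most $|S_i|$ and the second is nonnegative, while at $a=0$ the expression equals $|S_i|$ because $B\setminus S_i$ and $S_i$ are disjoint. Taking the supremum over $a$ gives the claim, independently of $\epsilon$ and $B$.

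The substantive step is $\llangle[z_i],[\mathbf 1_{S_j}]\rrangle<|S_i|$ for $j\ne i$. Write $f(a):=\langle z_i,T_a\mathbf 1_{S_j}\rangle=|S_i\cap(S_j+a)|-\epsilon\,|(B\setminus S_i)\cap(S_j+a)|$, a continuous function of $a$; note $f(a)\le0$ whenever $S_j+a$ is disjoint from $S_i$, in particular whenever $a\notin K:=S_i-S_j$, so the supremum of $f$, if positive, is attained on the compact set $K$. Let $A:=\{a:|S_i\setminus(S_j+a)|=0\}$ be the set of covering translates; since a positive-measure intersection is nonempty one sees $A\subseteq S_i-S_j$, and $A$ is closed, hence compact. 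If $A=\varnothing$, then $a\mapsto|S_i\cap(S_j+a)|$ attains a maximum $M$, and $M<|S_i|$ (else the maximizer lies in $A$), so $\sup_a f\le M<|S_i|$. If $A\ne\varnothing$, pick $a_0\in A$: then $S_i\subseteq S_j+a_0$ a.e., so $|S_i|\le|S_j|$, and equality would give $\mathbf 1_{S_i}=T_{a_0}\mathbf 1_{S_j}$, i.e.\ $[\mathbf 1_{S_i}]=[\mathbf 1_{S_j}]$, contradicting the hypothesis; hence $\gamma:=|S_j|-|S_i|>0$. For $a\in A$ the choice of $B$ gives $S_j+a\subseteq B$, so $(B\setminus S_i)\cap(S_j+a)=(S_j+a)\setminus S_i$ has measure exactly $\gamma$, and thus $f\equiv|S_i|-\epsilon\gamma$ on $A$. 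To convert this into a uniform gap I would decompose via compactness: the continuous map $h(a):=|(B\setminus S_i)\cap(S_j+a)|$ equals $\gamma$ on $A$, so $U:=\{h>\gamma/2\}$ is an open neighborhood of $A$ on which $f<|S_i|-\epsilon\gamma/2$; on the compact set $K\setminus U$, which is disjoint from $A$, the continuous function $g(a):=|S_i|-|S_i\cap(S_j+a)|$ is strictly positive and hence $\ge\beta$ for some $\beta>0$, so $f\le|S_i|-\beta$ there; and $f\le0$ off $K$. Therefore $\sup_a f(a)\le|S_i|-\min(\epsilon\gamma/2,\beta)<|S_i|$.

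Since there are finitely many indices $j$, one $\epsilon$ (say $\epsilon=1$) and one ball $B$ work simultaneously, finishing the proof. The main obstacle is exactly the case $A\ne\varnothing$: there the negative lobe $-\epsilon\mathbf 1_{B\setminus S_i}$ is genuinely needed, and because the supremum defining a max filter need not be attained a priori one cannot stop at the pointwise strict inequality $f(a)<|S_i|$ — the work is in the neighborhood‑of‑$A$ argument that produces a uniform gap. A secondary point to state cleanly is the measure‑theoretic bookkeeping around "covers up to a null set," together with the harmless observation that translating the $S_i$ leaves every $\llangle[z],[\mathbf 1_{S_i}]\rrangle$ unchanged, which is what justifies packing all relevant translates into a single ball $B$.
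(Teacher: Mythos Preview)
Your proof is correct and uses the same template $z_i=\mathbf 1_{S_i}-\epsilon\,\mathbf 1_{B\setminus S_i}$ as the paper (which simply takes $\epsilon=1$ and $B$ the ball of radius $3r$), with the same diagonal computation and the same measure-theoretic reason for strictness. Your neighborhood-of-$A$ argument is sound but unnecessary: you already observed that $f$ is continuous and that $\sup f$, if positive, is attained on the compact set $K=S_i-S_j$, so it suffices to check $f(a_0)<|S_i|$ at a maximizer $a_0$---this is exactly how the paper proceeds, splitting instead by $|S_j|\le|S_i|$ versus $|S_j|\ge|S_i|$.
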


\begin{proof}
By compactness, there exists $r>0$ such that every $S_i$ is contained in the closed ball centered at the origin with radius $r$.
For reasons that will become apparent later, we take $B$ to be the closed ball centered at the origin with radius $3r$, and we define $z_i:=\mathbf{1}_{S_i}-\mathbf{1}_{B\setminus S_i}$.
Then for every $i$ and $j$, it holds that
\begin{equation}
\label{eq.bound on indicator template}
\llangle [z_i],[\mathbf{1}_{S_j}]\rrangle
=\sup_{a\in\mathbb{R}^d}\langle \mathbf{1}_{S_i}-\mathbf{1}_{B\setminus S_i},T_a\mathbf{1}_{S_j}\rangle
\leq \sup_{a\in\mathbb{R}^d}\langle \mathbf{1}_{S_i},T_a\mathbf{1}_{S_j}\rangle
=\llangle [\mathbf{1}_{S_i}],[\mathbf{1}_{S_j}]\rrangle.
\end{equation}
In the special case where $j=i$, this implies
\[
|S_i|
=\langle z_i,\mathbf{1}_{S_i}\rangle
\leq\llangle [z_i],[\mathbf{1}_{S_i}]\rrangle
\leq\llangle [\mathbf{1}_{S_i}],[\mathbf{1}_{S_i}]\rrangle
=|S_i|,
\]
and so $\llangle [z_i],[\mathbf{1}_{S_i}]\rrangle=|S_i|$.
We consider all $j\neq i$ in two cases.

\medskip
\noindent
\textbf{Case I:} $j\neq i$ and $|S_j|\leq|S_i|$.
Considering \eqref{eq.bound on indicator template}, it suffices to bound $\llangle [\mathbf{1}_{S_i}],[\mathbf{1}_{S_j}]\rrangle$.
Letting $R$ denote the reversal operator defined by $Rf(x):=f(-x)$, then
\[
\llangle [\mathbf{1}_{S_i}],[\mathbf{1}_{S_j}]\rrangle
=\sup_{a\in\mathbb{R}^d}(R\mathbf{1}_{S_i}\star\mathbf{1}_{S_j})(a).
\]
Since $R\mathbf{1}_{S_i},\mathbf{1}_{S_j}\in L^2(\mathbb{R}^d)$, it holds that the convolution $R\mathbf{1}_{S_i}\star\mathbf{1}_{S_j}$ is continuous, and since $S_i$ and $S_j$ are compact, the convolution has compact support.
Thus, the extreme value theorem gives that the convolution achieves its supremum, meaning there exists $a\in\mathbb{R}^d$ such that
\begin{equation}
\label{eq.indicators 2}
\llangle [\mathbf{1}_{S_i}],[\mathbf{1}_{S_j}]\rrangle
=\langle \mathbf{1}_{S_i},T_a\mathbf{1}_{S_j}\rangle
=|S_i\cap (S_j+a)|.
\end{equation}
Next, the assumptions $|S_j|\leq|S_i|$ and $[\mathbf{1}_{S_i}]\neq[\mathbf{1}_{S_j}]$ together imply
\begin{equation}
\label{eq.indicators 3}
|S_i\cap (S_j+a)|
<|S_i|.
\end{equation}
Indeed, equality in the bound $|S_i\cap (S_j+a)|\leq |S_i|$ is only possible if $S_i\subseteq S_j+a$ (modulo null sets), but since $|S_j|\leq|S_i|$ by assumption, this requires $S_i=S_j+a$ (modulo null sets), which violates the assumption $[\mathbf{1}_{S_i}]\neq[\mathbf{1}_{S_j}]$.
Overall, we combine \eqref{eq.bound on indicator template}, \eqref{eq.indicators 2}, and \eqref{eq.indicators 3} to get
\[
\llangle [z_i],[\mathbf{1}_{S_j}]\rrangle
\leq\llangle [\mathbf{1}_{S_i}],[\mathbf{1}_{S_j}]\rrangle
=|S_i\cap (S_j+a)|
<|S_i|
=\llangle [z_i],[\mathbf{1}_{S_i}]\rrangle.
\]

\medskip
\noindent
\textbf{Case II:} $j\neq i$ and $|S_j|\geq|S_i|$.
If $\llangle [z_i],[\mathbf{1}_{S_j}]\rrangle\leq0$, then 
\[
\llangle [z_i],[\mathbf{1}_{S_j}]\rrangle
\leq0
<|S_i|
=\llangle [z_i],[\mathbf{1}_{S_i}]\rrangle,
\]
and so we are done.
Now suppose $\llangle [z_i],[\mathbf{1}_{S_j}]\rrangle>0$.
Considering
\[
\llangle [z_i],[\mathbf{1}_{S_j}]\rrangle
=\sup_{a\in\mathbb{R}^d}(Rz_i\star\mathbf{1}_{S_j})(a),
\]
then by continuity and compactness, the extreme value theorem produces $a\in\mathbb{R}^d$ such that
\begin{equation}
\label{eq.indicators 4}
\llangle [z_i],[\mathbf{1}_{S_j}]\rrangle
=\langle z_i,T_a\mathbf{1}_{S_j}\rangle
=|S_i\cap(S_j+a)|-|(B\setminus S_i)\cap(S_j+a)|.
\end{equation}
Since $\llangle [z_i],[\mathbf{1}_{S_j}]\rrangle>0$, it follows that $|S_i\cap(S_j+a)|>0$, i.e., $S_i\cap(S_j+a)$ is nonempty, which in turn implies $S_j+a\subseteq B$.
(This is why we defined $B$ to have radius $3r$.)
As before, $|S_j|\geq|S_i|$ and $[\mathbf{1}_{S_i}]\neq [\mathbf{1}_{S_j}]$ together give $|S_i\cap(S_j+a)|<|S_j|$.
Thus,
\begin{equation}
\label{eq.indicators 5}
|(B\setminus S_i)\cap(S_j+a)|
=|B\cap(S_j+a)|-|S_i\cap(S_j+a)|
=|S_j+a|-|S_i\cap(S_j+a)|
>0.
\end{equation}
We combine \eqref{eq.indicators 4} and \eqref{eq.indicators 5} to get
\[
\llangle [z_i],[\mathbf{1}_{S_j}]\rrangle
=|S_i\cap(S_j+a)|-|(B\setminus S_i)\cap(S_j+a)|
<|S_i\cap(S_j+a)|
\leq|S_i|
=\llangle [z_i],[\mathbf{1}_{S_i}]\rrangle,
\]
as claimed.
\end{proof}

For each $i$, assume $S_i$ is translated so that it is contained in the smallest possible ball centered at the origin, and let $r_i$ denote the radius of this ball.
The proof of Theorem~\ref{thm.classifying indicator functions} gives that each template $z_i$ is supported in a closed ball of radius $R:=3\max_i r_i$.
The fact that these templates are localized bears some consequence for certain \textit{image articulation manifolds}~\cite{DonohoG:05}.
In particular, for each $\pi\colon\{1,\ldots,k\}\to\mathbb{N}\cup\{0\}$, let $M_\pi \subseteq L^2(\mathbb{R}^d)$ denote the manifold of images of the form
\[
\sum_{i=1}^k \sum_{j=1}^{\pi(i)} T_{a(i,j)}\mathbf{1}_{S_i},
\qquad
\text{where}
\qquad
\|a(i,j)-a(i',j')\|>4R
\quad
\forall (i,j)\neq(i',j').
\]
Thanks to the $4R$ spacing, each translate of each template interacts with at most one component of the image, and so for every $f\in M_\pi$, it holds that
\[
\llangle [z_i],[f]\rrangle
=\max_{i':\pi(i')>0}\llangle [z_i],[\mathbf{1}_{S_{i'}}]\rrangle.
\]
In particular, the same max filter bank can be used to determine the support of $\pi$.
As an example, if some multiset of characters are typed on a page in a common font and with sufficient separation, then the max filter bank from Theorem~\ref{thm.classifying indicator functions} that distinguishes the characters can be used to determine which ones appear on the page.

\subsection{Classifying mixtures of stationary processes}

In this subsection, we focus on the case in which $V=\mathbb{R}^n$ and $G\cong C_n$ is the group of circular translation operators.
A natural $G$-invariant probability distribution is a multivariate Gaussian with mean zero and circulant covariance, and so we consider the task of classifying a mixture of such distributions.
One-dimensional textures can be modeled in this way, especially if the covariance matrix has a small bandwidth so that distant pixels are statistically independent.
A standard approach for this problem is to estimate the first- and second-order moments given a random draw.
As we will soon see, one can alternatively classify with high accuracy by thresholding a single max filter.
In what follows, we make use of \textit{Thompson's part metric} on the set of positive definite matrices:
\[
d_\infty(A,B)
:=\|\log(A^{-1/2}BA^{-1/2})\|_{2\to2}.
\]
We also let $A_k$ denote the leading $k\times k$ principal submatrix of $A$.

\begin{theorem}
\label{thm.binary gmm classification}
Fix $C>\log2$, take $n,w\in\mathbb{N}$ such that $k:=\lfloor \sqrt{n/2}\rfloor\geq w$, and consider any positive definite $A,B\in\mathbb{R}^{n\times n}$ that are circulant with bandwidth $w$ and satisfy 
\begin{equation}
\label{eq.thompson distance bound}
d_\infty(A_k,B_k)\geq C.
\end{equation}
There exists $z\in\mathbb{R}^n$ supported on an interval of length $k$ and a threshold $\theta\in\mathbb{R}$ such that for every mixture $\mathsf{M}$ of $\mathsf{N}(0,A)$ and $\mathsf{N}(0,B)$, then given $x\sim\mathsf{M}$, the comparison
\[
\llangle[z],[x]\rrangle
\gtrless \theta
\]
correctly classifies the latent mixture component of $x$ with probability $1-o_{n\to\infty;C}(1)$.
\end{theorem}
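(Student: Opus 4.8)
The plan is to reduce the problem to comparing two Gaussian extreme values. Write $k=\lfloor\sqrt{n/2}\rfloor$. For a template $z$ supported on the interval $\{1,\dots,k\}$ we have
\[
\llangle[z],[x]\rrangle=\max_{a\in C_n}\langle z,T_ax\rangle,
\]
a maximum over the $n$ length-$k$ circular windows of $x$, and $\langle z,T_ax\rangle$ is a linear functional of $x$. The key structural fact is that when $x\sim\mathsf{N}(0,A)$ with $A$ circulant, every length-$k$ circular window of $x$ is distributed $\mathsf{N}(0,A_k)$ (immediate from the circulant structure, with no bandwidth hypothesis needed), so each $\langle z,T_ax\rangle\sim\mathsf{N}(0,z^\top A_kz)$; moreover, when $A$ has bandwidth $w\le k$, two length-$k$ windows separated by a circular gap exceeding $w$ on both sides are \emph{independent}, since their cross-covariance block vanishes and the joint law is Gaussian.

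First I would choose the template. Since $d_\infty(A_k,B_k)=\|\log(A_k^{-1/2}B_kA_k^{-1/2})\|_{2\to2}\ge C$, the symmetric positive definite matrix $M:=A_k^{-1/2}B_kA_k^{-1/2}$ has an eigenvalue $\lambda$ with $|\log\lambda|\ge C$; relabeling $A\leftrightarrow B$ (and flipping the classification rule accordingly) if necessary, assume $\lambda\ge e^C$, with unit eigenvector $v$. Let $z$ be $A_k^{-1/2}v$ padded by zeros outside $\{1,\dots,k\}$, so that $z^\top A_kz=1$ and $z^\top B_kz=v^\top Mv=\lambda\ge e^C$. Then under the $\mathsf{N}(0,A)$ component each $\langle z,T_ax\rangle\sim\mathsf{N}(0,1)$, while under the $\mathsf{N}(0,B)$ component each $\langle z,T_ax\rangle\sim\mathsf{N}(0,\sigma^2)$ with $\sigma:=(z^\top B_kz)^{1/2}\ge e^{C/2}>\sqrt2$. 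I would take $\theta:=c\sqrt{\log n}$ with $c:=\tfrac12(\sqrt2+e^{C/2})$ (note $\sqrt2<c<e^{C/2}$ precisely because $C>\log2$), and classify $x$ as coming from $\mathsf{N}(0,B)$ iff $\llangle[z],[x]\rrangle>\theta$.

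Next I would prove the two tail bounds, conditioning on the latent component. If $x$ comes from $\mathsf{N}(0,A)$: a union bound over the $n$ windows and the Gaussian tail give $\mathbb{P}\{\llangle[z],[x]\rrangle>\theta\}\le n\,e^{-\theta^2/2}=n^{1-c^2/2}\to0$, since $c^2>2$. If $x$ comes from $\mathsf{N}(0,B)$: using $2k^2\le n$ (hence $\lfloor n/(2k)\rfloor\ge k$) and $w\le k$, I can place $m:=\lfloor n/(2k)\rfloor\ge k$ disjoint length-$k$ windows, each followed by a buffer of length $k>w$ (the wrap-around gap is handled because $2km\le n$); the corresponding values $\langle z,T_ax\rangle$ are i.i.d.\ $\mathsf{N}(0,\sigma^2)$, so with $\bar\Phi(t):=\mathbb{P}\{\mathsf{N}(0,1)>t\}$,
\[
\mathbb{P}\{\llangle[z],[x]\rrangle\le\theta\}\le\bigl(1-\bar\Phi(\theta/\sigma)\bigr)^m\le\exp\bigl(-m\,\bar\Phi(\theta/\sigma)\bigr).
\]
Since $\theta/\sigma\le (c/e^{C/2})\sqrt{\log n}$ with $c/e^{C/2}<1$, the standard Gaussian lower-tail estimate gives $\bar\Phi(\theta/\sigma)\gtrsim(\log n)^{-1/2}\,n^{-c^2/(2e^C)}$, and $m\ge k\ge\sqrt{n/2}-1$ yields $m\,\bar\Phi(\theta/\sigma)\gtrsim(\log n)^{-1/2}\,n^{1/2-c^2/(2e^C)}$. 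The exponent $\tfrac12-\tfrac{c^2}{2e^C}$ is positive exactly when $e^{C/2}>c$, i.e.\ when $e^{C/2}>\sqrt2$, i.e.\ when $C>\log2$; hence this probability tends to $0$. Assembling: for any mixture $\mathsf{M}=p\,\mathsf{N}(0,A)+(1-p)\,\mathsf{N}(0,B)$, the misclassification probability is at most $\max\bigl(n^{1-c^2/2},\,\exp(-m\,\bar\Phi(\theta/\sigma))\bigr)=o_{n\to\infty;C}(1)$, uniformly in $p$.

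The main obstacle — and the reason the statement is calibrated at $C=\log2$ with $k=\lfloor\sqrt{n/2}\rfloor$ — is the lower bound. The maximum over all $n$ heavily overlapping windows can only be bounded below by the maximum over a set of genuinely independent windows, and the bandwidth forces that set to have size $\Theta(\sqrt n)$ rather than $\Theta(n)$ (this is precisely what $k\approx\sqrt n$ guarantees, even when $w$ is as large as $k$). This replaces the natural scale $\sigma\sqrt{2\log n}$ by $\sigma\sqrt{2\log m}=(1-o(1))\,\sigma\sqrt{\log n}$, a loss of a $\sqrt2$ factor, which is exactly the gap that the hypothesis $\sigma\ge e^{C/2}>\sqrt2$ closes. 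The remaining work is routine: verifying the window-law and independence claims from the circulant/bandwidth structure, the packing count $\lfloor n/(2k)\rfloor\ge k$, and the elementary inequalities $\sqrt2<c<e^{C/2}$ and $c^2>2$ that follow from $C>\log2$.
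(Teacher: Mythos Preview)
Your proposal is correct and follows essentially the same approach as the paper: the same template $z=A_k^{-1/2}v$ (padded), the same union bound for the $\mathsf{N}(0,A)$ component, and the same reduction to $\Theta(\sqrt{n})$ independent windows (via the bandwidth hypothesis) for the $\mathsf{N}(0,B)$ component. The only difference is cosmetic: the paper parametrizes with two constants $c_1=2(e^C/2)^{1/2}$ and $c_2=2(e^C/2)^{-1/2}$ and averages the resulting thresholds, whereas you pick a single $c=\tfrac12(\sqrt2+e^{C/2})$ from the start; both choices exploit the same gap $\sqrt2<e^{C/2}$ coming from $C>\log 2$. One small slip: you write ``buffer of length $k>w$'' but the hypothesis only gives $k\ge w$; this is harmless since the actual minimum circular distance between your windows is $k+1>w$.
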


\begin{proof}
First, we observe that $d_\infty(A_k,B_k)\geq C$ is equivalent to
\[
\max\Big\{\lambda_{\mathrm{max}}(A_k^{-1/2}B_k A_k^{-1/2}),\lambda_{\mathrm{max}}(B_k^{-1/2}A_k B_k^{-1/2})\Big\}
\geq e^C.
\]
Without loss of generality, we may assume $\lambda_{\mathrm{max}}(A_k^{-1/2}B_k A_k^{-1/2})\geq e^C$.
Let $v\in\mathbb{R}^k$ denote a corresponding unit eigenvector of $A_k^{-1/2}B_k A_k^{-1/2}$, and define $z\in\mathbb{R}^n$ to be supported in its first $k$ entries as the subvector $z_k:=A_k^{-1/2}v$.
Then
\begin{equation}
\label{eq.variance quotient}
\frac{z^\top Bz}{z^\top Az}
=\frac{z_k^\top B_k z_k}{z_k^\top A_k z_k}
=\frac{v^\top A_k^{-1/2}B_k A_k^{-1/2}v}{v^\top v}
=\lambda_{\mathrm{max}}(A_k^{-1/2}B_k A_k^{-1/2})
\geq e^C.
\end{equation}
Consider $x_1\sim\mathsf{N}(0,A)$.
Then $\llangle [z],[x]\rrangle=\max_{a\in C_n}\langle T_az,x_1\rangle$, where each $\langle T_az,x_1\rangle$ has Gaussian distribution with mean zero and variance $(T_az)^\top A(T_az)$, which in turn equals $z^\top Az$ since $A$ is circulant.
Denoting $Z\sim\mathsf{N}(0,1)$, a union bound then gives
\[
\mathbb{P}\big\{\llangle [z],[x_1]\rrangle\geq t\big\}
\leq n\cdot\mathbb{P}\big\{(z^\top Az)^{1/2}\cdot Z\geq t\big\}
\leq ne^{-t^2/(2z^\top Az)}
\]
for $t\geq0$.
This failure probability is $o_{n\to\infty;c_1}(1)$ by taking $t:=\sqrt{c_1\cdot z^\top Az\cdot\log n}$ for any $c_1>2$.
Next, consider $x_2\sim\mathsf{N}(0,B)$, and take any subset $S\subseteq C_n$ consisting of $k$ members of $C_n$ of pairwise distance at least $2k$.
Then
\[
\llangle [z],[x_2]\rrangle
=\max_{a\in C_n}\langle T_az,x_2\rangle
\geq\max_{a\in S}\langle T_az,x_2\rangle.
\]
In this case, each $\langle T_az,x_2\rangle$ has Gaussian distribution with mean zero and variance $z^\top Bz$.
Furthermore, since $k\geq w$, these random variables have pairwise covariance zero, and since they are Gaussian, they are therefore independent.
For independent $Z_1,\ldots,Z_k\sim\mathsf{N}(0,1)$ and $t\geq0$, we have
\[
\mathbb{P}\Big\{\max_{i\in\{1,\ldots,k\}}Z_i\leq t\Big\}
=\mathbb{P}\{Z\leq t\}^k
\leq(1-\tfrac{t}{1+t^2}\cdot\tfrac{1}{\sqrt{2\pi}}e^{-t^2/2})^k,
\]
where the last step follows from a standard lower bound on the tail of a standard Gaussian distribution.
Take $t:=\sqrt{c_2\log k}$ to get 
\[
\mathbb{P}\Big\{\max_{i\in\{1,\ldots,k\}}Z_i\leq \sqrt{c_2\log k}\Big\}
\leq\Big(1-(\tfrac{\sqrt{c_2\log k}}{1+c_2\log k}\cdot\tfrac{1}{\sqrt{2\pi}}\cdot k^{1-c_2/2})\cdot\tfrac{1}{k}\Big)^k
\leq\exp\Big(-\tfrac{\sqrt{c_2\log k}}{1+c_2\log k}\cdot\tfrac{1}{\sqrt{2\pi}}\cdot k^{1-c_2/2}\Big),
\]
which is $o_{k\to\infty;c_2}(1)$ when $c_2<2$.
Overall, we simultaneously have
\[
\llangle [z],[x_1]\rrangle
<\theta_1
:=\sqrt{c_1\cdot z^\top Az\cdot\log n},
\qquad
\llangle [z],[x_2]\rrangle
>\theta_2
:=\sqrt{\tfrac{1}{2}c_2\cdot z^\top Bz\cdot\log n}
\]
with probability $1-o_{n\to\infty;c_1,c_2}(1)$, provided $c_1>2>c_2$.
We select $c_1:=2(\tfrac{e^C}{2})^{1/2}>2$ and $c_2:=2(\tfrac{e^C}{2})^{-1/2}<2$, and then \eqref{eq.variance quotient} implies
\[
\frac{\theta_2^2}{\theta_1^2}
=\frac{\frac{1}{2}c_2\cdot z^\top Bz}{c_1\cdot z^\top Az}
=\frac{1}{e^C}\cdot\frac{z^\top Bz}{z^\top Az}
\geq1.
\]
Thus, $\theta_1\leq\theta_2$, and so the result follows by taking $\theta:=(\theta_1+\theta_2)/2$.
\end{proof}

Given a mixture of $k$ Gaussians with covariance matrices that pairwise satisfy~\eqref{eq.thompson distance bound}, then we can perform multiclass classification by a one-vs-one reduction.
Indeed, the binary classifier in Theorem~\ref{thm.binary gmm classification} can be applied to all $\binom{k}{2}$ pairs of Gaussians, in which case we correctly classify the latent mixture component with high probability (provided $k$ is fixed and $n\to\infty$).

Interestingly, max filters can also distinguish between stationary processes with \textit{identical} first- and second-order moments.
For example, $x\sim\mathsf{Unif}(\{\pm1\}^n)$ and $y\sim\mathsf{N}(0,I_n)$ are both stationary with mean zero and identity covariance.
If $z\in\mathbb{R}^n$ is a standard basis element, then with high probability, it holds that
\[
\llangle[z],[x]\rrangle
\leq1
\leq \sqrt{\log n}
\leq\llangle[z],[y]\rrangle.
\]
This indicates that max filters incorporate higher-order moments.

\subsection{Subgradients}

In this subsection, we focus on the case in which $V=\mathbb{R}^d$ and $G$ is a closed subgroup of $\operatorname{O}(d)$.
The previous subsections carefully designed templates to classify certain data models.
For real-world classification problems, one is expected to train a classifier on a given training set of labeled data.
To do so, one selects a parameterized family of classifiers and then locally minimizes some notion of training loss over this family.
This is feasible provided the classifier is a differentiable function of the parameters.
We envision a classifier in which the first layer is a max filter bank, and so this subsection establishes how to differentiate a max filter with respect to the template.
This is made possible by convexity; see Lemma~\ref{lem.max filter product properties}(d).

Every convex function $f\colon \mathbb{R}^d\to\mathbb{R}$ has a \textbf{subdifferential} $\partial f\colon \mathbb{R}^d\to 2^{\mathbb{R}^d}$ defined by
\[
\partial f(x)
:=\{u\in \mathbb{R}^d : f(x+h) \geq f(x) + \langle h, u\rangle 
~~ \forall h\in \mathbb{R}^d \}.
\]
For a fixed $x\in\mathbb{R}^d$ and $u\in\partial f(x)$, it is helpful to interpret the graph of $z\mapsto f(x) + \langle z-x, u\rangle$ as a supporting hyperplane of the epigraph of $f$ at $x$.
For example, the absolute value function over $\mathbb{R}$ has the following subdifferential:
\[
\partial|\cdot|(x)
=\left\{\begin{array}{cl}
\{-1\} & \text{if } x<0, \\
{[}-1,1] & \text{if } x=0, \\
\{1\} & \text{if } x>0.
\end{array}\right.
\]
Indeed, this gives the slopes of the hyperplanes that support the epigraph of $|\cdot|$ at $x\in\mathbb{R}$.

Following~\cite{Watson:92}, we will determine the subdifferential of the max filtering map by first finding its \textit{directional derivatives}.
In general, the directional derivative of $f$ at $x$ in the direction of $v\neq0$ is given by
\[
f'(x;v)
:=\lim_{t\to0^+}\frac{f(x+tv)-f(x)}{t}
=\sup_{u\in\partial f(x)}\langle u,v\rangle.
\]
The second equality is a standard result; see for example Theorem~23.4 in~\cite{Rockafellar:70}.
It will be convenient to denote the set
\[
G(x,y)
:=\arg\max_{g\in G}\langle x,gy\rangle.
\]
Observe that $G(x,y)$ is a closed subset of $G$ since the map $g\mapsto\langle x,gy\rangle$ is continuous.

\begin{lemma}
\label{lem.directional derivative}
Suppose $G$ is a closed subgroup of $\operatorname{O}(d)$ and select $x,y,v\in\mathbb{R}^d$ with $v\neq0$.
Then
\[
\llangle[\cdot],[y]\rrangle'(x;v)
=\max_{g\in G(x,y)}\langle v,gy\rangle.
\]
\end{lemma}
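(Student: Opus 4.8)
The plan is to run the standard envelope-theorem (Danskin-type) argument. Write $f:=\llangle[\cdot],[y]\rrangle\colon\mathbb{R}^d\to\mathbb{R}$, so that $f(x)=\max_{g\in G}\langle x,gy\rangle$, the maximum being attained because $G$ is a closed, hence compact, subgroup of $\operatorname{O}(d)$. By Lemma~\ref{lem.max filter product properties}(b) and (d), $f$ is convex, and since it is finite-valued on all of $\mathbb{R}^d$ it is continuous. Convexity also makes the difference quotient $t\mapsto t^{-1}\big(f(x+tv)-f(x)\big)$ nondecreasing on $(0,\infty)$, so the limit defining $f'(x;v)$ exists and may be evaluated along any sequence $t_n\downarrow 0$. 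Finally, $G(x,y)$ is a closed subset of the compact group $G$, hence compact, so $g\mapsto\langle v,gy\rangle$ attains its maximum on $G(x,y)$; thus the right-hand side of the claimed identity is well defined, and it remains to prove two inequalities.

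For ``$\geq$'', I would fix $g\in G(x,y)$. Then $\langle x,gy\rangle=f(x)$ while $f(x+tv)\geq\langle x+tv,gy\rangle$, so $f(x+tv)-f(x)\geq t\langle v,gy\rangle$ for all $t>0$; dividing by $t$ and letting $t\downarrow 0$ gives $f'(x;v)\geq\langle v,gy\rangle$, and taking the maximum over $g\in G(x,y)$ finishes this direction.

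For ``$\leq$'', I would pick a sequence $t_n\downarrow 0$ and, for each $n$, a maximizer $g_n\in G(x+t_nv,y)$, so $f(x+t_nv)=\langle x+t_nv,g_ny\rangle$. By compactness of $G$, after passing to a subsequence $g_n\to g_\ast\in G$, and then continuity of $f$ and of the inner product gives $\langle x,g_\ast y\rangle=\lim_n\langle x+t_nv,g_ny\rangle=\lim_n f(x+t_nv)=f(x)$, so $g_\ast\in G(x,y)$. Using $f(x)\geq\langle x,g_ny\rangle$ we get
\[
\frac{f(x+t_nv)-f(x)}{t_n}\leq\frac{\langle x+t_nv,g_ny\rangle-\langle x,g_ny\rangle}{t_n}=\langle v,g_ny\rangle ,
\]
and letting $n\to\infty$ the left side tends to $f'(x;v)$ while the right side tends to $\langle v,g_\ast y\rangle\leq\max_{g\in G(x,y)}\langle v,gy\rangle$, as needed.

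The one place needing care is the compactness step: one must extract a convergent subsequence of the near-optimal elements $g_n$ and confirm its limit lies in $G(x,y)$, which is exactly where closedness of $G$ and continuity of $f$ enter. The rest is the routine pairing of a linear lower bound with an upper bound on difference quotients, as in the proof of Danskin's theorem.
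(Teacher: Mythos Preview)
Your proof is correct and follows essentially the same Danskin-type argument as the paper: bound the difference quotient from below using any fixed $g\in G(x,y)$, bound it from above using maximizers $g_t\in G(x+tv,y)$, extract a convergent subsequence by compactness of $G$, and verify the limit lies in $G(x,y)$. The only cosmetic difference is that you explicitly invoke convexity (via Lemma~\ref{lem.max filter product properties}(b),(d)) to guarantee the directional derivative exists and can be computed along any sequence $t_n\downarrow 0$, whereas the paper leaves this implicit and squeezes the limit directly between matching upper and lower bounds.
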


\begin{proof}
For each $t>0$ and $g\in G(x,y)$, we have
\begin{equation}
\label{eq.directional derivative 1}
\llangle[x+tv],[y]\rrangle
\geq\langle x+tv,gy\rangle
=\langle x,gy\rangle+t\langle v,gy\rangle
=\llangle [x],[y]\rrangle+t\langle v,gy\rangle.
\end{equation}
For each $t>0$, select $g_t\in G(x+tv,y)$.
Then
\begin{equation}
\label{eq.directional derivative 2}
\llangle[x+tv],[y]\rrangle
=\langle x+tv,g_ty\rangle
=\langle x,g_ty\rangle+t\langle v,g_ty\rangle
\leq\llangle [x],[y]\rrangle+t\langle v,g_ty\rangle.
\end{equation}
We rearrange and combine \eqref{eq.directional derivative 1} and \eqref{eq.directional derivative 2} to get
\begin{equation}
\label{eq.directional derivative 3}
\max_{g\in G(x,y)}\langle v,gy\rangle
\leq \frac{\llangle[x+tv],[y]\rrangle-\llangle[x],[y]\rrangle}{t}
\leq \langle v,g_ty\rangle
\end{equation}
for all $t>0$.
Select a sequence $t_n\to0^+$ such that $g_{t_n}$ converges to some $g^\star\in G$ (by passing to a subsequence if necessary).
Then continuity implies
\[
\langle x+t_nv,g_{t_n}y\rangle
=\llangle [x+t_nv],[y]\rrangle
\to \llangle [x],[y]\rrangle.
\]
Furthermore,
\begin{align*}
|\langle x+t_nv,g_{t_n}y\rangle-\langle x,g^\star y\rangle|
&\leq|\langle x,g_{t_n}y\rangle-\langle x,g^\star y\rangle|+|\langle t_nv,g_{t_n}y\rangle|\\
&\leq \|x\|\|g_{t_n}-g^\star\|_{2\to2} \|y\|+t_n\|v\|\|y\|
\to0.
\end{align*}
It follows that $\langle x,g^\star y\rangle=\llangle [x],[y]\rrangle$, i.e., $g^\star\in G(x,y)$.
Taking limits of \eqref{eq.directional derivative 3} then gives
\[
\max_{g\in G(x,y)}\langle v,gy\rangle
\leq \lim_{t\to0^+}\frac{\llangle[x+tv],[y]\rrangle-\llangle[x],[y]\rrangle}{t}
\leq \lim_{t\to0^+}\langle v,g_ty\rangle
=\langle v,g^\star y\rangle
\leq \max_{g\in G(x,y)}\langle v,gy\rangle.
\qedhere
\]
\end{proof}

\begin{theorem}
Suppose $G$ is a closed subgroup of $\operatorname{O}(d)$ and select $x,y\in\mathbb{R}^d$.
Then
\[
\partial\llangle [\cdot],[y]\rrangle(x)
=\operatorname{conv}\{gy:g\in G(x,y)\}.
\]
\end{theorem}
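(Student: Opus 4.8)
The plan is to exploit the duality between subdifferentials and directional derivatives, which the paper has already set up, and to combine it with Lemma~\ref{lem.directional derivative}. The key structural fact is that a nonempty closed convex subset of $\mathbb{R}^d$ is determined by its support function $v\mapsto\sup_{u\in K}\langle u,v\rangle$, so it suffices to check that $\partial\llangle[\cdot],[y]\rrangle(x)$ and $C:=\operatorname{conv}\{gy:g\in G(x,y)\}$ have the same support function.

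First I would record that both sides are nonempty compact convex sets. Since $G$ is a closed subgroup of the compact group $\operatorname{O}(d)$, it is compact, and $G(x,y)$ is a closed (hence compact) subset of $G$, so $\{gy:g\in G(x,y)\}$ is a compact subset of $\mathbb{R}^d$ and $C$ is compact convex by Carath\'eodory. On the other side, $f:=\llangle[\cdot],[y]\rrangle$ is convex and finite on all of $\mathbb{R}^d$ by Lemma~\ref{lem.max filter product properties}(d), so $\partial f(x)$ is a nonempty compact convex set.

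Next I would compute the two support functions. For $\partial f(x)$, the support function in a direction $v\neq 0$ is exactly the directional derivative $f'(x;v)=\sup_{u\in\partial f(x)}\langle u,v\rangle$ (the standard identity quoted via Theorem~23.4 in~\cite{Rockafellar:70}), and Lemma~\ref{lem.directional derivative} identifies this with $\max_{g\in G(x,y)}\langle v,gy\rangle$. For $C$, the support function in direction $v$ is $\sup_{u\in C}\langle u,v\rangle=\sup_{g\in G(x,y)}\langle gy,v\rangle$, since the support function of a set and of its convex hull coincide; by compactness of $G(x,y)$ this supremum is attained, so it equals $\max_{g\in G(x,y)}\langle v,gy\rangle$ as well. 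The two support functions therefore agree for every $v\neq 0$, and they trivially agree at $v=0$ (both equal $0$), so $\partial f(x)=C$.

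The argument is essentially immediate given Lemma~\ref{lem.directional derivative}; the only point that needs care — and the one I would flag as the main obstacle — is ensuring both sets are genuine closed convex sets so that equality of support functions forces equality of the sets. This is precisely where compactness of $G$ (hence of $G(x,y)$, hence of $\{gy:g\in G(x,y)\}$) and finiteness of $f$ on $\mathbb{R}^d$ enter; without the closedness of $G$ one would at best get the closed convex hull of $\{gy:g\in G(x,y)\}$ on the right-hand side.
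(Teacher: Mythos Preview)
Your proof is correct and uses the same essential ingredients as the paper, chiefly Lemma~\ref{lem.directional derivative} and the duality between directional derivatives and the subdifferential. The packaging differs slightly: the paper verifies the inclusion $\supseteq$ by checking the subgradient inequality for each $gy$ directly, and then obtains $\subseteq$ by a separating-hyperplane contradiction, whereas you handle both inclusions at once by matching support functions. Your route is a bit more streamlined but requires establishing compactness of $C$ up front so that equality of support functions yields equality of sets; the paper's direct verification of $\supseteq$ sidesteps that, needing only convexity of $\partial f(x)$. Either way, the content is the same.
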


\begin{proof}
First, we claim that for every $g\in G(x,y)$, the vector $gy$ is in the subdifferential $\partial\llangle [\cdot],[y]\rrangle(x)$.
Indeed, for every $h\in\mathbb{R}^d$, we have
\[
\llangle [x+h],[y]\rrangle
\geq \langle x+h,gy\rangle
= \langle x,gy\rangle + \langle h,gy\rangle
= \llangle [x],[y]\rrangle + \langle h,gy\rangle,
\]
where the last step uses the fact that $g\in G(x,y)$.
Since $\partial\llangle [\cdot],[y]\rrangle(x)$ is convex, it follows that the $\supseteq$ portion of the desired result holds.
Next, suppose there exists $w\in\partial\llangle [\cdot],[y]\rrangle(x)$ such that $w\not\in\operatorname{conv}\{gy:g\in G(x,y)\}$.
Then there exists a hyperplane that separates $w$ from $\{gy:g\in G(x,y)\}$, i.e., there is a nonzero vector $v\in\mathbb{R}^d$ such that
\[
\langle v,gy\rangle
<\langle v,w\rangle
\]
for every $g\in G(x,y)$.
By Lemma~\ref{lem.directional derivative}, it follows that
\[
\llangle[\cdot],[y]\rrangle'(x;v)
=\max_{g\in G(x,y)}\langle v,gy\rangle
<\langle w,v\rangle
\leq \sup_{u\in\partial \llangle[\cdot],[y]\rrangle(x)}\langle u,v\rangle
=\llangle[\cdot],[y]\rrangle'(x;v),
\]
a contradiction.
This establishes the $\subseteq$ portion of the desired result.
\end{proof}

\subsection{Random templates and limit laws}

While the previous subsection was concerned with the differentiability needed to optimize a feature map, it is well known that random feature maps suffice for various tasks (e.g., Johnson--Lindenstrauss maps~\cite{JohnsonL:84} and random kitchen sinks~\cite{RahimiR:08}).
In fact, our bilipschitz result (Theorem~\ref{thm.main result}) uses random templates.
As such, one may be inclined to use random templates to produce features for classification.

In this subsection, we focus on the case in which $V=\mathbb{R}^d$ and $G\cong S_d$ is the group of $d\times d$ permutation matrices.
Consider a max filter bank consisting of independent standard gaussian templates $z_1,\ldots,z_n\in\mathbb{R}^d$.
Then
\[
\llangle [z_i],[x]\rrangle
=\langle \operatorname{sort}(z_i),\operatorname{sort}(x)\rangle.
\]
When $d$ is large, we expect the vectors $\operatorname{sort}(z_i)$ to exhibit little variation, and so we are inclined to perform dimensionality reduction.
Figure~\ref{fig.sorted gaussians} illustrates that the principal components of $\{\operatorname{sort}(z_i)\}_{i=1}^n$ exhibit a high degree of regularity.

\begin{figure}
\begin{center}
\includegraphics[width=0.45\textwidth,trim={110 210 110 210},clip]{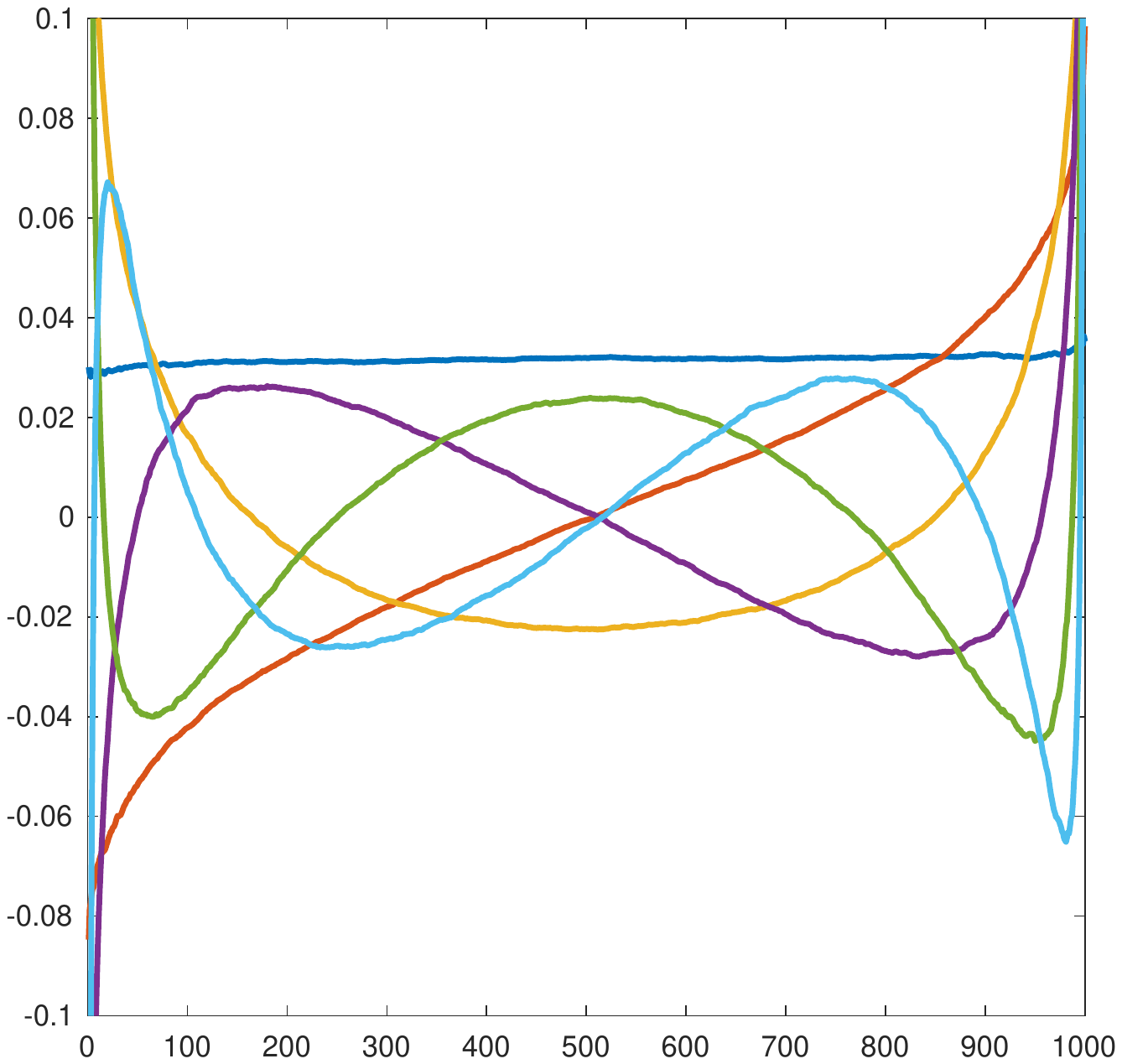}
\quad
\includegraphics[width=0.45\textwidth,trim={110 210 110 210},clip]{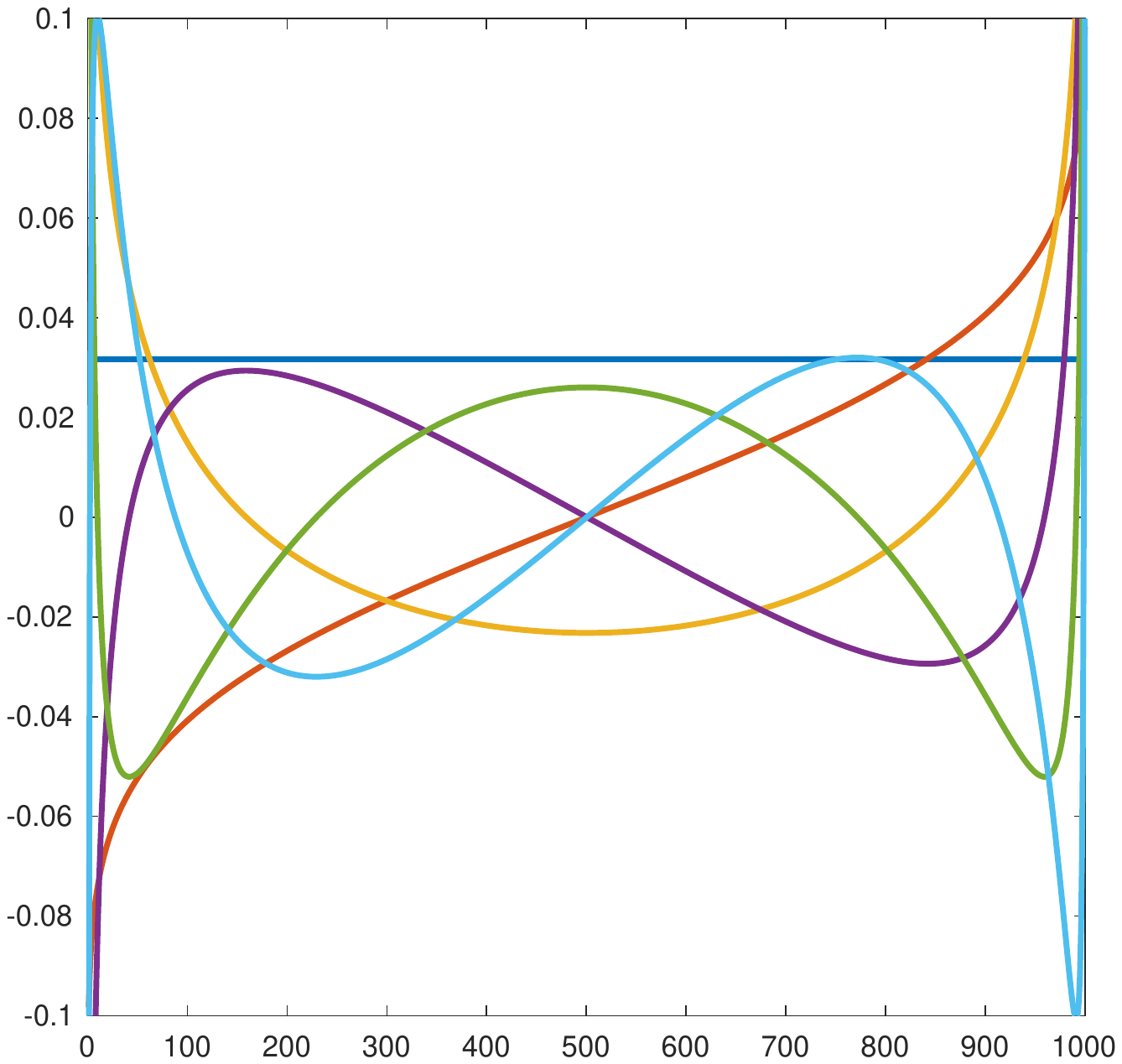}
\end{center}
\caption{\label{fig.sorted gaussians}
\textbf{(left)}
Draw $n=10,000$ independent standard gaussian random vectors in $\mathbb{R}^d$ with $d=1,000$, sort the coordinates of each vector, and then plot the top $6$ eigenvectors of the resulting sample covariance matrix.
\textbf{(right)}
Discretize and plot the top $6$ eigenfunctions described in Theorem~\ref{thm.eigenfunctions}.
}
\end{figure}

To explain this regularity, select $Q\colon(0,1)\to\mathbb{R}$ so that $Q^{-1}$ is the cumulative distribution function of the standard normal distribution.
Take $z\in\mathsf{N}(0,I_d)$ and put $s:=\operatorname{sort}(z)$.
Denoting $p_i:=\frac{i}{d+1}$ and $q_i:=1-p_i$, then Section~4.6 in~\cite{HerbertN:04} gives
\[
\mathbb{E}[s_i]=Q(p_i)+O(\tfrac{1}{d}),
\qquad
\operatorname{Cov}(s_i,s_j)=\tfrac{1}{d+2}p_iq_jQ'(p_i)Q'(p_j)+O(\tfrac{1}{d^{2}}),
\qquad
i\leq j.
\]
The principal components of $\{\operatorname{sort}(z_i)\}_{i=1}^n$ approximate the top eigenvectors of the covariance matrix, which in turn approximate discretizations of eigenfunctions of the integral operator with kernel $K\colon(0,1)^2\to\mathbb{R}$ defined by
\[
K(x,y):=\min\{x,y\}\cdot(1-\max\{x,y\})\cdot Q'(x)\cdot Q'(y).
\]
The following result expresses these eigenfunctions in terms of the \textit{probabilist's Hermite polynomials}, which are defined by the following recurrence:
\begin{equation}
\label{eq.hermite recurrence}
p_0(x)
:=1,
\qquad
p_{n+1}(x)
:=xp_n(x)-p_n'(x).
\end{equation}

\begin{theorem}
\label{thm.eigenfunctions}
The integral operator $L\colon L^2([0,1])\to L^2([0,1])$ defined by 
\[
Lf(x):=\int_0^1 K(x,y)f(y)dy
\]
has eigenvalue $\frac{1}{n+1}$ with corresponding eigenfunction $p_n\circ Q$ for each $n\in\mathbb{N}\cup\{0\}$.
\end{theorem}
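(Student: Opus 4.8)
The kernel factors as $K(x,y)=g(x,y)\,Q'(x)Q'(y)$, where $g(x,y):=\min\{x,y\}\cdot(1-\max\{x,y\})$ is the Green's function of $-\tfrac{d^2}{dx^2}$ on $[0,1]$ with Dirichlet boundary conditions. The plan is to exploit this, together with the identity $Q'(x)\,\varphi(Q(x))=1$ (where $\varphi$ denotes the standard normal density; since $Q^{-1}$ is the corresponding cumulative distribution function, the inverse function theorem gives $Q'=1/(\varphi\circ Q)$), and two elementary facts about the probabilist's Hermite polynomials: $(p_n\varphi)'=-p_{n+1}\varphi$ and $p_{n+1}'=(n+1)p_n$. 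The first follows from $\varphi'=-t\varphi$ and the recurrence~\eqref{eq.hermite recurrence}, since $(p_n\varphi)'=(p_n'-tp_n)\varphi=-(tp_n-p_n')\varphi=-p_{n+1}\varphi$. The second follows by a standard induction on $n$: differentiating~\eqref{eq.hermite recurrence} and invoking the inductive hypotheses $p_n'=np_{n-1}$ and $p_{n-1}'=(n-1)p_{n-2}$ yields $p_{n+1}'=(n+1)p_n$.

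Fix $n$ and write $f:=p_n\circ Q$. First I would note that $f\in L^2([0,1])$, with $\|f\|_2^2=\int_{\mathbb{R}}p_n^2\,\varphi=n!$, and that the integral defining $Lf(x)$ converges for each $x\in(0,1)$: although $Q'(y)f(y)$ grows as $y\to0,1$, it does so only like a power of $\log$, while the factor $g(x,y)$ vanishes linearly there, so $g(x,\cdot)Q'f$ is integrable on $[0,1]$. Next I would introduce the candidate primitive $w(x):=\tfrac{1}{n+1}\,(p_n\varphi)(Q(x))$ and verify two properties. (i) $w(0^+)=w(1^-)=0$, because $Q(x)\to\mp\infty$ at the endpoints and $p_n\varphi\to0$ at $\pm\infty$ (Gaussian decay beats the polynomial). (ii) On $(0,1)$, the chain rule with $t=Q(x)$, $dt/dx=Q'(x)=1/\varphi(t)$, together with the two Hermite identities, gives $w'(x)=-\tfrac{1}{n+1}p_{n+1}(Q(x))$ and then $w''(x)=-\tfrac{1}{n+1}p_{n+1}'(Q(x))Q'(x)=-p_n(Q(x))Q'(x)$; that is, $-w''=Q'f$ on $(0,1)$.

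Then I would invoke the defining property of $g$: since $\psi:=Q'f$ is continuous on $(0,1)$ and $g(x,\cdot)\psi$ is integrable on $[0,1]$, the function $u(x):=\int_0^1 g(x,y)\psi(y)\,dy=(1-x)\int_0^x y\psi(y)\,dy+x\int_x^1(1-y)\psi(y)\,dy$ is $C^2$ on $(0,1)$ with $-u''=\psi$, and $u(0^+)=u(1^-)=0$ by inspection of this formula. The boundary value problem $-u''=\psi$ on $(0,1)$ with $u(0^+)=u(1^-)=0$ has a unique solution (the difference of two solutions is affine and vanishes at both endpoints), so $u=w$. Finally, using $Q'(x)\varphi(Q(x))=1$,
\[
Lf(x)=Q'(x)\int_0^1 g(x,y)\,Q'(y)f(y)\,dy=Q'(x)\,u(x)=Q'(x)\,w(x)=\tfrac{1}{n+1}\,p_n(Q(x))\,\bigl(Q'(x)\varphi(Q(x))\bigr)=\tfrac{1}{n+1}\,f(x),
\]
which is the claim. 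The main obstacle is the endpoint bookkeeping: one must confirm that $L$ genuinely acts on $f=p_n\circ Q$ and that the Green's-function representation keeps its boundary behavior even though $Q'f$ is not globally integrable — both handled by the linear vanishing of $g(x,y)$ at $y\in\{0,1\}$.
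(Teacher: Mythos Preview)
Your proof is correct. The two Hermite identities you isolate are exactly the content of the paper's Lemma~\ref{lem.derivatives go down and up}: part~(a) is your $p_{n+1}'=(n+1)p_n$, and part~(b) is the integrated form of your $(p_n\varphi)'=-p_{n+1}\varphi$ after composing with $Q$ and using $Q'\cdot(\varphi\circ Q)=1$. So the algebraic core is the same.

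The packaging differs. The paper proceeds by direct computation: it writes $L(p_n\circ Q)(x)=(1-x)Q'(x)I_1+xQ'(x)I_2$, integrates each $I_j$ by parts using Lemma~\ref{lem.derivatives go down and up}(a) to find an antiderivative and Lemma~\ref{lem.derivatives go down and up}(b) to evaluate the remaining integral, and watches most terms cancel. You instead recognize $K(x,y)=g(x,y)Q'(x)Q'(y)$ with $g$ the Dirichlet Green's function for $-d^2/dx^2$, build the explicit solution $w=\tfrac{1}{n+1}(p_n\varphi)\circ Q$ of $-w''=Q'f$, and invoke uniqueness for the two-point BVP to identify $w$ with the Green's-function integral. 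Your route explains \emph{why} the kernel has this eigenstructure (it is a weighted Green's operator, so eigenfunctions come from a Sturm--Liouville problem), at the cost of the endpoint bookkeeping you flag; the paper's route is a shorter self-contained calculation that hides this structure. Either way the same two Hermite facts do all the work.
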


As such, instead of max filtering with independent gaussian templates, one can efficiently capture the same information by taking the inner product between $\operatorname{sort}(x)$ and discretized versions of the eigenfunctions $p_n\circ Q$.
To reduce the dimensionality, one can simply use fewer eigenfunctions.
To prove Theorem~\ref{thm.eigenfunctions}, we will use the following lemma; here, $\varphi\colon\mathbb{R}\to\mathbb{R}$ denotes the probability density function of the standard normal distribution.

\begin{lemma}\
\label{lem.derivatives go down and up}
\begin{itemize}
\item[(a)]
$p_n(x)=\frac{1}{n+1}p_{n+1}'(x)$.
\item[(b)]
$\int_0^x p_{n+1}(Q(y))dy=-\varphi(Q(x))p_n(Q(x))$.
\end{itemize}
\end{lemma}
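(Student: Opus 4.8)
The plan is to derive both parts from two elementary identities. The first is the differential identity
\[
\tfrac{d}{dx}\big(\varphi(x)p_n(x)\big)
=\varphi(x)\big(p_n'(x)-xp_n(x)\big)
=-\varphi(x)\big(xp_n(x)-p_n'(x)\big)
=-\varphi(x)p_{n+1}(x),
\]
which is immediate from the recurrence \eqref{eq.hermite recurrence} together with $\varphi'(x)=-x\varphi(x)$. Iterating it from $\varphi(x)p_0(x)=\varphi(x)$ yields, by induction on $n$, the Rodrigues-type formula $\varphi(x)p_n(x)=(-1)^n\varphi^{(n)}(x)$. These two facts will do all the work.

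For part (a), I would combine the Rodrigues formula with the Leibniz rule. Since $\varphi^{(n+1)}=(\varphi')^{(n)}=(-x\varphi)^{(n)}$ and $-x$ is linear, only the $k=0$ and $k=1$ Leibniz terms survive, giving $\varphi^{(n+1)}=-x\varphi^{(n)}-n\varphi^{(n-1)}$. Substituting $\varphi^{(j)}=(-1)^j\varphi\,p_j$ and cancelling $(-1)^{n+1}\varphi$ produces the three-term recurrence $p_{n+1}(x)=xp_n(x)-np_{n-1}(x)$. Comparing this with the defining recurrence $p_{n+1}(x)=xp_n(x)-p_n'(x)$ gives $p_n'(x)=np_{n-1}(x)$, which is (a) after reindexing. (As an alternative, (a) can be proved directly by induction on $n$: differentiate the recurrence and apply the inductive hypothesis for $p_n'$ and $p_{n-1}'$.)

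For part (b), recall $Q=\Phi^{-1}$ where $\Phi$ is the standard normal CDF, so differentiating $\Phi(Q(x))=x$ gives $\varphi(Q(x))Q'(x)=1$. By the chain rule and the differential identity above,
\[
\tfrac{d}{dx}\Big(-\varphi(Q(x))p_n(Q(x))\Big)
=-\Big(\tfrac{d}{du}\big(\varphi(u)p_n(u)\big)\Big)\Big|_{u=Q(x)}\cdot Q'(x)
=\varphi(Q(x))p_{n+1}(Q(x))\cdot\tfrac{1}{\varphi(Q(x))}
=p_{n+1}(Q(x)),
\]
so $x\mapsto-\varphi(Q(x))p_n(Q(x))$ is an antiderivative of the integrand on $(0,1)$; moreover $p_{n+1}\circ Q$ is integrable near $0$ since $\int|p_{n+1}(u)|\varphi(u)\,du<\infty$. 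This antiderivative extends continuously to $x=0$ with value $0$, because $Q(x)\to-\infty$ as $x\to0^+$ while $\varphi$ decays superpolynomially and $p_n$ grows only polynomially. The fundamental theorem of calculus then gives
\[
\int_0^x p_{n+1}(Q(y))\,dy
=\Big[-\varphi(Q(y))p_n(Q(y))\Big]_{y=0}^{y=x}
=-\varphi(Q(x))p_n(Q(x)),
\]
which is (b).

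The main obstacle is purely bookkeeping: tracking the signs and indices through the Leibniz-rule substitution for (a), and verifying that the boundary term at $0$ genuinely vanishes in (b). Once the identity $\tfrac{d}{dx}(\varphi p_n)=-\varphi p_{n+1}$ is in hand, there is no conceptual difficulty remaining.
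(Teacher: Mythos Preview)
Your proof is correct. The paper does not give a detailed argument here; it simply states that the lemma ``follows quickly from the recurrence~\eqref{eq.hermite recurrence},'' and your write-up is a valid fleshing-out of that remark. Your key identity $\tfrac{d}{dx}(\varphi p_n)=-\varphi p_{n+1}$ is exactly the consequence of the recurrence that drives both parts, and your treatment of the boundary term at $0$ in part~(b) is the right thing to check.

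One small comment on economy: for part~(a), the detour through the Rodrigues formula and the Leibniz rule is heavier than necessary. The direct induction you mention parenthetically is what the paper's phrase ``follows quickly from the recurrence'' most naturally points to: from $p_{n+1}=xp_n-p_n'$ one differentiates to get $p_{n+1}'=p_n+xp_n'-p_n''$, and the inductive hypothesis $p_n'=np_{n-1}$ turns the last two terms into $n(xp_{n-1}-p_{n-1}')=np_n$, yielding $p_{n+1}'=(n+1)p_n$. Your Rodrigues-formula route is of course equivalent and has the virtue of making the connection to $\varphi$ explicit, which you then reuse in~(b); either way the argument is complete.
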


The proof of Lemma~\ref{lem.derivatives go down and up} follows quickly from the recurrence~\eqref{eq.hermite recurrence}.

\begin{proof}[Proof of Theorem~\ref{thm.eigenfunctions}]
We compute $L(p_n\circ Q)$ by splitting the integral:
\begin{align*}
L(p_n\circ Q)(x)
&= (1-x)Q'(x)\underbrace{\int_0^x yQ'(y)p_n(Q(y))dy}_{I_1}
+xQ'(x)\underbrace{\int_x^1 (1-y)Q'(y)p_n(Q(y))dy}_{I_2}.
\end{align*}
For both $I_1$ and $I_2$, we integrate by parts with
\[
dv
=Q'(y)p_n(Q(y))dy
=Q'(y)\cdot \tfrac{1}{n+1}p_{n+1}'(Q(y))\cdot dy
=\tfrac{1}{n+1}(p_{n+1}\circ Q)'(y) dy,
\]
where the middle step follows from Lemma~\ref{lem.derivatives go down and up}(a).
Then Lemma~\ref{lem.derivatives go down and up}(b) gives
\begin{align*}
I_1
&=yv\Big|_0^x-\int_0^x vdy
=\tfrac{1}{n+1}\Big(xp_{n+1}(Q(x))+\varphi(Q(x))p_n(Q(x))\Big),\\
I_2
&=(1-y)v\Big|_x^1+\int_x^1 vdy
=\tfrac{1}{n+1}\Big(-(1-x)p_{n+1}(Q(x))+\varphi(Q(x))p_n(Q(x))\Big).
\end{align*}
These combine (and mostly cancel) to give
\[
L(p_n\circ Q)(x)
=\tfrac{1}{n+1}Q'(x)p_n(Q(x))\varphi(Q(x))
=\tfrac{1}{n+1}(p_n\circ Q)(x).
\qedhere
\]
\end{proof}

\section{Numerical examples}
\label{sec.numerics}

In this section, we use max filters as feature maps for various real-world learning tasks.

\begin{example}[Voting districts]
\label{ex.districts}
The \textit{one person, one vote} principle insists that in each state, different voting districts must have nearly the same number of constituents.
This principle is enforced with the help of a decennial redistricting process based on U.S.\ Census data.
Interestingly, each state assembly applies its own process for redistricting, and partisan approaches can produce unwanted gerrymandering.
Historically, gerrymandering is detected by how contorted a district's shape looks; for example, the Washington Post article~\cite{Ingraham:14} uses a particular geometric score to identify the top $10$ most gerrymandered voting districts of the $113$th Congress.

As an alternative, we visualize the distribution of district shapes with the help of max filtering.
The shape files for all voting districts of the $116$th Congress are available in~\cite{Switzer:kaggle}.
We center each district at the origin and scale it to have unit perimeter, and then we sample the boundary at $n=50$ equally spaced points.
This results in a $2\times 50$ matrix representation of each district.
However, the same district shape may be represented by many matrices corresponding to an action of $\operatorname{O}(2)$ on the left and an action of $C_n$ that cyclically permutes columns.
Hence, it is appropriate to apply max filtering with $G\cong\operatorname{O}(2)\times C_n$.
It is convenient to identify $\mathbb{R}^{2\times 50}$ with $\mathbb{C}^{50}$ so that the corresponding max filter is given by
\begin{align*}
\llangle [z],[x]\rrangle
&=\max\Big\{~\max_{a\in C_n} |z^*T_ax|,~\max_{a\in C_n}| \overline{z}^*T_ax|~\Big\}\\
&=\max\Big\{~\max_{a\in C_n}|(R\overline{z}\star x)(a)|,~\max_{a\in C_n}|(Rz\star x)(a)|~\Big\},
\end{align*}
which can be computed efficiently with the help of the fast Fourier transform.
We max filter with $100$ random templates to embed these districts in the feature space $\mathbb{R}^{100}$, and then we visualize the result using PCA; see Figure~\ref{fig.districts}.

\begin{figure}
\begin{center}
\includegraphics[width=\textwidth]{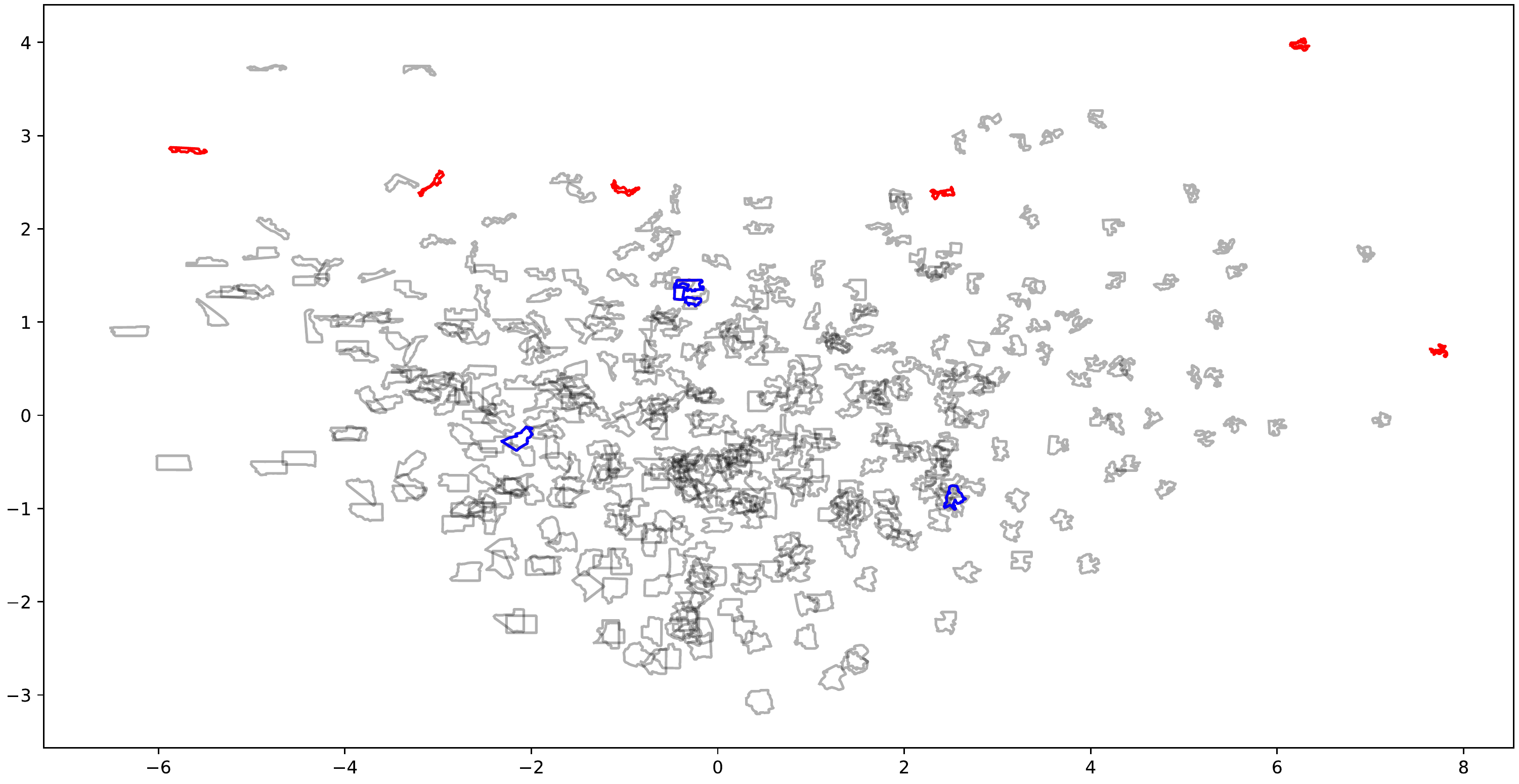}
\end{center}
\caption{\label{fig.districts}
Visualization of voting districts of the $116$th Congress obtained by max filtering and principal component analysis.
See Example~\ref{ex.districts} for details.
}
\end{figure}

Interestingly, the principal components of this feature domain appear to be interpretable.
The first principal component (given by the horizontal axis) seems to capture the extent to which the district is convex, while the second principal component seems to capture the eccentricity of the district.
Six of the ten most gerrymandered districts from~\cite{Ingraham:14} are drawn in red, which appear at relatively extreme points in this feature space.
The four remaining districts (NC-1, NC-4, NC-12, and PA-7) were redrawn by court order between the $113$th and $116$th Congresses; the new versions of these districts are drawn in blue.
Unsurprisingly, the redrawn districts are not contorted, and they appear at not-so-extreme points in the feature space.
\end{example}

\begin{example}[ECG time series]
\label{ex.ecg}
An electrocardiogram (ECG) uses electrodes placed on the skin to quantify the heart's electrical activity over time.
With ECG data, a physician can determine whether a patient has had a heart attack with about 70\% accuracy~\cite{MakimotoEtal:20}.
Recently, Makimoto et al.~\cite{MakimotoEtal:20} trained a $6$-layer convolutional neural network to perform this task with about 80\% accuracy.
The dataset they used is available at~\cite{BousseljotKS:95}, which consists of $(12+3)$-lead ECG data sampled at $1$~kHz from $148$ patients who recently had a heart attack (i.e., \textit{myocardial infarction} or \textit{MI}) and from $141$ patients who did not.

As an alternative to convolutional neural networks, we use max filters to classify patients based on \textit{one second} of ECG data (i.e., roughly one heartbeat).
In particular, we read the first $t=1000$ samples of all $15$ leads to obtain a $15\times t$ matrix $X$.
We then lift this matrix to a $15\times w\times (t-w+1)$ tensor with $w=30$, where each $15\times w$ slice corresponds to a contiguous $15\times w$ submatrix of $X$.
Then we normalize each $1\times w\times 1$ subvector to have mean zero to get the tensor $Y$; this discards any trend in the data.
We account for time invariance by taking $G$ to be the order-$(t-w+1)$ group of circular permutations of the $15\times w$ slices of $Y$.
Using this group, we max filter with $n=5$ different templates and then classify with a support-vector machine. 
We constrain each template to be supported on a single slice, and we train these templates together with the support-vector machine classifier by gradient descent to minimize hinge loss.

\begin{figure}
\begin{center}
\includegraphics[width=\textwidth]{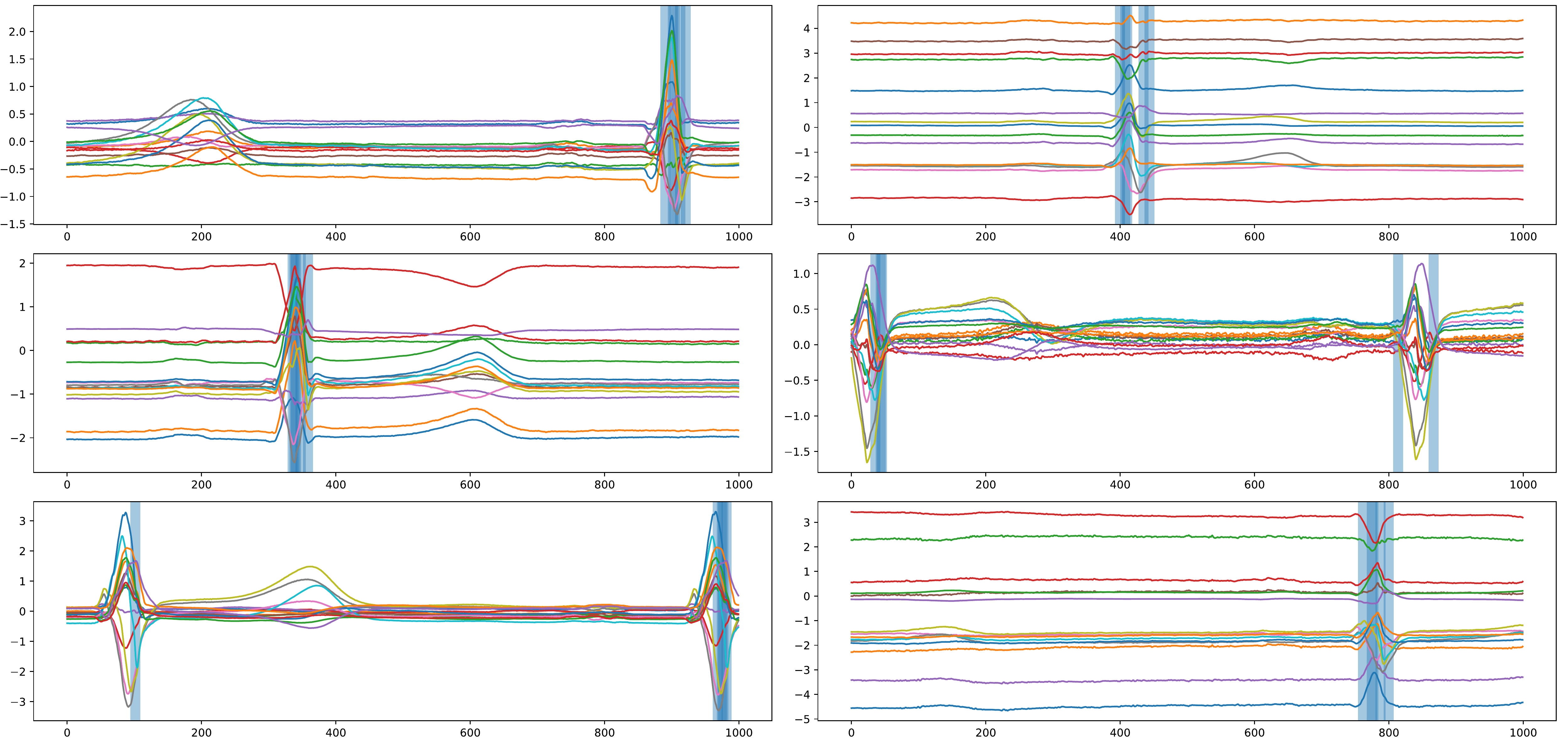}
\end{center}
\caption{\label{fig.ecg}
We train max filter templates and a support-vector machine classifier on electrocardiogram data to distinguish between patients who have had a heart attack from those who have not.
Above, we plot the most extreme examples in the test set (those with heart attacks on the left, and those without on the right).
The blue windows illustrate the time segments of width $w=30$ that align best with the templates.
See Example~\ref{ex.ecg} for details.
}
\end{figure}

Following~\cite{MakimotoEtal:20}, we form the test set by randomly drawing $25$ examples from the MI class and $25$ examples from the non-MI class, and then we form the training set by randomly drawing $108$ non-test set examples from each class.
We perform $10$ trials of this experiment, and each achieve between $74$\% and $84$\% accuracy on the test set.
In particular, this is competitive with the $6$-layer convolutional neural network in~\cite{MakimotoEtal:20} despite accessing only a fraction of the data.
In Figure~\ref{fig.ecg}, we illustrate what the classifier considers to be the most extreme examples in the test set.
The time segments that align best with the trained templates typically cover the heartbeat portion of the ECG signal.
\end{example}

\begin{example}[Textures]
\label{ex.textures}
In this example, we use max filtering to classify various textures, specifically, those given in the Kylberg Texture Dataset v.\ 1.0, available in~\cite{Kylberg:online}.
This dataset consists of 28 classes of textures, such as blanket, grass, and oatmeal.
Each class consists of 160 distinct grayscale images.
(We crop these $576\times 576$ images down to $256\times 256$ for convenience.)
A few classifiers in the literature have been trained on this dataset~\cite{KylbergS:13,AndrearczykW:16}, but in order to achieve a high level of performance, they leverage domain-specific feature maps, augment the training set with other image sets, or simply initialize with a pre-trained network.
As an alternative, we apply max filtering with random features, and we succeed with much smaller training sets.
For an honest comparison, we train different classifiers on training sets of different sizes, and the results are displayed in Table~\ref{table.accuracies}.
(Here, the test set consists of $32$ points from each class.)
We describe each classifier in what follows.

\begin{table}
\caption{\label{table.accuracies}
Accuracy on test set versus size of training set (baseline = 0.035, see Example~\ref{ex.textures})
}
\begin{center}
\begin{tabular}{|>{\centering\arraybackslash}p{0.17\textwidth}>{\centering\arraybackslash}p{0.17\textwidth}>{\centering\arraybackslash}p{0.17\textwidth}>{\centering\arraybackslash}p{0.17\textwidth}>{\centering\arraybackslash}p{0.17\textwidth}|}\hline
pts per class & CNN & LDA & PCA-LDA & our method \\ \hline\hline
2 & 0.125 & 0.043 & 0.063 & 0.698 \\ \hline
4 & 0.114 & 0.103 & 0.060 & 0.947 \\ \hline
8 & 0.150 & 0.158 & 0.128 & 0.970 \\ \hline
16 & 0.214 & 0.125 & 0.128 & 0.970 \\ \hline
32 & 0.306 & 0.116 & 0.120 & 0.968 \\ \hline
64 & 0.534 & 0.117 & 0.122 & 0.964 \\ \hline
128 & 0.637 & (OOM) & 0.128 & 0.962 \\ \hline
\end{tabular}
\end{center}
\end{table}

First, we consider a convolutional neural network (CNN) with a standard architecture.
We pass the $256\times 256$ image through a convolutional layer with a $3\times 3$ kernel before max pooling over $2\times 2$ patches.
Then we pass the result through another convolutional layer with a $3\times 3$ kernel, and again max pool over $2\times 2$ patches.
We end with a dense layer that maps to $28$ dimensions, one for each class.
We train by minimizing cross entropy loss against the one-hot vector representation of each label.
This model has $1.7$ million parameters, which enables us to achieve perfect accuracy on the training set.

As an alternative, we apply linear discriminant analysis (LDA).
Training this classifier involves a massive matrix operation, and this requires too much memory in our setting when the training set has $128$ points per class.
This motivates the use of principal component analysis to reduce the dimensionality of each $256\times 256$ image to $k=25$ principal components.
The resulting classifier (PCA-LDA) works for larger training sets, and even exhibits improved accuracy when the training set is not too small.

For our method, we apply the same PCA-LDA method to a max filtering feature domain.
Surprisingly, we find that modding out by the entire group of pixel permutations (i.e., just sorting the pixel values) performs reasonably well as a feature map.
Our method implements a multiscale version of this observation.
For each $\ell\in\{0,1,\ldots,8\}$, one may partition the $2^8\times 2^8$ image into $4^{8-\ell}$ square patches of width $2^\ell$.
We perform max filtering with the group $(S_{4^\ell})^{4^{8-\ell}}$ of all patch-preserving pixel permutations for each $\ell\in\{2,\ldots,8\}$.
For simplicity, we are inclined to draw templates at random, but for computational efficiency, we simulate random templates with the help of Theorem~\ref{thm.eigenfunctions}.
In particular, for each $\ell\in\{2,\ldots,8\}$ and $n\in\{0,\ldots,5\}$, we sort the pixels in each $2^\ell\times 2^\ell$ patch and take their inner products with a discretized version of $p_n\circ Q$ and sum over the patches.
This maps each $2^8\times 2^8$ image to a $42$-dimensional feature vector.
As before, we apply PCA with $k=25$ and then train an LDA classifier on the result.
In the end, this max filtering approach significantly outperforms the other classifiers we considered, as seen in Table~\ref{table.accuracies}.
\end{example}

\section{Discussion}
\label{sec.discussion}

Max filtering offers a rich source of invariants that can be used for a variety of machine learning tasks.
In this section, we discuss several opportunities for follow-on work.

\textbf{Separating.}
In terms of sample complexity, Corollary~\ref{cor.2d templates suffice for finite groups} gives that a generic max filter bank of size $2d$ separates all $G$-orbits in $\mathbb{R}^d$ provided $G\leq\operatorname{O}(d)$ is finite.
If $G\leq\operatorname{O}(d)$ is not topologically closed, then no continuous invariant (such as a max filter bank) can separate all $G$-orbits.
If $G\leq\operatorname{O}(d)$ is topologically closed, then by Theorem~3.4.5 in~\cite{OnishchikV:90}, $G$ is algebraic, and so Theorem~\ref{thm.dym-gortler} applies.
We suspect that max filtering separates orbits in such cases, but progress on this front will likely factor through Problem~\ref{prob.semialgebraic}(a).
For computational complexity, how well does the max filtering separation hierarchy~\eqref{eq.hierarchy} separate isomorphism classes of weighted graphs?
Judging by~\cite{AlonYZ:95}, we suspect that max filtering with a template of order $k$ and treewidth $t$ can be computed with runtime $e^{O(k)}n^{t+1}\log n$, which is polynomial in $n$ when $k$ is logarithmic and $t$ is bounded.
Which classes of graphs are separated by such max filters?
Also, can max filtering be used to solve real-world problems involving graph data, or is the exponent too large to be practical?

\textbf{Bilipschitz.}
The proof of Lemma~\ref{lem.bilipschitz no randomness} contains our approach to finding Lipschitz bounds on max filter banks.
Our upper Lipschitz bound follows immediately from the fact that each individual max filter is Lipschitz, and it does not require the group to be finite.
This bound is optimal for $G=\operatorname{O}(d)$, but it is known to be loose for small groups like $G=\{\pm\operatorname{id}\}$.
We suspect that our lower Lipschitz bound leaves a lot more room for improvement.
In particular, we use the pigeonhole principle to pass from a sum to a maximum so that we can leverage projective uniformity.
A different approach might lead to a tighter bound that does not require $G$ to be finite.
More abstractly, we suspect that separating implies bilipschitz, though the bounds might be arbitrarily bad; see Problem~\ref{prob.sep implies bilip}.

\textbf{Randomness.}
Our separating and bilipschitz results (Corollary~\ref{cor.2d templates suffice for finite groups} and Theorem~\ref{thm.main result}) are not given in terms of explicit templates.
Meanwhile, our Mallat-type stability result (Theorem~\ref{thm.mallat bound}) requires a localized template, and we can interpret the templates used in our weighted graph separation hierarchy~\eqref{eq.hierarchy} as being localized, too.
We expect that there are other types of structured templates that would reduce the computational complexity of max filtering (much like the structured measurements studied in compressed sensing~\cite{RudelsonV:08,PfanderRT:13,KrahmerMR:14} and phase retrieval~\cite{BandeiraCM:14,EldarSMBC:14,BodmanH:15,GrossKK:17}).
It would be interesting to prove separating or bilipschitz results in such settings.
More generally, can one construct explicit templates for which max filtering is separating or bilipschitz for a given group?
Going the other direction, the reader may have noticed that the plots in Figure~\ref{fig.sorted gaussians} deviate from each other at the edges of the interval.
We expect that such deviations decay as the dimension grows, but this requires further analysis.
How can one analyze the behavior of random max filters for other groups?

\textbf{Max filtering networks.}
Example~\ref{ex.textures} opens the door to a variety of innovations with max filtering.
Here, instead of fixing a single group to mod out by, we applied max filtering with a family of different groups.
We selected permutations over patches due to the computational simplicity of using Theorem~\ref{thm.eigenfunctions}, and we arranged the patches in a hierarchical structure so as to capture behavior at different scales.
Is there any theory to explain the performance of this architecture?
Are there other useful architectures in this vicinity, perhaps by combining with neural networks?

\section*{Acknowledgments}

This work was initiated at the SOFT 2021:\ Summer of Frame Theory virtual workshop.
The authors thank Boris Alexeev for helpful discussions and Soledad Villar for bringing the article~\cite{DymG:22} to their attention.

\end{document}